\newtheorem{thm}{Theorem}[section] 
\newtheorem{lem}[thm]{Lemma}
\newtheorem{cor}[thm]{Corollary}
\newtheorem{rem}[thm]{Remark}
\begin{document}

\title{Optimal Downlink Power Allocation in Cellular Networks}
\author{Ahmed~Abdelhadi,~\IEEEmembership{}
        Awais~Khawar,~\IEEEmembership{}
        and~T. Charles~Clancy~\IEEEmembership{}

\thanks{Ahmed Abdelhadi (aabdelhadi@vt.edu) is with Virginia Polytechnic Institute and State University, Arlington, VA, 22203. 

Awais Khawar (awais@vt.edu) is with Virginia Polytechnic Institute and State University, Arlington, VA, 22203. 

T. Charles Clancy (tcc@vt.edu) is with Virginia Polytechnic Institute and State University, Arlington, VA, 22203. }
}
\maketitle

\begin{abstract}

In this paper, we introduce a novel approach for power allocation in cellular networks. In our model, we use sigmoidal-like utility functions to represent different users' modulation schemes. Each utility function is a representation of the probability of successfully transmitted packets per unit of power consumed by a user, when using a certain modulation scheme. We consider power allocation with utility proportional fairness policy, where the fairness among users is in utility percentage i.e. percentage of successfully transmitted packets of the corresponding modulation scheme. We formulate our network utility maximization problem as a product of utilities of all users and prove that our power allocation optimization problem is convex and therefore the optimal solution is tractable. We present a distributed algorithm to allocate base station (BS) powers optimally with priority given to users running lower modulation schemes while ensuring non-zero power allocation to users running higher modulation schemes.
 Our algorithm prevents fluctuation in the power allocation process and is capable of traffic and modulation dependent pricing i.e. charges different price per unit power from different users depending in part on their modulation scheme and total power available at the BS. This is used to flatten traffic and decrease the service price for users.

\end{abstract}
%

\providelength{\AxesLineWidth}       \setlength{\AxesLineWidth}{0.5pt}%
\providelength{\plotwidth}           \setlength{\plotwidth}{8cm}
\providelength{\LineWidth}           \setlength{\LineWidth}{0.7pt}%
\providelength{\MarkerSize}          \setlength{\MarkerSize}{3pt}%
\newrgbcolor{GridColor}{0.8 0.8 0.8}%
\newrgbcolor{GridColor2}{0.5 0.5 0.5}%

\section{Introduction}\label{sec:intro}

In the past, cellular networks were dominated by voice-only traffic. However, in recent years, there has been a significant growth in the amount of multimedia-rich traffic over cellular networks. In order to support such traffic, networks require higher data rates which can be achieved by using higher modulation schemes. This is the reason current and emerging cellular standards are supporting various higher modulation schemes. For example, Long term evolution (LTE), the fourth-generation (4G) wireless standard specified by 3$^\text{rd}$ Generation Partnership Project (3GPP), supports higher modulation schemes such as QPSK, 16-QAM, and 64-QAM according to 3GPP Release 10 or more commonly known as LTE-Advanced (LTE-A). The next evolution of LTE -- LTE Release 12 and beyond -- also refereed to as LTE-B supports higher modulation schemes upto 256-QAM \cite{EricsonLTERel12, Huawei_LTE_B}. Higher modulation schemes require higher transmit power to achieve a certain signal-to-noise ratio (SNR) which can guarantee 
minimum successful transmission probability of packets. This motivates numerous research efforts to optimally allocate power for users seeking better quality-of-service (QoS), where QoS can be the minimum successful transmission probability of packets. In a cellular network, different users can run different services, requiring different modulation schemes, and thus can have different power and QoS requirements. 

One aspect of improving power allocation and achieving better QoS is to use network utility maximization framework. This framework was first explored by the seminal works of Kelly et al. \cite{kelly98powercontrol} and Low et al. \cite{Low99optimizationflow} for wired networks, such as the Internet, and later by Marbach et al. \cite{MB02} and Lee et al. \cite{DL_PowerAllocation} for wireless networks, such as Code Division Multiple Access (CDMA)-based systems. The utility function can be considered as a controlling parameter through which a user's QoS can be guaranteed. In this treatment, each user's utility function is a function of its power allocation and the goal is to allocate powers in order to maximize network utility, which is defined as a product of all users' utilities. The traditional approaches to model network utility maximization problem is by the summation of users' utility functions \cite{MB02}\cite{DL_PowerAllocation}. However, such a formulation can drop users in order to maximize system 
performance and thus all users are not treated fairly, as an example consult \cite{DL_PowerAllocation}. In this paper, we deviate from this trend and introduce a novel network utility maximization framework which is a product of all users' utility functions. The motivation behind this approach is to ensure that all users are entertained and no user is dropped in order to maximize network utility and at the same time maintain minimum QoS for all users.

A utility function is a representation of QoS of a user. For example, utility functions have been defined to maximize signal-to-interference-plus-noise ratio (SINR) \cite{DL_PowerAllocation}, Shannon capacity \cite{MPS05}, ratio of throughput to transmit power \cite{GM99} etc. Thus, the type of utility function represents each user's QoS characteristics and it is possible that in a network supporting different services we have to deal with various types of utility functions. The shape of a utility function depends upon the characteristics of service it is representing and thus can have many shapes including concave, convex, sigmoidal-like or S-shaped, and inverse-S-shaped \cite{LK08}. The optimality of the solution of network utility maximization problem depends upon the shape of a utility function. For example, services represented by concave utility functions satisfy the convexity conditions of network utility maximization problem, when it is represented by sum of users' utility functions, and thus yields 
optimal solution. This is not the case for non-concave utility functions. Moreover, algorithms proposed for downlink power allocation are utility function specific and can't be generalized to other utility functions due to the various shapes a utility function can take. Thus, an algorithm can be optimal for one class of utility functions and at the same time sub-optimal for other classes of utility functions. In this framework, not only our power allocation and pricing algorithms are optimal but the solution to our network utility maximization problem is also optimal.

In this paper we consider sigmoidal-like, or S-shaped, utility functions, which are first convex and then concave, for network utility maximization problem because our QoS criterion is probability of successful packet transmission and its cumulative distribution function is also S-shaped i.e. first convex and then concave. Thus, sigmoidal-like utility function is a natural choice to represent probability of successful packet transmission which can be a function of SINR \cite{DL_PowerAllocation} or signal-to-interference ratio (SIR) \cite{XSC03}.

The wireless broadcast channel is random in nature and users experience independent and heterogeneous communication environment. In such a challenging propagation environment it is difficult to design resource allocation algorithms that maximize system efficiency, ensure fairness, and meet QoS requirements of all users. For example, in order to improve system efficiency, opportunistic  resource allocation algorithms are used \cite{VTL02}. These algorithms favor users with whom they share good channel quality or those typically closer to the BS, and tend to avoid users that are in deep-fade or the ones at cell edges. Despite of maximizing system efficiency such algorithms fail to (a) satisfy QoS requirements of users and (b) maintain fair allocation of resources among users. This can be avoided by fair allocation of resources among all users and thus utility proportional fairness problems are of more interest \cite{SPO+13}.


Pricing can be used as a control measure by network providers i.e. users can be motivated to make rational decisions on the usage of network resources in order to maximize network utility. It has been successfully applied to control congestion in the Internet \cite{TWL07} and rate and power in wireless networks \cite{SMG02, CL07}. In network utility maximization problems it is typical of network providers to set shadow prices for its resources. Similarly, we set shadow prices for powers with the goal of achieving optimization of individual users' QoS and the distributed optimization of the power allocation process in order to maximize network performance. Therefore, we charge price from the user depending upon the modulation scheme used and the total power available at the BS. So users using lower modulation schemes require less transmit power and thus pay lower prices as compared to users using higher modulation schemes. Also, the price per unit power which the users pay depends upon the total power 
available at the BS. So at the time of high network utilization, when the demand for power is high, powers will be scarce and thus users will pay more for the same amount of per unit power as compared to the time of low network utilization. Thus, pricing can be used by network providers to flatten traffic i.e. users can be motivated to use the network during off-peak hours as they pay less for the same service.

%




\subsection{Related Work}\label{sec:related}


In \cite{WC13}, the authors consider a proportional fairness scheduling problem in a network supporting different users' modulation schemes. However, their approach is limited since they propose a two-user opportunistic proportional fairness scheduling and utility maximization problem. Their algorithm for opportunistic proportional fairness scheduling is optimal but for utility maximization their proposed algorithm is suboptimal.

Lee et al. consider a downlink utility-based power allocation algorithm for Code Division Multiple Access (CDMA)-based systems in \cite{DL_PowerAllocation}. The algorithm maximizes the total system utility but can drop users to maximize the overall system utilization, therefore, it does not guarantee minimum QoS for all users. Moreover, they use an approximation to solve non-concave sigmoidal-like utility maximization problem which is Pareto optimal. An alternate approach to approximate concave and non-concave utility functions using minimum mean-square error was proposed by \cite{RebeccaThesis}. The approximated utility function is used to solve the power allocation problem using a modified version of the distributed power allocation algorithm presented in \cite{kelly98powercontrol}. The same problem is considered by Tychogiorgos et al. with a non-convex optimization formulation for maximization of utility function \cite{DBLP:conf/globecom/TychogiorgosGL11, DBLP:conf/pimrc/TychogiorgosGL11}. The proposed 
algorithm solves the problem 
when the duality gap is zero but doesn't converge to the optimal solution for a positive duality gap.

evolved NodeB (eNodeB) can also be used to allocate powers fairly among users by taking into account user positions. This utility proportional fairness problem can achieve QoS targets both in terms of fairness and throughput \cite{Power_alloc3}. In addition, the authors have shown that their signal-to-interference-plus-noise ratio (SINR)-based fair power allocation algorithm, which starts from the cell-edge users and converges to the center, can achieve maximum sum rate and optimal fairness. On the other hand, an alternate approach to improve the throughput for the cell-edge user, an uplink power control and resource allocation scheme is considered in \cite{Power_alloc4}. The algorithm takes into account the interference to and from adjacent cells. It first allocates resources independently among the cells and then adjusts resources and powers for the center users based on resource allocation and users' position among adjacent cells. In \cite{DBLP:conf/qosip/Harks05}, the authors propose a utility max-min 
fairness power allocation for users with elastic and 
real-time traffic sharing a single path in the network. In \cite{UtilityFairness}, the authors proposed a utility proportional fair optimization formulation for high-SINR wireless networks using a utility max-min architecture. They compare their algorithm to the traditional bandwidth proportional fair algorithms \cite{utility_fair} and present a closed form solution that prevents oscillations in the network.

In \cite{Power_Rate_alloc6, Power_Rate_alloc9}, the authors consider downlink power allocation for semi-elastic applications, for example video conferencing, in 4G cellular networks using a sigmoidal-like utility function. They consider a utility maximization problem over multiple time slots since their algorithm allocates powers and subcarriers in each time slot in order to optimize average user utility over time by an exchange of price and demand among users, the network, and an intermediate power allocation module. A similar network utility maximization problem for video streaming over cellular networks, for both uplink and downlink, is considered in \cite{alloc_video_streaming}. The authors consider a joint optimization of video and network resources to maximize total video reception quality of a limited number of users without interrupting the service of other voice users. 
the convergence of their algorithms.  



In \cite{PricingSIGCOMM12}, the authors conduct a trial of time-dependent pricing (TDP) system called TUBE for iPhone or iPad users, using 3G cellular services, who are charged according to the proposed TDP algorithm. Our results show that TDP benefits both operators and customers, flattening the temporal fluctuation of demand while allowing users to save money by choosing the time and volume of their usage.

For earlier studies on optimal rate allocation, we refer to \cite{Ahmed_Utility1}. Rate pricing was introduced in \cite{Ahmed_Utility2} for single cell model. Rate allocation with carrier aggregation is studied in \cite{Ahmed_Utility4, Haya_Utility1}. Rate allocation with prioritization of mobile users based on their subscription is investigated in \cite{Ahmed_Utility3}. Rate allocation with guaranteed bit rate (GBR) to mobile users running specific services is discussed in \cite{Haya_Utility2}.

\subsection{Our Contributions}\label{sec:contributions}
Our contributions in this paper are summarized as:
\begin{itemize}

\item \textbf{Utility Proportional Fairness:} We introduce a utility proportional fairness optimization problem, where the fairness among users is in utility percentage. Utility percentage is the percentage of packet successful transmission versus power. Each modulation is represented by a utility function that is a sigmoidal-like function. We prove that the proposed optimization problem is convex and therefore the global optimal solution is tractable. In addition, the optimization problem formulation gives priority to lower modulation users when allocating powers while ensuring non-zero power allocation to higher modulation users. 

\item \textbf{Robust and Convergent Power Allocation Algorithm:} We present a robust distributed power allocation algorithm that converges to the optimal powers for users with lower and higher modulation schemes by introducing a fluctuation decay function that damps the fluctuations. Our algorithm senses fluctuations and tunes the fluctuation decay function to reach convergence.

\item \textbf{Traffic-dependent Bidding/Pricing:} We present a pricing policy for network providers that can reduce the demand in power i.e. flatten traffic load on the network and decrease the overall service cost to subscribers. The pricing policy is dependent on users' modulation schemes and the total available power at the BS. Thus, network providers can charge higher for the same service when the demand for power is high and vice versa. This serves as an incentive to subscribers to use the service when the demand is low as they pay less for the same service.
\end{itemize}


The remainder of this paper is organized as follows. Section \ref{sec:upf} motivates the use of utility proportional fairness over other methods, introduces system topology, and discusses representation of modulations by sigmoidal-like utility functions. Section \ref{sec:Problem_formulation} presents the problem formulation. Section \ref{sec:Proof} proves the global optimal solution exists and is tractable. In Section \ref{sec:Dual}, we introduce the dual problem. In Section \ref{sec:Algorithm}, we present our distributed optimization algorithm. Section \ref{sec:conv_analy} analyzes the power allocation algorithm and discusses its convergence. In Section \ref{sec:OuP_Algorithm}, we present a more robust distributed power allocation algorithm for the utility proportional fairness optimization problem. Section \ref{sec:sim} discusses simulation setup and provides quantitative results along with the discussion. Section \ref{sec:conclude} concludes the paper.

%

\begin{figure}
\centering
\includegraphics[trim=0in 0in 0in 0in,width=3.4in]{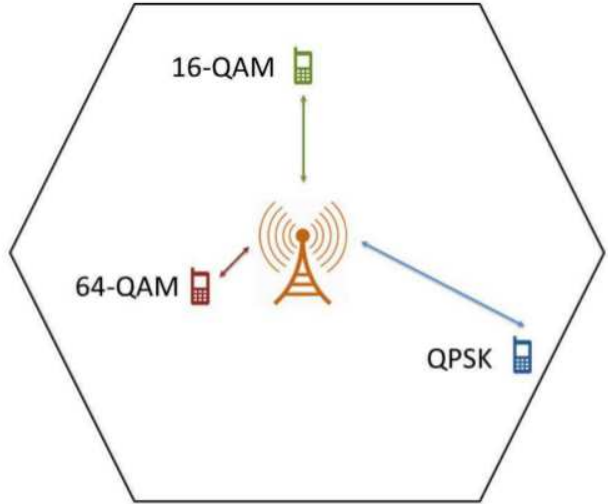}
\caption{Adaptive modulation schemes are used to enhance high data-rate access to users in 3G and 4G cellular networks. When the signal-to-noise (SNR) ratio is the highest, a higher modulation scheme is used, for example 256-QAM. This usually happens for users near the BS as they share good channel quality with the BS. Higher modulation schemes require higher transmit power to maintain an acceptable SNR for successful transmission of packets, as seen from Fig. \ref{fig:prob}.} 
\label{fig:AdapMod}
\end{figure}

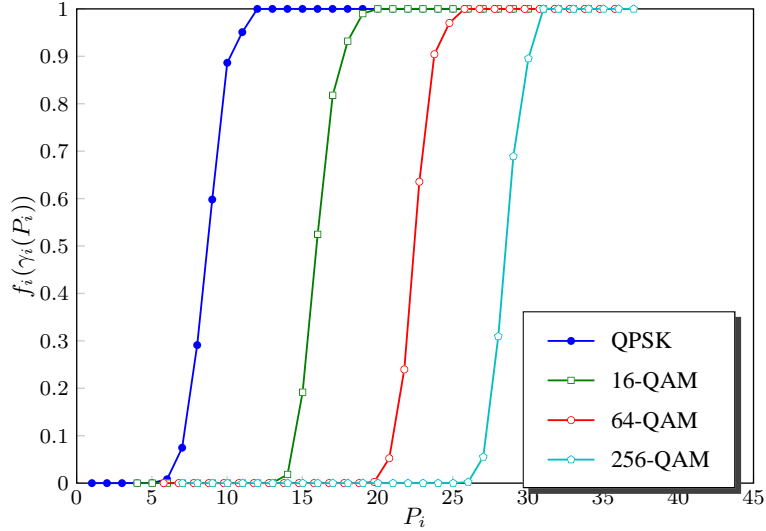
\begin{figure}[!t]
\centering
%
\psset{xunit=0.025\plotwidth,yunit=0.788710\plotwidth}%
\begin{pspicture}(-4.976959,-0.111111)(45.829493,1.023392)%


\psline[linewidth=\AxesLineWidth,linecolor=GridColor](0.000000,0.000000)(0.000000,0.015215)
\psline[linewidth=\AxesLineWidth,linecolor=GridColor](5.000000,0.000000)(5.000000,0.015215)
\psline[linewidth=\AxesLineWidth,linecolor=GridColor](10.000000,0.000000)(10.000000,0.015215)
\psline[linewidth=\AxesLineWidth,linecolor=GridColor](15.000000,0.000000)(15.000000,0.015215)
\psline[linewidth=\AxesLineWidth,linecolor=GridColor](20.000000,0.000000)(20.000000,0.015215)
\psline[linewidth=\AxesLineWidth,linecolor=GridColor](25.000000,0.000000)(25.000000,0.015215)
\psline[linewidth=\AxesLineWidth,linecolor=GridColor](30.000000,0.000000)(30.000000,0.015215)
\psline[linewidth=\AxesLineWidth,linecolor=GridColor](35.000000,0.000000)(35.000000,0.015215)
\psline[linewidth=\AxesLineWidth,linecolor=GridColor](40.000000,0.000000)(40.000000,0.015215)
\psline[linewidth=\AxesLineWidth,linecolor=GridColor](45.000000,0.000000)(45.000000,0.015215)
\psline[linewidth=\AxesLineWidth,linecolor=GridColor](0.000000,0.000000)(0.540000,0.000000)
\psline[linewidth=\AxesLineWidth,linecolor=GridColor](0.000000,0.100000)(0.540000,0.100000)
\psline[linewidth=\AxesLineWidth,linecolor=GridColor](0.000000,0.200000)(0.540000,0.200000)
\psline[linewidth=\AxesLineWidth,linecolor=GridColor](0.000000,0.300000)(0.540000,0.300000)
\psline[linewidth=\AxesLineWidth,linecolor=GridColor](0.000000,0.400000)(0.540000,0.400000)
\psline[linewidth=\AxesLineWidth,linecolor=GridColor](0.000000,0.500000)(0.540000,0.500000)
\psline[linewidth=\AxesLineWidth,linecolor=GridColor](0.000000,0.600000)(0.540000,0.600000)
\psline[linewidth=\AxesLineWidth,linecolor=GridColor](0.000000,0.700000)(0.540000,0.700000)
\psline[linewidth=\AxesLineWidth,linecolor=GridColor](0.000000,0.800000)(0.540000,0.800000)
\psline[linewidth=\AxesLineWidth,linecolor=GridColor](0.000000,0.900000)(0.540000,0.900000)
\psline[linewidth=\AxesLineWidth,linecolor=GridColor](0.000000,1.000000)(0.540000,1.000000)

{ \footnotesize 
\rput[t](0.000000,-0.015215){$0$}
\rput[t](5.000000,-0.015215){$5$}
\rput[t](10.000000,-0.015215){$10$}
\rput[t](15.000000,-0.015215){$15$}
\rput[t](20.000000,-0.015215){$20$}
\rput[t](25.000000,-0.015215){$25$}
\rput[t](30.000000,-0.015215){$30$}
\rput[t](35.000000,-0.015215){$35$}
\rput[t](40.000000,-0.015215){$40$}
\rput[t](45.000000,-0.015215){$45$}
\rput[r](-0.540000,0.000000){$0$}
\rput[r](-0.540000,0.100000){$0.1$}
\rput[r](-0.540000,0.200000){$0.2$}
\rput[r](-0.540000,0.300000){$0.3$}
\rput[r](-0.540000,0.400000){$0.4$}
\rput[r](-0.540000,0.500000){$0.5$}
\rput[r](-0.540000,0.600000){$0.6$}
\rput[r](-0.540000,0.700000){$0.7$}
\rput[r](-0.540000,0.800000){$0.8$}
\rput[r](-0.540000,0.900000){$0.9$}
\rput[r](-0.540000,1.000000){$1$}
} 

\psframe[linewidth=\AxesLineWidth,dimen=middle](0.000000,0.000000)(45.000000,1.000000)

{ \small 
\rput[b](22.500000,-0.111111){
\begin{tabular}{c}
$P_i$\\
\end{tabular}
}

\rput[t]{90}(-4.976959,0.500000){
\begin{tabular}{c}
$f_i(\gamma_i(P_i))$\\
\end{tabular}
}
} 

\newrgbcolor{color14.0104}{0  0  1}
\psline[plotstyle=line,linejoin=1,showpoints=true,dotstyle=*,dotsize=\MarkerSize,linestyle=solid,linewidth=\LineWidth,linecolor=color14.0104]
(1.010300,0.000000)(2.010300,0.000000)(3.010300,0.000000)(4.010300,0.000013)(5.010300,0.000346)
(6.010300,0.008213)(7.010300,0.074633)(8.010300,0.291049)(9.010300,0.598035)(10.010300,0.886385)
(11.010300,0.950990)(12.010300,1.000000)(13.010300,1.000000)(14.010300,1.000000)(15.010300,1.000000)
(16.010300,1.000000)(17.010300,1.000000)(18.010300,1.000000)(19.010300,1.000000)(20.010300,1.000000)
(21.010300,1.000000)(22.010300,1.000000)(23.010300,1.000000)(24.010300,1.000000)(25.010300,1.000000)
(26.010300,1.000000)(27.010300,1.000000)(28.010300,1.000000)(29.010300,1.000000)(30.010300,1.000000)
(31.010300,1.000000)

\newrgbcolor{color15.0099}{0     0.49804           0}
\psline[plotstyle=line,linejoin=1,showpoints=true,dotstyle=Bsquare,dotsize=\MarkerSize,linestyle=solid,linewidth=\LineWidth,linecolor=color15.0099]
(4.020600,0.000000)(5.020600,0.000000)(6.020600,0.000000)(7.020600,0.000000)(8.020600,0.000000)
(9.020600,0.000000)(10.020600,0.000000)(11.020600,0.000000)(12.020600,0.000008)(13.020600,0.000747)
(14.020600,0.018305)(15.020600,0.191702)(16.020600,0.524576)(17.020600,0.817823)(18.020600,0.931970)
(19.020600,0.990000)(20.020600,1.000000)(21.020600,1.000000)(22.020600,1.000000)(23.020600,1.000000)
(24.020600,1.000000)(25.020600,1.000000)(26.020600,1.000000)(27.020600,1.000000)(28.020600,1.000000)
(29.020600,1.000000)(30.020600,1.000000)(31.020600,1.000000)(32.020600,1.000000)(33.020600,1.000000)
(34.020600,1.000000)

\newrgbcolor{color16.0099}{1  0  0}
\psline[plotstyle=line,linejoin=1,showpoints=true,dotstyle=Bo,dotsize=\MarkerSize,linestyle=solid,linewidth=\LineWidth,linecolor=color16.0099]
(5.781513,0.000000)(6.781513,0.000000)(7.781513,0.000000)(8.781513,0.000000)(9.781513,0.000000)
(10.781513,0.000000)(11.781513,0.000000)(12.781513,0.000000)(13.781513,0.000000)(14.781513,0.000000)
(15.781513,0.000000)(16.781513,0.000000)(17.781513,0.000000)(18.781513,0.000028)(19.781513,0.002465)
(20.781513,0.052488)(21.781513,0.239690)(22.781513,0.635535)(23.781513,0.904382)(24.781513,0.970299)
(25.781513,1.000000)(26.781513,1.000000)(27.781513,1.000000)(28.781513,1.000000)(29.781513,1.000000)
(30.781513,1.000000)(31.781513,1.000000)(32.781513,1.000000)(33.781513,1.000000)(34.781513,1.000000)
(35.781513,1.000000)

\newrgbcolor{color17.0099}{0     0.74902     0.74902}
\psline[plotstyle=line,linejoin=1,showpoints=true,dotstyle=Bpentagon,dotsize=\MarkerSize,linestyle=solid,linewidth=\LineWidth,linecolor=color17.0099]
(7.030900,0.000000)(8.030900,0.000000)(9.030900,0.000000)(10.030900,0.000000)(11.030900,0.000000)
(12.030900,0.000000)(13.030900,0.000000)(14.030900,0.000000)(15.030900,0.000000)(16.030900,0.000000)
(17.030900,0.000000)(18.030900,0.000000)(19.030900,0.000000)(20.030900,0.000000)(21.030900,0.000000)
(22.030900,0.000000)(23.030900,0.000000)(24.030900,0.000000)(25.030900,0.000030)(26.030900,0.002385)
(27.030900,0.055192)(28.030900,0.309425)(29.030900,0.689027)(30.030900,0.895338)(31.030900,1.000000)
(32.030900,1.000000)(33.030900,1.000000)(34.030900,1.000000)(35.030900,1.000000)(36.030900,1.000000)
(37.030900,1.000000)

{ \small 
\rput(36.9,0.17){%
\psshadowbox[framesep=0pt,linewidth=\AxesLineWidth]{\psframebox*{\begin{tabular}{l}
\Rnode{a1}{\hspace*{0.0ex}} \hspace*{0.4cm} \Rnode{a2}{~~QPSK} \\
\Rnode{a3}{\hspace*{0.0ex}} \hspace*{0.4cm} \Rnode{a4}{~~16-QAM} \\
\Rnode{a5}{\hspace*{0.0ex}} \hspace*{0.4cm} \Rnode{a6}{~~64-QAM} \\
\Rnode{a7}{\hspace*{0.0ex}} \hspace*{0.4cm} \Rnode{a8}{~~256-QAM} \\
\end{tabular}}
\ncline[linestyle=solid,linewidth=\LineWidth,linecolor=color14.0104]{a1}{a2} \ncput{\psdot[dotstyle=*,dotsize=\MarkerSize,linecolor=color14.0104]}
\ncline[linestyle=solid,linewidth=\LineWidth,linecolor=color15.0099]{a3}{a4} \ncput{\psdot[dotstyle=Bsquare,dotsize=\MarkerSize,linecolor=color15.0099]}
\ncline[linestyle=solid,linewidth=\LineWidth,linecolor=color16.0099]{a5}{a6} \ncput{\psdot[dotstyle=Bo,dotsize=\MarkerSize,linecolor=color16.0099]}
\ncline[linestyle=solid,linewidth=\LineWidth,linecolor=color17.0099]{a7}{a8} \ncput{\psdot[dotstyle=Bpentagon,dotsize=\MarkerSize,linecolor=color17.0099]}
}%
}%
} 

\end{pspicture}%

\caption{Cumulative distribution function of successful packet transmission for QPSK, 16-QAM, 64-QAM, and 256-QAM modulation schemes. Each modulation represents a user's utility function $U_i(\gamma_i(P_i))$ which is a function of power.}
\label{fig:prob}
\end{figure}

\begin {table}[]
\caption {Mathematical Notations}
\label{table:notations}
\begin{center}
\renewcommand{\arraystretch}{1} 
\begin{tabular}{| l | l |}
  \hline
 Notation 					&Description \\
\hline
 $M$ 							  &Total number of users in a cell \\
 $P_T$ 							&Total BS power available\\ 
 $P_i$ 							&Power assigned to the $i^{\text{th}}$ UE\\
 $\mathbf P$			  &Vector of all users' powers\\
 $\gamma_i(P_i)$ 		&SINR of the $i^{\text{th}}$ UE\\
 $U_i(\gamma_i(P_i))$ &Utility function of the $i^{\text{th}}$ UE \\
 $G_i$							&Accounts for path loss, shadowing, and fading \\
						        &between BS and the $i^{\text{th}}$ UE \\
 $I_i$							&Accounts for interference and background \\
									  &noise at the $i^{\text{th}}$ UE \\
 $z_i$							&Slack variable of the $i^{\text{th}}$ UE\\
 $p$								&Shadow price or the total price per unit power \\
     							  &for all UEs\\
$f_i(\gamma_i(P_i))$     		&Probability of packet transmission success as a\\
								&function of user power \\
\hline
\end{tabular}
\end{center}
\end {table}

\section{Utility Proportional Fairness}\label{sec:upf}

In this treatment, we consider utility proportional fairness rather than bandwidth proportional fairness network utility maximization problem. Regular proportional fairness problem suits the case when all users have the same modulation scheme \cite{kelly98powercontrol}. However, modern cellular network's users have different QoS needs and therefore can handle different modulation schemes. Thus, one has to give up regular proportional fairness schemes in favor of utility proportional fairness for power allocation in modern cellular systems \cite{WPL06, UtilityFairness}.


\textbf{System Topology:} In this paper, we consider, without loss of generality, a single cellular system consisting of a single BS and $M$ UEs. Our objective is to optimally allocate powers to UEs depending upon the modulation scheme used and price paid for the power such that all UEs are served with non-zero allocation of power. We assume a time-slotted system in which the power allocation algorithm executes in every time slot. We assume the time slot is of arbitrary interval in which a single or several packets can be transmitted and the propagation conditions such as path loss, fading, noise and intercell interference stay the same for each UE. The BS allocates power within the power limit available such that the power allocated to the $i^{\text{th}}$ UE by the BS is given by $P_i$. Each UE has its own utility function $U_i(\gamma_i(P_i))$ that corresponds to the type of modulation scheme being handled by the UE where $\gamma_i(P_i)$ is the ``generic'' signal quality metric for the $i^{\text{th}}$ UE, 
as in \cite{DL_PowerAllocation}. For cellular systems, this metric is commonly referred to as SINR as it not only depends on the $i^{\text{th}}$ UE's power allocation but also on the power allocation of all other UEs. For CDMA systems, this metric represents the bit energy to interference density ratio of the $i^{\text{th}}$ UE \cite{DL_PowerAllocation}.

\textbf{Sigmoidal-like Utility Functions:} Our objective is to assign optimal power levels to the UEs so as to have a minimum QoS for each UE. Therefore, we assume the utility functions $U_i(\gamma_i(P_i))$ to be sigmoidal-like functions. The utility functions have the following properties: 

\begin{itemize}
\item $U_i(0) = 0$ and $U_i(\gamma_i(P_i))$ is an increasing function of $P_i$.
\item $U_i(\gamma_i(P_i))$ is twice continuously differentiable in $P_i$.
\end{itemize}

The SINR $\gamma_i(P_i)$ is represented, as in \cite{DL_PowerAllocation}
\begin{equation}
\gamma_i(P_i) = \dfrac{G_i P_i}{G_i \sum_{m=1}^{M} P_m - G_iP_i + I_i}
\end{equation}
where $G_i$ accounts for path loss, shadowing, and fading between the BS and the $i^{\text{th}}$ UE and $I_i$ accounts for background noise and intercell interference to the $i^{\text{th}}$ UE. In our model, we use the normalized sigmoidal-like utility function, as in \cite{DL_PowerAllocation,XSC01,XSC03}, that can be expressed as 
\begin{equation}\label{eqn:sigmoid}
U_i(\gamma_i(P_i)) = c_i\Big(\frac{1}{1+e^{-a_i(P_i-b_i)}}-d_i\Big)
\end{equation}
where $c_i = \frac{1+e^{a_ib_i}}{e^{a_ib_i}}$ and $d_i = \frac{1}{1+e^{a_ib_i}}$. So, it satisfies $U(0)=0$ and $U(\infty)=1$. The values assigned to $a$ and $b$ play a role in the total system utility. For example, a UE with a larger value of $a$ or a UE with a smaller value of $b$ requires less power to achieve the same utility, given that other conditions are same \cite{DL_PowerAllocation}. In addition, we can tune parameters $a$ and $b$ to get an approximation of utility functions of various applications. 


\textbf{Link Adaptation in 4G Cellular Systems:} Current and emerging cellular standards adapt to the RF transmission conditions and select modulation and coding schemes which result in enhanced QoS for users. This is known as link adaptation. The choice of modulation and/or coding scheme is dynamically selected based on the channel-quality between the base station and the user. For example, in LTE systems, each user sends a signal-quality level, known as channel quality indicator (CQI) measurement, to the BS. The CQI measurement is based on the received signal-strength of the reference signal, transmitted by the BS with a constant power level and fixed modulation scheme.
This CQI measurements aids the BS to assign modulation and/or coding scheme to the user. Typically, a BS can assign upto 256-QAM scheme to users that report the highest value of CQI. This is usually for the users that share good channel with the BS or are close to the BS, see Fig. \ref{fig:AdapMod}. However, higher power needs to be allocated in order to assure same QoS as that of a lower modulation scheme, say QPSK, see Fig. \ref{fig:prob}.

\textbf{Representing Modulations by Sigmoidal-like Utility Functions:} In order to further motivate the use of sigmoidal-like utility functions, in Fig. \ref{fig:prob}, we provide the probability of packet transmission success for different modulation schemes such as QPSK, 16-QAM, 64-QAM, and 256-QAM. We assume that our packets consist of 800 symbols and we set $P_T=31$. The probability of packet success can be given by $f_i(\gamma_i(P_i)) = \text{Prob} (\gamma_i(P_i) \geq \Gamma) $, where $\Gamma$ is some pre-established threshold. This packet transmission success probability depends on many parameters including modulation schemes, coding rate, packet size, hybrid automatic repeat request (H-ARQ) schemes, SINR, and power. It is important to note that the cumulative distribution function of modulation schemes has a sigmoidal-like shape i.e. first convex and then concave. Thus, a modulation can indeed be represented by sigmoidal-like utility function of its power allocation \cite{DL_PowerAllocation}. For this 
reason, we represent different user modulations by sigmoidal-
like utility functions.

\section{Problem Formulation}\label{sec:Problem_formulation}

We consider the utility proportional fairness objective function given by 
\begin{equation}\label{eqn:utility_fairness}
\underset{\textbf{P}}\max \prod_{i=1}^{M}U_i(\gamma_i(P_i))
\end{equation}
where $\textbf{P} =\{P_1,P_2,...,P_M\}$ and $M$ is the number of UEs in the coverage area of the BS. The goal of this power allocation objective function is to allocate power to each UE that maximizes the total mobile system objective (i.e. the product of the utilities of all the UEs) while ensuring proportional fairness among individual utilities. This power allocation objective function ensures non-zero power allocation for all users. Therefore, the corresponding power allocation optimization problem guarantees minimum QoS for all users. In addition, this approach allocates more power to users with lower modulation schemes providing improvement in the QoS of cellular system. 


\textbf{Optimization Problem:} The basic formulation of the utility proportional fairness power allocation problem is given by the following optimization problem with two constraints:
\begin{equation}\label{eqn:opt_prob_fairness}
\begin{aligned}
& \underset{\textbf{P}}{\text{max}}
& & \prod_{i=1}^{M}U_i(\gamma_i(P_i)) \\
& \text{subject to}
& & \sum_{i=1}^{M}P_i \leq P_T\\
& & &  P_i \geq 0, \;\;\;\;\; \text{for} \; \; i = 1,2, ...,M \; \; \text{and} \; \; P_T \geq 0.
\end{aligned}
\end{equation}
where $P_T$ is the total power of the BS covering the $M$ UEs, and $\textbf{P} =\{P_1,P_2,...,P_M\}$. Our optimization problem has two constraints which are discussed as follows.

\textbf{Total BS Power Constraint i.e. $\sum_{i=1}^{M}P_i \leq P_T$:} The BS has to allocate powers to all users by staying within its available power budget.

\textbf{Minimum QoS Constraint i.e. $ P_i \geq 0$ for $i = 1,2, ...,M$ and $P_T \geq 0$:} We ensure that all UEs are served by the BS by allocating non-zero powers to all UEs, i.e., when $P_T \neq 0 $, $P_i > 0$ for all users. This is to meet a minimum QoS criteria for all users of the network. The case $P_i = 0$ is only when $P_T = 0$ and is included to make the problem as general as possible.

We prove in Section \ref{sec:Proof} that there exists a tractable global optimal solution to the optimization problem (\ref{eqn:opt_prob_fairness}).

\section{The Global Optimal Solution}\label{sec:Proof}

In the optimization problem (\ref{eqn:opt_prob_fairness}), since the objective function $\arg \underset{\textbf{P}} \max \prod_{i=1}^{M}U_i(\gamma_i(P_i))$ is equivalent to $\arg \underset{\textbf{P}} \max \sum_{i=1}^{M}\log(U_i(\gamma_i(P_i)))$, it can be expressed as:
\begin{equation}\label{eqn:opt_prob_fairness_mod}
\begin{aligned}
& \underset{\textbf{P}}{\text{max}}
& & \sum_{i=1}^{M}\log(U_i(\gamma_i(P_i))) \\
& \text{subject to}
& & \sum_{i=1}^{M}P_i \leq P_T\\
& & &  P_i \geq 0, \;\;\;\;\; \text{for} \; \; i = 1,2, ...,M \; \; \text{and} \; \; P_T \geq 0.
\end{aligned}
\end{equation}

\begin{lem}\label{lem:concavity}
The utility functions $\log(U_i(\gamma_i(P_i)))$, in the optimization problem (\ref{eqn:opt_prob_fairness_mod}), are strictly concave functions. 
\end{lem}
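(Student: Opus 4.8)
The plan is to reduce the claim to a one-line sign computation on a second derivative. Note that the formula \eqref{eqn:sigmoid} expresses the utility directly through the logistic term $S_i(P_i) := \bigl(1+e^{-a_i(P_i-b_i)}\bigr)^{-1}$, so I may differentiate with respect to $P_i$ directly. Since
\[
\log\bigl(U_i(\gamma_i(P_i))\bigr) = \log c_i + \log\bigl(S_i(P_i) - d_i\bigr),
\]
and $\log c_i$ is a constant, it suffices to prove that $\phi_i(P_i) := \log\bigl(S_i(P_i) - d_i\bigr)$ is strictly concave on the feasible region $P_i > 0$. First I would check that $\phi_i$ is well defined there: because $S_i(0) = (1+e^{a_ib_i})^{-1} = d_i$ and $S_i$ is strictly increasing (equivalently $a_i > 0$, consistent with the stated monotonicity of $U_i$), we have $S_i(P_i) - d_i > 0$ for every $P_i > 0$, matching $U_i(0) = 0$.

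Next I would compute the derivatives with the standard logistic identity $S_i' = a_i S_i(1-S_i)$, which gives $S_i'' = a_i^2 S_i(1-S_i)(1-2S_i)$. Writing $g_i = S_i - d_i$, so $g_i' = S_i'$ and $g_i'' = S_i''$, the second derivative is
\[
\phi_i'' = \frac{g_i'' g_i - (g_i')^2}{g_i^2},
\]
so strict concavity is equivalent to the numerator $N_i := g_i'' g_i - (g_i')^2$ being strictly negative. Substituting and factoring out $a_i^2 S_i(1-S_i)$, which is strictly positive since $0 < S_i < 1$ for finite $P_i$, a short expansion collapses the remaining bracket into a quadratic in $S_i$, yielding $N_i = -\,a_i^2 S_i(1-S_i)\bigl(S_i^2 - 2 d_i S_i + d_i\bigr)$.

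It then remains to show the quadratic $q(S) = S^2 - 2 d_i S + d_i$ is strictly positive (for all real $S$, hence in particular on $(0,1)$). Its discriminant is $4 d_i^2 - 4 d_i = 4 d_i(d_i - 1)$, and since $d_i = (1+e^{a_ib_i})^{-1} \in (0,1)$ we have $d_i(d_i - 1) < 0$; the discriminant is therefore negative and, as the leading coefficient is positive, $q(S) > 0$ everywhere. Consequently $N_i < 0$, so $\phi_i'' < 0$, and each $\log\bigl(U_i(\gamma_i(P_i))\bigr)$ is strictly concave.

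The computation is mechanical, so the only real obstacle is the bookkeeping: differentiating the composite logistic term correctly and carrying out the cancellation that reduces $N_i$ to a multiple of $q(S)$. The clean observation that makes the argument work is that the sign of $\phi_i''$ is governed entirely by $q(S)$, whose positivity follows from the single inequality $0 < d_i < 1$ guaranteed by the definition $d_i = (1+e^{a_ib_i})^{-1}$.
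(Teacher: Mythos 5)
Your proof is correct, but it takes a different algebraic route than the paper's. The paper implicitly splits $\log U_i = \log c_i + \log\bigl(1-d_i(1+e^{-a_i(P_i-b_i)})\bigr) - \log\bigl(1+e^{-a_i(P_i-b_i)}\bigr)$ and differentiates term by term: its displayed second derivative is a sum of two fractions, each manifestly negative, so strict concavity follows with no factoring at all. You instead keep the composite $\phi_i=\log(S_i-d_i)$ intact, invoke the logistic identities $S_i'=a_iS_i(1-S_i)$ and $S_i''=a_i^2S_i(1-S_i)(1-2S_i)$, and reduce the sign of $\phi_i''$ to the numerator $g_i''g_i-(g_i')^2$, which you factor as $-a_i^2S_i(1-S_i)\bigl(S_i^2-2d_iS_i+d_i\bigr)$ and kill via the discriminant $4d_i(d_i-1)<0$; the factorization checks out, since expanding $(1-2S_i)(S_i-d_i)-S_i(1-S_i)$ indeed yields $-(S_i^2-2d_iS_i+d_i)$. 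The trade-off: the paper's decomposition buys immediacy (negativity is visible term by term, no discriminant argument needed), while yours buys a single mechanical computation that isolates, in closed form, exactly which inequality ($0<d_i<1$) drives the concavity. Both proofs perform the same preliminary well-definedness check that $U_i>0$ on the open feasible set. One small wording nit: strict negativity of $\phi_i''$ is \emph{sufficient} for strict concavity, not equivalent to it (consider $-P_i^4$ at the origin), though you only ever use the correct direction of that implication.
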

\begin{proof}
In this paper, we assume that all the utility functions of the UEs are sigmoidal-like functions. The utility function of the normalized sigmoidal-like function is given by equation (\ref{eqn:sigmoid}) as $U_i(\gamma_i(P_i)) = c\Big(\frac{1}{1+e^{-a_i(P_i-b_i)}}-d\Big)$. For $0<P_i<P_T$, we have 
\begin{equation*}\label{eqn:sigmoid_bound}
0<1-d_i({1+e^{-a_i(P_i-b_i)}})<\frac{1}{1+c_id_i}
\end{equation*}
It follows that for $0<P_i<P_T$, we have the first and second derivative as
\begin{equation*}\label{eqn:sigmoid_derivative}
\begin{aligned}
\frac{d}{dP_i}\log U_i(\gamma_i(P_i)) =& \frac{a_id_i e^{-a_i(P_i-b_i)}}{1-d_i(1+e^{-a_i(P_i-b_i)})} + \frac{a_ie^{-a_i(P_i-b_i)}}{(1+e^{-a_i(P_i-b_i)})}>0\\
\frac{d^2}{dP_i^2}\log U_i(\gamma_i(P_i)) =& \frac{-a_i^2d_ie^{-a_i(P_i-b_i)}}{c_i\Big(1-d_i(1+e^{-a(P_i-b_i)})\Big)^2} + \frac{-a_i^2e^{-a_i(P_i-b_i)}}{(1+e^{-a_i(P_i-b_i)})^2} < 0. \\
\end{aligned}
\end{equation*}
Therefore, the sigmoidal-like utility function's $U_i(\gamma_i(P_i))$ natural logarithm  $\log(U_i(\gamma_i(P_i)))$ is strictly concave function. Therefore, all the utility functions in our system model have strictly concave natural logarithms.
\end{proof}

The natural logarithms of the utility functions of Figure \ref{fig:sigmoid} are shown in Figure \ref{fig:log_sigmoid} and the derivatives of  natural logarithms of the utility functions are shown in Figure \ref{fig:diff_log_sigmoid}.
\begin{thm}\label{thm:global_soln}
The optimization problem (\ref{eqn:opt_prob_fairness}) is a convex optimization problem and there exists a unique tractable global optimal solution. 
\end{thm}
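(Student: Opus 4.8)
The plan is to establish the two assertions of the theorem separately: first that (\ref{eqn:opt_prob_fairness}) is a convex program, and then that it admits a unique global maximizer computable by standard means. The whole argument rests on passing to the logarithmic form (\ref{eqn:opt_prob_fairness_mod}) and invoking the strict concavity already proved in Lemma \ref{lem:concavity}.

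For the convexity claim, I would first justify the equivalence asserted just before Lemma \ref{lem:concavity}: because $\log$ is strictly increasing, any $\textbf{P}$ maximizing $\prod_{i=1}^{M} U_i(\gamma_i(P_i))$ over the feasible set also maximizes $\sum_{i=1}^{M}\log(U_i(\gamma_i(P_i)))$ and conversely, provided all factors are positive. The only feasible points where a factor vanishes are those with some $P_i=0$, and there the product equals zero and so is never optimal whenever $P_T>0$; this simultaneously confirms that the optimizer lies in the interior and yields the claimed non-zero allocation to every UE. On that interior, Lemma \ref{lem:concavity} gives that each summand $\log(U_i(\gamma_i(P_i)))$ is strictly concave in $P_i$. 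Since the objective is a separable sum of these terms, it is strictly concave jointly in $\textbf{P}$: for $\textbf{P}\neq\textbf{P}'$ at least one coordinate differs, and strict concavity in that coordinate makes the corresponding Jensen inequality strict. Maximizing a strictly concave objective is equivalent to minimizing its strictly convex negative, and the feasible region cut out by the affine constraints $\sum_{i=1}^{M}P_i\leq P_T$ and $P_i\geq 0$ is a convex, compact polytope; hence (\ref{eqn:opt_prob_fairness}) is a convex optimization problem.

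For existence and uniqueness, I would note that the feasible set is nonempty (for instance $P_i=P_T/M$) and compact, while the objective is continuous on its interior and tends to $-\infty$ at the degenerate boundary, so the supremum is attained at an interior point by the Weierstrass argument. Strict concavity then forces this maximizer to be unique: two distinct maximizers would, by strict concavity, make their midpoint yield a strictly larger feasible value, a contradiction. Finally, convexity guarantees that every point satisfying the Karush--Kuhn--Tucker conditions is the global optimum, so the solution is tractable and can be recovered by the dual and distributed methods developed in the later sections.

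The routine parts---affineness of the constraints, compactness, and the Weierstrass step---are standard, and the substantive content is concentrated in Lemma \ref{lem:concavity}, which is already in hand. The one point demanding care is the boundary behaviour: the log-sum objective is finite only on the open region where all $U_i>0$, so I must justify both the product-to-log equivalence and the attainment of the maximum in the interior rather than on a face $\{P_i=0\}$, and then confirm that separable strict concavity in each $P_i$ upgrades to joint strict concavity in $\textbf{P}$, which is exactly what delivers uniqueness.
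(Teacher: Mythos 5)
Your proof is correct and follows essentially the same route as the paper's: invoke Lemma \ref{lem:concavity} for strict concavity of each $\log U_i(\gamma_i(P_i))$, pass between (\ref{eqn:opt_prob_fairness}) and (\ref{eqn:opt_prob_fairness_mod}) via the monotonicity of $\log$, and conclude convexity plus a unique, tractable global optimum. The only difference is one of detail: the paper compresses the equivalence, joint strict concavity, existence, and uniqueness steps into citations of \cite{Boyd2004}, whereas you spell them out (positivity off the faces $\{P_i=0\}$, the separable-sum argument, Weierstrass, and the midpoint contradiction), which is a strictly more self-contained rendering of the same argument.
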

\begin{proof}
It follows from Lemma \ref{lem:concavity} that all UEs' utility functions are strictly concave. Therefore, the optimization problem (\ref{eqn:opt_prob_fairness_mod}) is a convex optimization problem \cite{Boyd2004}. The optimization problem (\ref{eqn:opt_prob_fairness_mod}) is equivalent to the optimization problem (\ref{eqn:opt_prob_fairness}), therefore it is also a convex optimization problem. For a convex optimization problem there exists a unique tractable global optimal solution \cite{Boyd2004}.
\end{proof}

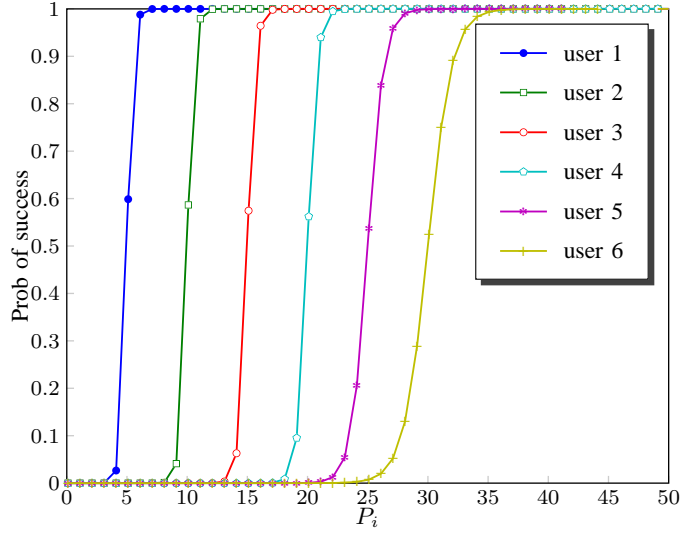
\begin{figure}[!t]
\centering

\psset{xunit=0.020000\plotwidth,yunit=0.788710\plotwidth}%
\begin{pspicture}(-5.529954,-0.111111)(50.921659,1.023392)%


\psline[linewidth=\AxesLineWidth,linecolor=GridColor](0.000000,0.000000)(0.000000,0.015215)
\psline[linewidth=\AxesLineWidth,linecolor=GridColor](5.000000,0.000000)(5.000000,0.015215)
\psline[linewidth=\AxesLineWidth,linecolor=GridColor](10.000000,0.000000)(10.000000,0.015215)
\psline[linewidth=\AxesLineWidth,linecolor=GridColor](15.000000,0.000000)(15.000000,0.015215)
\psline[linewidth=\AxesLineWidth,linecolor=GridColor](20.000000,0.000000)(20.000000,0.015215)
\psline[linewidth=\AxesLineWidth,linecolor=GridColor](25.000000,0.000000)(25.000000,0.015215)
\psline[linewidth=\AxesLineWidth,linecolor=GridColor](30.000000,0.000000)(30.000000,0.015215)
\psline[linewidth=\AxesLineWidth,linecolor=GridColor](35.000000,0.000000)(35.000000,0.015215)
\psline[linewidth=\AxesLineWidth,linecolor=GridColor](40.000000,0.000000)(40.000000,0.015215)
\psline[linewidth=\AxesLineWidth,linecolor=GridColor](45.000000,0.000000)(45.000000,0.015215)
\psline[linewidth=\AxesLineWidth,linecolor=GridColor](50.000000,0.000000)(50.000000,0.015215)
\psline[linewidth=\AxesLineWidth,linecolor=GridColor](0.000000,0.000000)(0.600000,0.000000)
\psline[linewidth=\AxesLineWidth,linecolor=GridColor](0.000000,0.100000)(0.600000,0.100000)
\psline[linewidth=\AxesLineWidth,linecolor=GridColor](0.000000,0.200000)(0.600000,0.200000)
\psline[linewidth=\AxesLineWidth,linecolor=GridColor](0.000000,0.300000)(0.600000,0.300000)
\psline[linewidth=\AxesLineWidth,linecolor=GridColor](0.000000,0.400000)(0.600000,0.400000)
\psline[linewidth=\AxesLineWidth,linecolor=GridColor](0.000000,0.500000)(0.600000,0.500000)
\psline[linewidth=\AxesLineWidth,linecolor=GridColor](0.000000,0.600000)(0.600000,0.600000)
\psline[linewidth=\AxesLineWidth,linecolor=GridColor](0.000000,0.700000)(0.600000,0.700000)
\psline[linewidth=\AxesLineWidth,linecolor=GridColor](0.000000,0.800000)(0.600000,0.800000)
\psline[linewidth=\AxesLineWidth,linecolor=GridColor](0.000000,0.900000)(0.600000,0.900000)
\psline[linewidth=\AxesLineWidth,linecolor=GridColor](0.000000,1.000000)(0.600000,1.000000)

{ \footnotesize 
\rput[t](0.000000,-0.015215){$0$}
\rput[t](5.000000,-0.015215){$5$}
\rput[t](10.000000,-0.015215){$10$}
\rput[t](15.000000,-0.015215){$15$}
\rput[t](20.000000,-0.015215){$20$}
\rput[t](25.000000,-0.015215){$25$}
\rput[t](30.000000,-0.015215){$30$}
\rput[t](35.000000,-0.015215){$35$}
\rput[t](40.000000,-0.015215){$40$}
\rput[t](45.000000,-0.015215){$45$}
\rput[t](50.000000,-0.015215){$50$}
\rput[r](-0.600000,0.000000){$0$}
\rput[r](-0.600000,0.100000){$0.1$}
\rput[r](-0.600000,0.200000){$0.2$}
\rput[r](-0.600000,0.300000){$0.3$}
\rput[r](-0.600000,0.400000){$0.4$}
\rput[r](-0.600000,0.500000){$0.5$}
\rput[r](-0.600000,0.600000){$0.6$}
\rput[r](-0.600000,0.700000){$0.7$}
\rput[r](-0.600000,0.800000){$0.8$}
\rput[r](-0.600000,0.900000){$0.9$}
\rput[r](-0.600000,1.000000){$1$}
} 

\psframe[linewidth=\AxesLineWidth,dimen=middle](0.000000,0.000000)(50.000000,1.000000)

{ \small 
\rput[b](25.000000,-0.111111){
\begin{tabular}{c}
$P_i$\\
\end{tabular}
}

\rput[t]{90}(-5.529954,0.500000){
\begin{tabular}{c}
Prob of success\\
\end{tabular}
}
} 

\newrgbcolor{color403.0076}{0  0  1}
\psline[plotstyle=line,linejoin=1,showpoints=false,dotstyle=*,dotsize=\MarkerSize,linestyle=solid,linewidth=\LineWidth,linecolor=color403.0076]
(49.100000,1.000000)(50.000000,1.000000)
\psline[plotstyle=line,linejoin=1,showpoints=true,dotstyle=*,dotsize=\MarkerSize,linestyle=solid,linewidth=\LineWidth,linecolor=color403.0076]
(0.100000,0.000000)(1.100000,0.000000)(2.100000,0.000009)(3.100000,0.000500)(4.100000,0.026597)
(5.100000,0.598688)(6.100000,0.987872)(7.100000,0.999775)(8.100000,0.999996)(9.100000,1.000000)
(10.100000,1.000000)(11.100000,1.000000)(12.100000,1.000000)(13.100000,1.000000)(14.100000,1.000000)
(15.100000,1.000000)(16.100000,1.000000)(17.100000,1.000000)(18.100000,1.000000)(19.100000,1.000000)
(20.100000,1.000000)(21.100000,1.000000)(22.100000,1.000000)(23.100000,1.000000)(24.100000,1.000000)
(25.100000,1.000000)(26.100000,1.000000)(27.100000,1.000000)(28.100000,1.000000)(29.100000,1.000000)
(30.100000,1.000000)(31.100000,1.000000)(32.100000,1.000000)(33.100000,1.000000)(34.100000,1.000000)
(35.100000,1.000000)(36.100000,1.000000)(37.100000,1.000000)(38.100000,1.000000)(39.100000,1.000000)
(40.100000,1.000000)(41.100000,1.000000)(42.100000,1.000000)(43.100000,1.000000)(44.100000,1.000000)
(45.100000,1.000000)(46.100000,1.000000)(47.100000,1.000000)(48.100000,1.000000)(49.100000,1.000000)

\newrgbcolor{color404.0071}{0         0.5           0}
\psline[plotstyle=line,linejoin=1,showpoints=false,dotstyle=Bsquare,dotsize=\MarkerSize,linestyle=solid,linewidth=\LineWidth,linecolor=color404.0071]
(49.100000,1.000000)(50.000000,1.000000)
\psline[plotstyle=line,linejoin=1,showpoints=true,dotstyle=Bsquare,dotsize=\MarkerSize,linestyle=solid,linewidth=\LineWidth,linecolor=color404.0071]
(0.100000,0.000000)(1.100000,0.000000)(2.100000,0.000000)(3.100000,0.000000)(4.100000,0.000000)
(5.100000,0.000000)(6.100000,0.000001)(7.100000,0.000039)(8.100000,0.001292)(9.100000,0.041091)
(10.100000,0.586618)(11.100000,0.979164)(12.100000,0.999358)(13.100000,0.999981)(14.100000,0.999999)
(15.100000,1.000000)(16.100000,1.000000)(17.100000,1.000000)(18.100000,1.000000)(19.100000,1.000000)
(20.100000,1.000000)(21.100000,1.000000)(22.100000,1.000000)(23.100000,1.000000)(24.100000,1.000000)
(25.100000,1.000000)(26.100000,1.000000)(27.100000,1.000000)(28.100000,1.000000)(29.100000,1.000000)
(30.100000,1.000000)(31.100000,1.000000)(32.100000,1.000000)(33.100000,1.000000)(34.100000,1.000000)
(35.100000,1.000000)(36.100000,1.000000)(37.100000,1.000000)(38.100000,1.000000)(39.100000,1.000000)
(40.100000,1.000000)(41.100000,1.000000)(42.100000,1.000000)(43.100000,1.000000)(44.100000,1.000000)
(45.100000,1.000000)(46.100000,1.000000)(47.100000,1.000000)(48.100000,1.000000)(49.100000,1.000000)

\newrgbcolor{color405.0071}{1  0  0}
\psline[plotstyle=line,linejoin=1,showpoints=false,dotstyle=Bo,dotsize=\MarkerSize,linestyle=solid,linewidth=\LineWidth,linecolor=color405.0071]
(49.100000,1.000000)(50.000000,1.000000)
\psline[plotstyle=line,linejoin=1,showpoints=true,dotstyle=Bo,dotsize=\MarkerSize,linestyle=solid,linewidth=\LineWidth,linecolor=color405.0071]
(0.100000,0.000000)(1.100000,0.000000)(2.100000,0.000000)(3.100000,0.000000)(4.100000,0.000000)
(5.100000,0.000000)(6.100000,0.000000)(7.100000,0.000000)(8.100000,0.000000)(9.100000,0.000000)
(10.100000,0.000000)(11.100000,0.000008)(12.100000,0.000167)(13.100000,0.003335)(14.100000,0.062973)
(15.100000,0.574443)(16.100000,0.964429)(17.100000,0.998167)(18.100000,0.999909)(19.100000,0.999995)
(20.100000,1.000000)(21.100000,1.000000)(22.100000,1.000000)(23.100000,1.000000)(24.100000,1.000000)
(25.100000,1.000000)(26.100000,1.000000)(27.100000,1.000000)(28.100000,1.000000)(29.100000,1.000000)
(30.100000,1.000000)(31.100000,1.000000)(32.100000,1.000000)(33.100000,1.000000)(34.100000,1.000000)
(35.100000,1.000000)(36.100000,1.000000)(37.100000,1.000000)(38.100000,1.000000)(39.100000,1.000000)
(40.100000,1.000000)(41.100000,1.000000)(42.100000,1.000000)(43.100000,1.000000)(44.100000,1.000000)
(45.100000,1.000000)(46.100000,1.000000)(47.100000,1.000000)(48.100000,1.000000)(49.100000,1.000000)

\newrgbcolor{color406.0071}{0        0.75        0.75}
\psline[plotstyle=line,linejoin=1,showpoints=false,dotstyle=Bpentagon,dotsize=\MarkerSize,linestyle=solid,linewidth=\LineWidth,linecolor=color406.0071]
(49.100000,1.000000)(50.000000,1.000000)
\psline[plotstyle=line,linejoin=1,showpoints=true,dotstyle=Bpentagon,dotsize=\MarkerSize,linestyle=solid,linewidth=\LineWidth,linecolor=color406.0071]
(0.100000,0.000000)(1.100000,0.000000)(2.100000,0.000000)(3.100000,0.000000)(4.100000,0.000000)
(5.100000,0.000000)(6.100000,0.000000)(7.100000,0.000000)(8.100000,0.000000)(9.100000,0.000000)
(10.100000,0.000000)(11.100000,0.000000)(12.100000,0.000000)(13.100000,0.000000)(14.100000,0.000000)
(15.100000,0.000005)(16.100000,0.000058)(17.100000,0.000710)(18.100000,0.008577)(19.100000,0.095349)
(20.100000,0.562177)(21.100000,0.939913)(22.100000,0.994780)(23.100000,0.999569)(24.100000,0.999965)
(25.100000,0.999997)(26.100000,1.000000)(27.100000,1.000000)(28.100000,1.000000)(29.100000,1.000000)
(30.100000,1.000000)(31.100000,1.000000)(32.100000,1.000000)(33.100000,1.000000)(34.100000,1.000000)
(35.100000,1.000000)(36.100000,1.000000)(37.100000,1.000000)(38.100000,1.000000)(39.100000,1.000000)
(40.100000,1.000000)(41.100000,1.000000)(42.100000,1.000000)(43.100000,1.000000)(44.100000,1.000000)
(45.100000,1.000000)(46.100000,1.000000)(47.100000,1.000000)(48.100000,1.000000)(49.100000,1.000000)

\newrgbcolor{color407.0071}{0.75           0        0.75}
\psline[plotstyle=line,linejoin=1,showpoints=false,dotstyle=Basterisk,dotsize=\MarkerSize,linestyle=solid,linewidth=\LineWidth,linecolor=color407.0071]
(49.100000,1.000000)(50.000000,1.000000)
\psline[plotstyle=line,linejoin=1,showpoints=true,dotstyle=Basterisk,dotsize=\MarkerSize,linestyle=solid,linewidth=\LineWidth,linecolor=color407.0071]
(0.100000,0.000000)(1.100000,0.000000)(2.100000,0.000000)(3.100000,0.000000)(4.100000,0.000000)
(5.100000,0.000000)(6.100000,0.000000)(7.100000,0.000000)(8.100000,0.000000)(9.100000,0.000000)
(10.100000,0.000000)(11.100000,0.000000)(12.100000,0.000000)(13.100000,0.000000)(14.100000,0.000000)
(15.100000,0.000000)(16.100000,0.000002)(17.100000,0.000007)(18.100000,0.000032)(19.100000,0.000143)
(20.100000,0.000642)(21.100000,0.002872)(22.100000,0.012742)(23.100000,0.054681)(24.100000,0.205870)
(25.100000,0.537430)(26.100000,0.838891)(27.100000,0.958909)(28.100000,0.990529)(29.100000,0.997871)
(30.100000,0.999524)(31.100000,0.999894)(32.100000,0.999976)(33.100000,0.999995)(34.100000,0.999999)
(35.100000,1.000000)(36.100000,1.000000)(37.100000,1.000000)(38.100000,1.000000)(39.100000,1.000000)
(40.100000,1.000000)(41.100000,1.000000)

\newrgbcolor{color408.0071}{0.75        0.75           0}
\psline[plotstyle=line,linejoin=1,showpoints=false,dotstyle=B+,dotsize=\MarkerSize,linestyle=solid,linewidth=\LineWidth,linecolor=color408.0071]
(49.100000,1.000000)(50.000000,1.000000)
\psline[plotstyle=line,linejoin=1,showpoints=true,dotstyle=B+,dotsize=\MarkerSize,linestyle=solid,linewidth=\LineWidth,linecolor=color408.0071]
(0.100000,0.000000)(1.100000,0.000000)(2.100000,0.000000)(3.100000,0.000000)(4.100000,0.000000)
(5.100000,0.000000)(6.100000,0.000000)(7.100000,0.000000)(8.100000,0.000000)(9.100000,0.000000)
(10.100000,0.000000)(11.100000,0.000000)(12.100000,0.000000)(13.100000,0.000000)(14.100000,0.000000)
(15.100000,0.000000)(16.100000,0.000001)(17.100000,0.000002)(18.100000,0.000007)(19.100000,0.000018)
(20.100000,0.000050)(21.100000,0.000136)(22.100000,0.000371)(23.100000,0.001007)(24.100000,0.002732)
(25.100000,0.007392)(26.100000,0.019840)(27.100000,0.052154)(28.100000,0.130108)(29.100000,0.289050)
(30.100000,0.524979)(31.100000,0.750260)(32.100000,0.890903)(33.100000,0.956893)(34.100000,0.983698)
(35.100000,0.993940)(36.100000,0.997762)(37.100000,0.999176)(38.100000,0.999697)(39.100000,0.999888)
(40.100000,0.999959)(41.100000,0.999985)(42.100000,0.999994)(43.100000,0.999998)(44.100000,0.999999)

{ \small 
\rput[tr](48.800000,0.969571){%
\psshadowbox[framesep=0pt,linewidth=\AxesLineWidth]{\psframebox*{\begin{tabular}{l}
\Rnode{a1}{\hspace*{0.0ex}} \hspace*{0.4cm} \Rnode{a2}{~~user 1} \\
\Rnode{a3}{\hspace*{0.0ex}} \hspace*{0.4cm} \Rnode{a4}{~~user 2} \\
\Rnode{a5}{\hspace*{0.0ex}} \hspace*{0.4cm} \Rnode{a6}{~~user 3} \\
\Rnode{a7}{\hspace*{0.0ex}} \hspace*{0.4cm} \Rnode{a8}{~~user 4} \\
\Rnode{a9}{\hspace*{0.0ex}} \hspace*{0.4cm} \Rnode{a10}{~~user 5} \\
\Rnode{a11}{\hspace*{0.0ex}} \hspace*{0.4cm} \Rnode{a12}{~~user 6} \\
\end{tabular}}
\ncline[linestyle=solid,linewidth=\LineWidth,linecolor=color403.0076]{a1}{a2} \ncput{\psdot[dotstyle=*,dotsize=\MarkerSize,linecolor=color403.0076]}
\ncline[linestyle=solid,linewidth=\LineWidth,linecolor=color404.0071]{a3}{a4} \ncput{\psdot[dotstyle=Bsquare,dotsize=\MarkerSize,linecolor=color404.0071]}
\ncline[linestyle=solid,linewidth=\LineWidth,linecolor=color405.0071]{a5}{a6} \ncput{\psdot[dotstyle=Bo,dotsize=\MarkerSize,linecolor=color405.0071]}
\ncline[linestyle=solid,linewidth=\LineWidth,linecolor=color406.0071]{a7}{a8} \ncput{\psdot[dotstyle=Bpentagon,dotsize=\MarkerSize,linecolor=color406.0071]}
\ncline[linestyle=solid,linewidth=\LineWidth,linecolor=color407.0071]{a9}{a10} \ncput{\psdot[dotstyle=Basterisk,dotsize=\MarkerSize,linecolor=color407.0071]}
\ncline[linestyle=solid,linewidth=\LineWidth,linecolor=color408.0071]{a11}{a12} \ncput{\psdot[dotstyle=B+,dotsize=\MarkerSize,linecolor=color408.0071]}
}%
}%
} 

\end{pspicture}%
\caption{The sigmoidal-like utility functions (representing users with different modulation schemes) $U_i(\gamma_i(P_i))$. We use sigmoidal-like utility functions as their shape resembles cumulative distribution function of successful packet transmission of modulation schemes, see Figure \ref{fig:prob}.}
\label{fig:sigmoid}
\end{figure}
\begin{figure}[!t]
\centering

%
\psset{xunit=0.020000\plotwidth,yunit=0.013145\plotwidth}%
\begin{pspicture}(-5.529954,-66.666667)(50.921659,1.403509)%


\psline[linewidth=\AxesLineWidth,linecolor=GridColor](0.000000,-60.000000)(0.000000,-59.087117)
\psline[linewidth=\AxesLineWidth,linecolor=GridColor](5.000000,-60.000000)(5.000000,-59.087117)
\psline[linewidth=\AxesLineWidth,linecolor=GridColor](10.000000,-60.000000)(10.000000,-59.087117)
\psline[linewidth=\AxesLineWidth,linecolor=GridColor](15.000000,-60.000000)(15.000000,-59.087117)
\psline[linewidth=\AxesLineWidth,linecolor=GridColor](20.000000,-60.000000)(20.000000,-59.087117)
\psline[linewidth=\AxesLineWidth,linecolor=GridColor](25.000000,-60.000000)(25.000000,-59.087117)
\psline[linewidth=\AxesLineWidth,linecolor=GridColor](30.000000,-60.000000)(30.000000,-59.087117)
\psline[linewidth=\AxesLineWidth,linecolor=GridColor](35.000000,-60.000000)(35.000000,-59.087117)
\psline[linewidth=\AxesLineWidth,linecolor=GridColor](40.000000,-60.000000)(40.000000,-59.087117)
\psline[linewidth=\AxesLineWidth,linecolor=GridColor](45.000000,-60.000000)(45.000000,-59.087117)
\psline[linewidth=\AxesLineWidth,linecolor=GridColor](50.000000,-60.000000)(50.000000,-59.087117)
\psline[linewidth=\AxesLineWidth,linecolor=GridColor](0.000000,-60.000000)(0.600000,-60.000000)
\psline[linewidth=\AxesLineWidth,linecolor=GridColor](0.000000,-50.000000)(0.600000,-50.000000)
\psline[linewidth=\AxesLineWidth,linecolor=GridColor](0.000000,-40.000000)(0.600000,-40.000000)
\psline[linewidth=\AxesLineWidth,linecolor=GridColor](0.000000,-30.000000)(0.600000,-30.000000)
\psline[linewidth=\AxesLineWidth,linecolor=GridColor](0.000000,-20.000000)(0.600000,-20.000000)
\psline[linewidth=\AxesLineWidth,linecolor=GridColor](0.000000,-10.000000)(0.600000,-10.000000)
\psline[linewidth=\AxesLineWidth,linecolor=GridColor](0.000000,0.000000)(0.600000,0.000000)

{ \footnotesize 
\rput[t](0.000000,-60.912883){$0$}
\rput[t](5.000000,-60.912883){$5$}
\rput[t](10.000000,-60.912883){$10$}
\rput[t](15.000000,-60.912883){$15$}
\rput[t](20.000000,-60.912883){$20$}
\rput[t](25.000000,-60.912883){$25$}
\rput[t](30.000000,-60.912883){$30$}
\rput[t](35.000000,-60.912883){$35$}
\rput[t](40.000000,-60.912883){$40$}
\rput[t](45.000000,-60.912883){$45$}
\rput[t](50.000000,-60.912883){$50$}
\rput[r](-0.600000,-60.000000){$-60$}
\rput[r](-0.600000,-50.000000){$-50$}
\rput[r](-0.600000,-40.000000){$-40$}
\rput[r](-0.600000,-30.000000){$-30$}
\rput[r](-0.600000,-20.000000){$-20$}
\rput[r](-0.600000,-10.000000){$-10$}
\rput[r](-0.600000,0.000000){$0$}
} 

\psframe[linewidth=\AxesLineWidth,dimen=middle](0.000000,-60.000000)(50.000000,0.000000)

{ \small 
\rput[b](25.000000,-66.666667){
\begin{tabular}{c}
$P_i$\\
\end{tabular}
}

\rput[t]{90}(-5.529954,-30.000000){
\begin{tabular}{c}
$\log U_i$\\
\end{tabular}
}
} 

\newrgbcolor{color71.0027}{0  0  1}
\psline[plotstyle=line,linejoin=1,showpoints=false,dotstyle=*,dotsize=\MarkerSize,linestyle=solid,linewidth=\LineWidth,linecolor=color71.0027]
(49.100000,0.000000)(50.000000,0.000000)
\psline[plotstyle=line,linejoin=1,showpoints=true,dotstyle=*,dotsize=\MarkerSize,linestyle=solid,linewidth=\LineWidth,linecolor=color71.0027]
(0.100000,-20.709633)(1.100000,-15.612353)(2.100000,-11.600234)(3.100000,-7.600504)(4.100000,-3.626957)
(5.100000,-0.513015)(6.100000,-0.012203)(7.100000,-0.000225)(8.100000,-0.000004)(9.100000,-0.000000)
(10.100000,-0.000000)(11.100000,-0.000000)(12.100000,-0.000000)(13.100000,-0.000000)(14.100000,-0.000000)
(15.100000,0.000000)(16.100000,0.000000)(17.100000,0.000000)(18.100000,0.000000)(19.100000,0.000000)
(20.100000,0.000000)(21.100000,0.000000)(22.100000,0.000000)(23.100000,0.000000)(24.100000,0.000000)
(25.100000,0.000000)(26.100000,0.000000)(27.100000,0.000000)(28.100000,0.000000)(29.100000,0.000000)
(30.100000,0.000000)(31.100000,0.000000)(32.100000,0.000000)(33.100000,0.000000)(34.100000,0.000000)
(35.100000,0.000000)(36.100000,0.000000)(37.100000,0.000000)(38.100000,0.000000)(39.100000,0.000000)
(40.100000,0.000000)(41.100000,0.000000)(42.100000,0.000000)(43.100000,0.000000)(44.100000,0.000000)
(45.100000,0.000000)(46.100000,0.000000)(47.100000,0.000000)(48.100000,0.000000)(49.100000,0.000000)

\newrgbcolor{color72.0022}{0         0.5           0}
\psline[plotstyle=line,linejoin=1,showpoints=false,dotstyle=Bsquare,dotsize=\MarkerSize,linestyle=solid,linewidth=\LineWidth,linecolor=color72.0022]
(49.100000,-0.000000)(50.000000,-0.000000)
\psline[plotstyle=line,linejoin=1,showpoints=true,dotstyle=Bsquare,dotsize=\MarkerSize,linestyle=solid,linewidth=\LineWidth,linecolor=color72.0022]
(0.100000,-35.869723)(1.100000,-31.171509)(2.100000,-27.650643)(3.100000,-24.150019)(4.100000,-20.650001)
(5.100000,-17.150000)(6.100000,-13.650001)(7.100000,-10.150039)(8.100000,-6.651293)(9.100000,-3.191959)
(10.100000,-0.533382)(11.100000,-0.021056)(12.100000,-0.000642)(13.100000,-0.000019)(14.100000,-0.000001)
(15.100000,-0.000000)(16.100000,-0.000000)(17.100000,-0.000000)(18.100000,-0.000000)(19.100000,-0.000000)
(20.100000,-0.000000)(21.100000,-0.000000)(22.100000,-0.000000)(23.100000,-0.000000)(24.100000,-0.000000)
(25.100000,-0.000000)(26.100000,-0.000000)(27.100000,-0.000000)(28.100000,-0.000000)(29.100000,-0.000000)
(30.100000,-0.000000)(31.100000,-0.000000)(32.100000,-0.000000)(33.100000,-0.000000)(34.100000,-0.000000)
(35.100000,-0.000000)(36.100000,-0.000000)(37.100000,-0.000000)(38.100000,-0.000000)(39.100000,-0.000000)
(40.100000,-0.000000)(41.100000,-0.000000)(42.100000,-0.000000)(43.100000,-0.000000)(44.100000,-0.000000)
(45.100000,-0.000000)(46.100000,-0.000000)(47.100000,-0.000000)(48.100000,-0.000000)(49.100000,-0.000000)

\newrgbcolor{color73.0022}{1  0  0}
\psline[plotstyle=line,linejoin=1,showpoints=false,dotstyle=Bo,dotsize=\MarkerSize,linestyle=solid,linewidth=\LineWidth,linecolor=color73.0022]
(49.100000,0.000000)(50.000000,0.000000)
\psline[plotstyle=line,linejoin=1,showpoints=true,dotstyle=Bo,dotsize=\MarkerSize,linestyle=solid,linewidth=\LineWidth,linecolor=color73.0022]
(0.100000,-46.050226)(1.100000,-41.737581)(2.100000,-38.701838)(3.100000,-35.700091)(4.100000,-32.700005)
(5.100000,-29.700000)(6.100000,-26.700000)(7.100000,-23.700000)(8.100000,-20.700000)(9.100000,-17.700000)
(10.100000,-14.700000)(11.100000,-11.700008)(12.100000,-8.700167)(13.100000,-5.703340)(14.100000,-2.765044)
(15.100000,-0.554355)(16.100000,-0.036219)(17.100000,-0.001835)(18.100000,-0.000091)(19.100000,-0.000005)
(20.100000,-0.000000)(21.100000,-0.000000)(22.100000,-0.000000)(23.100000,-0.000000)(24.100000,-0.000000)
(25.100000,-0.000000)(26.100000,-0.000000)(27.100000,-0.000000)(28.100000,0.000000)(29.100000,0.000000)
(30.100000,0.000000)(31.100000,0.000000)(32.100000,0.000000)(33.100000,0.000000)(34.100000,0.000000)
(35.100000,0.000000)(36.100000,0.000000)(37.100000,0.000000)(38.100000,0.000000)(39.100000,0.000000)
(40.100000,0.000000)(41.100000,0.000000)(42.100000,0.000000)(43.100000,0.000000)(44.100000,0.000000)
(45.100000,0.000000)(46.100000,0.000000)(47.100000,0.000000)(48.100000,0.000000)(49.100000,0.000000)

\newrgbcolor{color74.0022}{0        0.75        0.75}
\psline[plotstyle=line,linejoin=1,showpoints=false,dotstyle=Bpentagon,dotsize=\MarkerSize,linestyle=solid,linewidth=\LineWidth,linecolor=color74.0022]
(49.100000,0.000000)(50.000000,0.000000)
\psline[plotstyle=line,linejoin=1,showpoints=true,dotstyle=Bpentagon,dotsize=\MarkerSize,linestyle=solid,linewidth=\LineWidth,linecolor=color74.0022]
(0.100000,-51.258692)(1.100000,-47.316063)(2.100000,-44.755261)(3.100000,-42.250431)(4.100000,-39.750035)
(5.100000,-37.250003)(6.100000,-34.750000)(7.100000,-32.250000)(8.100000,-29.750000)(9.100000,-27.250000)
(10.100000,-24.750000)(11.100000,-22.250000)(12.100000,-19.750000)(13.100000,-17.250000)(14.100000,-14.750000)
(15.100000,-12.250005)(16.100000,-9.750058)(17.100000,-7.250710)(18.100000,-4.758614)(19.100000,-2.350207)
(20.100000,-0.575939)(21.100000,-0.061968)(22.100000,-0.005234)(23.100000,-0.000431)(24.100000,-0.000035)
(25.100000,-0.000003)(26.100000,-0.000000)(27.100000,-0.000000)(28.100000,-0.000000)(29.100000,-0.000000)
(30.100000,-0.000000)(31.100000,-0.000000)(32.100000,-0.000000)(33.100000,-0.000000)(34.100000,-0.000000)
(35.100000,0.000000)(36.100000,0.000000)(37.100000,0.000000)(38.100000,0.000000)(39.100000,0.000000)
(40.100000,0.000000)(41.100000,0.000000)(42.100000,0.000000)(43.100000,0.000000)(44.100000,0.000000)
(45.100000,0.000000)(46.100000,0.000000)(47.100000,0.000000)(48.100000,0.000000)(49.100000,0.000000)

\newrgbcolor{color75.0022}{0.75           0        0.75}
\psline[plotstyle=line,linejoin=1,showpoints=false,dotstyle=Basterisk,dotsize=\MarkerSize,linestyle=solid,linewidth=\LineWidth,linecolor=color75.0022]
(49.100000,-0.000000)(50.000000,-0.000000)
\psline[plotstyle=line,linejoin=1,showpoints=true,dotstyle=Basterisk,dotsize=\MarkerSize,linestyle=solid,linewidth=\LineWidth,linecolor=color75.0022]
(0.100000,-39.321183)(1.100000,-36.063255)(2.100000,-34.393797)(3.100000,-32.859608)(4.100000,-31.352136)
(5.100000,-29.850476)(6.100000,-28.350106)(7.100000,-26.850024)(8.100000,-25.350005)(9.100000,-23.850001)
(10.100000,-22.350000)(11.100000,-20.850000)(12.100000,-19.350000)(13.100000,-17.850000)(14.100000,-16.350000)
(15.100000,-14.850000)(16.100000,-13.350002)(17.100000,-11.850007)(18.100000,-10.350032)(19.100000,-8.850143)
(20.100000,-7.350642)(21.100000,-5.852876)(22.100000,-4.362824)(23.100000,-2.906233)(24.100000,-1.580509)
(25.100000,-0.620957)(26.100000,-0.175674)(27.100000,-0.041959)(28.100000,-0.009516)(29.100000,-0.002131)
(30.100000,-0.000476)(31.100000,-0.000106)(32.100000,-0.000024)(33.100000,-0.000005)(34.100000,-0.000001)
(35.100000,-0.000000)(36.100000,-0.000000)(37.100000,-0.000000)(38.100000,-0.000000)(39.100000,-0.000000)
(40.100000,-0.000000)(41.100000,-0.000000)

\newrgbcolor{color76.0022}{0.75        0.75           0}
\psline[plotstyle=line,linejoin=1,showpoints=false,dotstyle=B+,dotsize=\MarkerSize,linestyle=solid,linewidth=\LineWidth,linecolor=color76.0022]
(49.100000,-0.000000)(50.000000,-0.000000)
\psline[plotstyle=line,linejoin=1,showpoints=true,dotstyle=B+,dotsize=\MarkerSize,linestyle=solid,linewidth=\LineWidth,linecolor=color76.0022]
(0.100000,-32.252168)(1.100000,-29.304772)(2.100000,-28.030629)(3.100000,-26.946095)(4.100000,-25.916712)
(5.100000,-24.906115)(6.100000,-23.902245)(7.100000,-22.900825)(8.100000,-21.900304)(9.100000,-20.900112)
(10.100000,-19.900041)(11.100000,-18.900015)(12.100000,-17.900006)(13.100000,-16.900002)(14.100000,-15.900001)
(15.100000,-14.900001)(16.100000,-13.900001)(17.100000,-12.900003)(18.100000,-11.900007)(19.100000,-10.900018)
(20.100000,-9.900050)(21.100000,-8.900136)(22.100000,-7.900371)(23.100000,-6.901007)(24.100000,-5.902736)
(25.100000,-4.907419)(26.100000,-3.920040)(27.100000,-2.953563)(28.100000,-2.039387)(29.100000,-1.241154)
(30.100000,-0.644397)(31.100000,-0.287335)(32.100000,-0.115520)(33.100000,-0.044064)(34.100000,-0.016437)
(35.100000,-0.006078)(36.100000,-0.002240)(37.100000,-0.000825)(38.100000,-0.000303)(39.100000,-0.000112)
(40.100000,-0.000041)(41.100000,-0.000015)(42.100000,-0.000006)(43.100000,-0.000002)(44.100000,-0.000001)

{ \small 
\rput[tr](48.800000,-1.825767){%
\psshadowbox[framesep=0pt,linewidth=\AxesLineWidth]{\psframebox*{\begin{tabular}{l}
\Rnode{a1}{\hspace*{0.0ex}} \hspace*{0.4cm} \Rnode{a2}{~~user 1} \\
\Rnode{a3}{\hspace*{0.0ex}} \hspace*{0.4cm} \Rnode{a4}{~~user 2} \\
\Rnode{a5}{\hspace*{0.0ex}} \hspace*{0.4cm} \Rnode{a6}{~~user 3} \\
\Rnode{a7}{\hspace*{0.0ex}} \hspace*{0.4cm} \Rnode{a8}{~~user 4} \\
\Rnode{a9}{\hspace*{0.0ex}} \hspace*{0.4cm} \Rnode{a10}{~~user 5} \\
\Rnode{a11}{\hspace*{0.0ex}} \hspace*{0.4cm} \Rnode{a12}{~~user 6} \\
\end{tabular}}
\ncline[linestyle=solid,linewidth=\LineWidth,linecolor=color71.0027]{a1}{a2} \ncput{\psdot[dotstyle=*,dotsize=\MarkerSize,linecolor=color71.0027]}
\ncline[linestyle=solid,linewidth=\LineWidth,linecolor=color72.0022]{a3}{a4} \ncput{\psdot[dotstyle=Bsquare,dotsize=\MarkerSize,linecolor=color72.0022]}
\ncline[linestyle=solid,linewidth=\LineWidth,linecolor=color73.0022]{a5}{a6} \ncput{\psdot[dotstyle=Bo,dotsize=\MarkerSize,linecolor=color73.0022]}
\ncline[linestyle=solid,linewidth=\LineWidth,linecolor=color74.0022]{a7}{a8} \ncput{\psdot[dotstyle=Bpentagon,dotsize=\MarkerSize,linecolor=color74.0022]}
\ncline[linestyle=solid,linewidth=\LineWidth,linecolor=color75.0022]{a9}{a10} \ncput{\psdot[dotstyle=Basterisk,dotsize=\MarkerSize,linecolor=color75.0022]}
\ncline[linestyle=solid,linewidth=\LineWidth,linecolor=color76.0022]{a11}{a12} \ncput{\psdot[dotstyle=B+,dotsize=\MarkerSize,linecolor=color76.0022]}
}%
}%
} 

\end{pspicture}%

\caption{The natural logarithm of sigmoidal-like utility functions $\log U_i(\gamma_i(P_i))$ which are strictly concave. Thus all modulation schemes considered can be represented by sigmoidal-like utility functions that are strictly concave.}
\label{fig:log_sigmoid}
\end{figure}
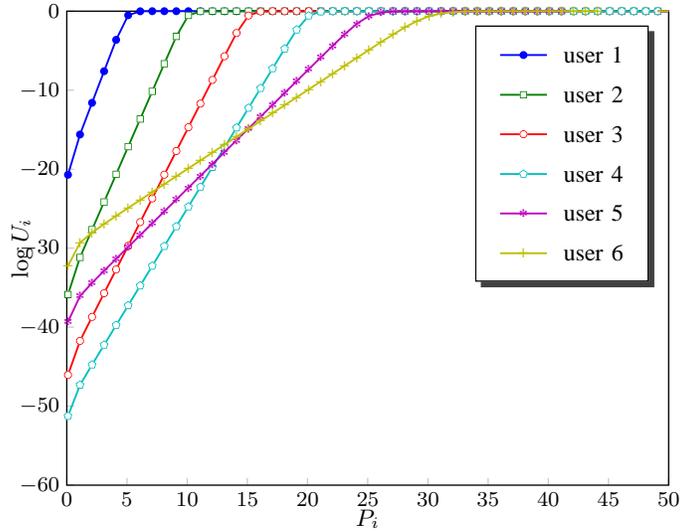
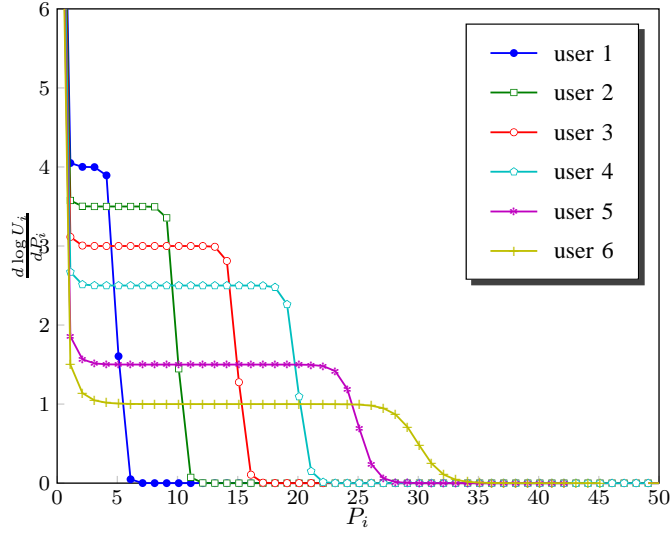
\begin{figure}[!t]
\centering
%
\psset{xunit=0.020000\plotwidth,yunit=0.131452\plotwidth}%
\begin{pspicture}(-4.262673,-0.666667)(50.921659,6.140351)%


\psline[linewidth=\AxesLineWidth,linecolor=GridColor](0.000000,0.000000)(0.000000,0.091288)
\psline[linewidth=\AxesLineWidth,linecolor=GridColor](5.000000,0.000000)(5.000000,0.091288)
\psline[linewidth=\AxesLineWidth,linecolor=GridColor](10.000000,0.000000)(10.000000,0.091288)
\psline[linewidth=\AxesLineWidth,linecolor=GridColor](15.000000,0.000000)(15.000000,0.091288)
\psline[linewidth=\AxesLineWidth,linecolor=GridColor](20.000000,0.000000)(20.000000,0.091288)
\psline[linewidth=\AxesLineWidth,linecolor=GridColor](25.000000,0.000000)(25.000000,0.091288)
\psline[linewidth=\AxesLineWidth,linecolor=GridColor](30.000000,0.000000)(30.000000,0.091288)
\psline[linewidth=\AxesLineWidth,linecolor=GridColor](35.000000,0.000000)(35.000000,0.091288)
\psline[linewidth=\AxesLineWidth,linecolor=GridColor](40.000000,0.000000)(40.000000,0.091288)
\psline[linewidth=\AxesLineWidth,linecolor=GridColor](45.000000,0.000000)(45.000000,0.091288)
\psline[linewidth=\AxesLineWidth,linecolor=GridColor](50.000000,0.000000)(50.000000,0.091288)
\psline[linewidth=\AxesLineWidth,linecolor=GridColor](0.000000,0.000000)(0.600000,0.000000)
\psline[linewidth=\AxesLineWidth,linecolor=GridColor](0.000000,1.000000)(0.600000,1.000000)
\psline[linewidth=\AxesLineWidth,linecolor=GridColor](0.000000,2.000000)(0.600000,2.000000)
\psline[linewidth=\AxesLineWidth,linecolor=GridColor](0.000000,3.000000)(0.600000,3.000000)
\psline[linewidth=\AxesLineWidth,linecolor=GridColor](0.000000,4.000000)(0.600000,4.000000)
\psline[linewidth=\AxesLineWidth,linecolor=GridColor](0.000000,5.000000)(0.600000,5.000000)
\psline[linewidth=\AxesLineWidth,linecolor=GridColor](0.000000,6.000000)(0.600000,6.000000)

{ \footnotesize 
\rput[t](0.000000,-0.091288){$0$}
\rput[t](5.000000,-0.091288){$5$}
\rput[t](10.000000,-0.091288){$10$}
\rput[t](15.000000,-0.091288){$15$}
\rput[t](20.000000,-0.091288){$20$}
\rput[t](25.000000,-0.091288){$25$}
\rput[t](30.000000,-0.091288){$30$}
\rput[t](35.000000,-0.091288){$35$}
\rput[t](40.000000,-0.091288){$40$}
\rput[t](45.000000,-0.091288){$45$}
\rput[t](50.000000,-0.091288){$50$}
\rput[r](-0.600000,0.000000){$0$}
\rput[r](-0.600000,1.000000){$1$}
\rput[r](-0.600000,2.000000){$2$}
\rput[r](-0.600000,3.000000){$3$}
\rput[r](-0.600000,4.000000){$4$}
\rput[r](-0.600000,5.000000){$5$}
\rput[r](-0.600000,6.000000){$6$}
} 

\psframe[linewidth=\AxesLineWidth,dimen=middle](0.000000,0.000000)(50.000000,6.000000)

{ \small 
\rput[b](25.000000,-0.666667){
\begin{tabular}{c}
$P_i$\\
\end{tabular}
}

\rput[t]{90}(-4.262673,3.000000){
\begin{tabular}{c}
$\frac{d \log U_i}{d P_i}$\\
\end{tabular}
}
} 

\newrgbcolor{color787.0049}{0  0  1}
\psline[plotstyle=line,linejoin=1,showpoints=false,dotstyle=*,dotsize=\MarkerSize,linestyle=solid,linewidth=\LineWidth,linecolor=color787.0049]
(0.858726,6.000000)(1.100000,4.049719)
\psline[plotstyle=line,linejoin=1,showpoints=false,dotstyle=*,dotsize=\MarkerSize,linestyle=solid,linewidth=\LineWidth,linecolor=color787.0049]
(49.100000,0.000000)(50.000000,0.000000)
\psline[plotstyle=line,linejoin=1,showpoints=true,dotstyle=*,dotsize=\MarkerSize,linestyle=solid,linewidth=\LineWidth,linecolor=color787.0049]
(1.100000,4.049719)(2.100000,4.000863)(3.100000,3.998016)(4.100000,3.893612)(5.100000,1.605249)
(6.100000,0.048514)(7.100000,0.000899)(8.100000,0.000016)(9.100000,0.000000)(10.100000,0.000000)
(11.100000,0.000000)(12.100000,0.000000)(13.100000,0.000000)(14.100000,0.000000)(15.100000,0.000000)
(16.100000,0.000000)(17.100000,0.000000)(18.100000,0.000000)(19.100000,0.000000)(20.100000,0.000000)
(21.100000,0.000000)(22.100000,0.000000)(23.100000,0.000000)(24.100000,0.000000)(25.100000,0.000000)
(26.100000,0.000000)(27.100000,0.000000)(28.100000,0.000000)(29.100000,0.000000)(30.100000,0.000000)
(31.100000,0.000000)(32.100000,0.000000)(33.100000,0.000000)(34.100000,0.000000)(35.100000,0.000000)
(36.100000,0.000000)(37.100000,0.000000)(38.100000,0.000000)(39.100000,0.000000)(40.100000,0.000000)
(41.100000,0.000000)(42.100000,0.000000)(43.100000,0.000000)(44.100000,0.000000)(45.100000,0.000000)
(46.100000,0.000000)(47.100000,0.000000)(48.100000,0.000000)(49.100000,0.000000)

\newrgbcolor{color788.0044}{0         0.5           0}
\psline[plotstyle=line,linejoin=1,showpoints=false,dotstyle=Bsquare,dotsize=\MarkerSize,linestyle=solid,linewidth=\LineWidth,linecolor=color788.0044]
(0.807109,6.000000)(1.100000,3.576098)
\psline[plotstyle=line,linejoin=1,showpoints=false,dotstyle=Bsquare,dotsize=\MarkerSize,linestyle=solid,linewidth=\LineWidth,linecolor=color788.0044]
(49.100000,0.000000)(50.000000,0.000000)
\psline[plotstyle=line,linejoin=1,showpoints=true,dotstyle=Bsquare,dotsize=\MarkerSize,linestyle=solid,linewidth=\LineWidth,linecolor=color788.0044]
(1.100000,3.576098)(2.100000,3.502251)(3.100000,3.500068)(4.100000,3.500002)(5.100000,3.500000)
(6.100000,3.499996)(7.100000,3.499863)(8.100000,3.495477)(9.100000,3.356181)(10.100000,1.446838)
(11.100000,0.072927)(12.100000,0.002248)(13.100000,0.000068)(14.100000,0.000002)(15.100000,0.000000)
(16.100000,0.000000)(17.100000,0.000000)(18.100000,0.000000)(19.100000,0.000000)(20.100000,0.000000)
(21.100000,0.000000)(22.100000,0.000000)(23.100000,0.000000)(24.100000,0.000000)(25.100000,0.000000)
(26.100000,0.000000)(27.100000,0.000000)(28.100000,0.000000)(29.100000,0.000000)(30.100000,0.000000)
(31.100000,0.000000)(32.100000,0.000000)(33.100000,0.000000)(34.100000,0.000000)(35.100000,0.000000)
(36.100000,0.000000)(37.100000,0.000000)(38.100000,0.000000)(39.100000,0.000000)(40.100000,0.000000)
(41.100000,0.000000)(42.100000,0.000000)(43.100000,0.000000)(44.100000,0.000000)(45.100000,0.000000)
(46.100000,0.000000)(47.100000,0.000000)(48.100000,0.000000)(49.100000,0.000000)

\newrgbcolor{color789.0044}{1  0  0}
\psline[plotstyle=line,linejoin=1,showpoints=false,dotstyle=Bo,dotsize=\MarkerSize,linestyle=solid,linewidth=\LineWidth,linecolor=color789.0044]
(0.758970,6.000000)(1.100000,3.114887)
\psline[plotstyle=line,linejoin=1,showpoints=false,dotstyle=Bo,dotsize=\MarkerSize,linestyle=solid,linewidth=\LineWidth,linecolor=color789.0044]
(49.100000,0.000000)(50.000000,0.000000)
\psline[plotstyle=line,linejoin=1,showpoints=true,dotstyle=Bo,dotsize=\MarkerSize,linestyle=solid,linewidth=\LineWidth,linecolor=color789.0044]
(1.100000,3.114887)(2.100000,3.005519)(3.100000,3.000274)(4.100000,3.000014)(5.100000,3.000001)
(6.100000,3.000000)(7.100000,3.000000)(8.100000,3.000000)(9.100000,3.000000)(10.100000,2.999999)
(11.100000,2.999975)(12.100000,2.999500)(13.100000,2.989996)(14.100000,2.811080)(15.100000,1.276672)
(16.100000,0.106714)(17.100000,0.005499)(18.100000,0.000274)(19.100000,0.000014)(20.100000,0.000001)
(21.100000,0.000000)(22.100000,0.000000)(23.100000,0.000000)(24.100000,0.000000)(25.100000,0.000000)
(26.100000,0.000000)(27.100000,0.000000)(28.100000,0.000000)(29.100000,0.000000)(30.100000,0.000000)
(31.100000,0.000000)(32.100000,0.000000)(33.100000,0.000000)(34.100000,0.000000)(35.100000,0.000000)
(36.100000,0.000000)(37.100000,0.000000)(38.100000,0.000000)(39.100000,0.000000)(40.100000,0.000000)
(41.100000,0.000000)(42.100000,0.000000)(43.100000,0.000000)(44.100000,0.000000)(45.100000,0.000000)
(46.100000,0.000000)(47.100000,0.000000)(48.100000,0.000000)(49.100000,0.000000)

\newrgbcolor{color790.0044}{0        0.75        0.75}
\psline[plotstyle=line,linejoin=1,showpoints=false,dotstyle=Bpentagon,dotsize=\MarkerSize,linestyle=solid,linewidth=\LineWidth,linecolor=color790.0044]
(0.714280,6.000000)(1.100000,2.670734)
\psline[plotstyle=line,linejoin=1,showpoints=false,dotstyle=Bpentagon,dotsize=\MarkerSize,linestyle=solid,linewidth=\LineWidth,linecolor=color790.0044]
(49.100000,0.000000)(50.000000,0.000000)
\psline[plotstyle=line,linejoin=1,showpoints=true,dotstyle=Bpentagon,dotsize=\MarkerSize,linestyle=solid,linewidth=\LineWidth,linecolor=color790.0044]
(1.100000,2.670734)(2.100000,2.513188)(3.100000,2.501077)(4.100000,2.500088)(5.100000,2.500007)
(6.100000,2.500001)(7.100000,2.500000)(8.100000,2.500000)(9.100000,2.500000)(10.100000,2.500000)
(11.100000,2.500000)(12.100000,2.500000)(13.100000,2.500000)(14.100000,2.499999)(15.100000,2.499988)
(16.100000,2.499854)(17.100000,2.498226)(18.100000,2.478556)(19.100000,2.261626)(20.100000,1.094559)
(21.100000,0.150217)(22.100000,0.013050)(23.100000,0.001076)(24.100000,0.000088)(25.100000,0.000007)
(26.100000,0.000001)(27.100000,0.000000)(28.100000,0.000000)(29.100000,0.000000)(30.100000,0.000000)
(31.100000,0.000000)(32.100000,0.000000)(33.100000,0.000000)(34.100000,0.000000)(35.100000,0.000000)
(36.100000,0.000000)(37.100000,0.000000)(38.100000,0.000000)(39.100000,0.000000)(40.100000,0.000000)
(41.100000,0.000000)(42.100000,0.000000)(43.100000,0.000000)(44.100000,0.000000)(45.100000,0.000000)
(46.100000,0.000000)(47.100000,0.000000)(48.100000,0.000000)(49.100000,0.000000)

\newrgbcolor{color791.0044}{0.75           0        0.75}
\psline[plotstyle=line,linejoin=1,showpoints=false,dotstyle=Basterisk,dotsize=\MarkerSize,linestyle=solid,linewidth=\LineWidth,linecolor=color791.0044]
(0.635081,6.000000)(1.100000,1.856550)
\psline[plotstyle=line,linejoin=1,showpoints=false,dotstyle=Basterisk,dotsize=\MarkerSize,linestyle=solid,linewidth=\LineWidth,linecolor=color791.0044]
(49.100000,0.000000)(50.000000,0.000000)
\psline[plotstyle=line,linejoin=1,showpoints=true,dotstyle=Basterisk,dotsize=\MarkerSize,linestyle=solid,linewidth=\LineWidth,linecolor=color791.0044]
(1.100000,1.856550)(2.100000,1.567156)(3.100000,1.514481)(4.100000,1.503207)(5.100000,1.500714)
(6.100000,1.500159)(7.100000,1.500036)(8.100000,1.500008)(9.100000,1.500002)(10.100000,1.500000)
(11.100000,1.500000)(12.100000,1.500000)(13.100000,1.500000)(14.100000,1.500000)(15.100000,1.499999)
(16.100000,1.499998)(17.100000,1.499989)(18.100000,1.499952)(19.100000,1.499785)(20.100000,1.499037)
(21.100000,1.495693)(22.100000,1.480886)(23.100000,1.417978)(24.100000,1.191194)(25.100000,0.693855)
(26.100000,0.241663)(27.100000,0.061637)(28.100000,0.014207)(29.100000,0.003193)(30.100000,0.000714)
(31.100000,0.000159)(32.100000,0.000036)(33.100000,0.000008)(34.100000,0.000002)(35.100000,0.000000)
(36.100000,0.000000)(37.100000,0.000000)(38.100000,0.000000)(39.100000,0.000000)(40.100000,0.000000)
(41.100000,0.000000)(42.100000,0.000000)

\newrgbcolor{color793.0039}{0.75        0.75           0}
\psline[plotstyle=line,linejoin=1,showpoints=false,dotstyle=B+,dotsize=\MarkerSize,linestyle=solid,linewidth=\LineWidth,linecolor=color793.0039]
(0.600405,6.000000)(1.100000,1.498961)
\psline[plotstyle=line,linejoin=1,showpoints=false,dotstyle=B+,dotsize=\MarkerSize,linestyle=solid,linewidth=\LineWidth,linecolor=color793.0039]
(49.100000,0.000000)(50.000000,0.000000)
\psline[plotstyle=line,linejoin=1,showpoints=true,dotstyle=B+,dotsize=\MarkerSize,linestyle=solid,linewidth=\LineWidth,linecolor=color793.0039]
(1.100000,1.498961)(2.100000,1.139545)(3.100000,1.047174)(4.100000,1.016852)(5.100000,1.006134)
(6.100000,1.002248)(7.100000,1.000826)(8.100000,1.000304)(9.100000,1.000112)(10.100000,1.000041)
(11.100000,1.000015)(12.100000,1.000006)(13.100000,1.000002)(14.100000,1.000001)(15.100000,1.000000)
(16.100000,0.999999)(17.100000,0.999998)(18.100000,0.999993)(19.100000,0.999982)(20.100000,0.999950)
(21.100000,0.999864)(22.100000,0.999629)(23.100000,0.998993)(24.100000,0.997268)(25.100000,0.992608)
(26.100000,0.980160)(27.100000,0.947846)(28.100000,0.869892)(29.100000,0.710950)(30.100000,0.475021)
(31.100000,0.249740)(32.100000,0.109097)(33.100000,0.043107)(34.100000,0.016302)(35.100000,0.006060)
(36.100000,0.002238)(37.100000,0.000824)(38.100000,0.000303)(39.100000,0.000112)(40.100000,0.000041)
(41.100000,0.000015)(42.100000,0.000006)(43.100000,0.000002)(44.100000,0.000001)(45.100000,0.000000)

{ \small 
\rput[tr](48.800000,5.817423){%
\psshadowbox[framesep=0pt,linewidth=\AxesLineWidth]{\psframebox*{\begin{tabular}{l}
\Rnode{a1}{\hspace*{0.0ex}} \hspace*{0.4cm} \Rnode{a2}{~~user 1} \\
\Rnode{a3}{\hspace*{0.0ex}} \hspace*{0.4cm} \Rnode{a4}{~~user 2} \\
\Rnode{a5}{\hspace*{0.0ex}} \hspace*{0.4cm} \Rnode{a6}{~~user 3} \\
\Rnode{a7}{\hspace*{0.0ex}} \hspace*{0.4cm} \Rnode{a8}{~~user 4} \\
\Rnode{a9}{\hspace*{0.0ex}} \hspace*{0.4cm} \Rnode{a10}{~~user 5} \\
\Rnode{a11}{\hspace*{0.0ex}} \hspace*{0.4cm} \Rnode{a12}{~~user 6} \\
\end{tabular}}
\ncline[linestyle=solid,linewidth=\LineWidth,linecolor=color787.0049]{a1}{a2} \ncput{\psdot[dotstyle=*,dotsize=\MarkerSize,linecolor=color787.0049]}
\ncline[linestyle=solid,linewidth=\LineWidth,linecolor=color788.0044]{a3}{a4} \ncput{\psdot[dotstyle=Bsquare,dotsize=\MarkerSize,linecolor=color788.0044]}
\ncline[linestyle=solid,linewidth=\LineWidth,linecolor=color789.0044]{a5}{a6} \ncput{\psdot[dotstyle=Bo,dotsize=\MarkerSize,linecolor=color789.0044]}
\ncline[linestyle=solid,linewidth=\LineWidth,linecolor=color790.0044]{a7}{a8} \ncput{\psdot[dotstyle=Bpentagon,dotsize=\MarkerSize,linecolor=color790.0044]}
\ncline[linestyle=solid,linewidth=\LineWidth,linecolor=color791.0044]{a9}{a10} \ncput{\psdot[dotstyle=Basterisk,dotsize=\MarkerSize,linecolor=color791.0044]}
\ncline[linestyle=solid,linewidth=\LineWidth,linecolor=color793.0039]{a11}{a12} \ncput{\psdot[dotstyle=B+,dotsize=\MarkerSize,linecolor=color793.0039]}
}%
}%
} 

\end{pspicture}%
\caption{The first derivative of the natural logarithm of sigmoidal-like utility functions  $\frac{\partial \log U_i(\gamma_i(P_i))}{\partial P_i}$.}
\label{fig:diff_log_sigmoid}
\end{figure}

\section{The Dual Problem}\label{sec:Dual}

The key to a distributed and a decentralized optimal solution of the primal problem in (\ref{eqn:opt_prob_fairness_mod}) is to convert it to the dual problem, similar to \cite{kelly98powercontrol} and \cite{Low99optimizationflow}. The optimization problem (\ref{eqn:opt_prob_fairness_mod}) can be divided into two simpler problems by using the dual problem.  We define the Lagrangian
\begin{equation}\label{eqn:lagrangian}
\begin{aligned}
L(\textbf{P},p) = & \sum_{i=1}^{M}{\log(U_i(\gamma_i(P_i)))-p(\sum_{i=1}^{M}P_i + z_i - P_T)}\\
                = &  \sum_{i=1}^{M}\Big({\log(U_i(\gamma_i(P_i)))-pP_i\Big)} + p \sum_{i=1}^{M} (P_T-z_i)\\
                = &  \sum_{i=1}^{M}L_i(P_i,p) + p \sum_{i=1}^{M} (P_T-z_i)\\
\end{aligned}
\end{equation}
where $z_i \geq 0$ is the slack variable and $p$ is Lagrange multiplier or the shadow price (i.e. the total price per unit power for all the $M$ channels). Therefore, the $i^{\text{th}}$ UE bid for power can be given by $w_i = p P_i$ and we have $\sum_{i=1}^{M}w_i = p \sum_{i=1}^{M}P_i$. The first term in equation (\ref{eqn:lagrangian}) is separable in $P_i$. So we have $\underset{\textbf{P}}\max \sum_{i=1}^{M}({\log(U_i(\gamma_i(P_i)))-pP_i)} = \sum_{i=1}^{M}\underset{{P_i}}\max({\log(U_i(\gamma_i(P_i)))-pP_i)}$.  
The dual problem objective function can be written as
\begin{equation}\label{eqn:dual_obj_fn}
\begin{aligned}
D(p) = & \underset{{\textbf{P}}}\max \:L(\textbf{P},p) \\
= &\sum_{i=1}^{M}\underset{{P_i}}\max\Big({\log(U_i(\gamma_i(P_i)))-pP_i\Big)} + p \sum_{i=1}^{M} (P_T-z_i)\\
= &\sum_{i=1}^{M}\underset{{P_i}}\max (L_i(P_i,p)) + p \sum_{i=1}^{M} (P_T-z_i).
\end{aligned}
\end{equation}
The dual problem is given by
\begin{equation}\label{eqn:dual_problem}
\begin{aligned}
& \underset{{p}}{\text{min}}
& & D(p) \\
& \text{subject to}
& & p \geq 0.
\end{aligned}
\end{equation}
So we have 
\begin{equation}\label{eqn:dual_max}
\frac{\partial D(p)}{\partial p} =  P_T -\sum_{i=1}^{M} (P_i -z_i) = 0
\end{equation}
substituting by $\sum_{i=1}^{M}w_i = p \sum_{i=1}^{M}P_i$ we have 
\begin{equation}\label{eqn:dual_new_obj}
p = \frac{\sum_{i=1}^{M}w_i}{P_T-\sum_{i=1}^{M} z_i} \cdot
\end{equation}
Now, we divide the primal problem (\ref{eqn:opt_prob_fairness_mod}) into two simpler optimization problems in the UEs and the BS. The $i^{\text{th}}$ UE optimization problem is given by: 
\begin{equation}\label{eqn:opt_prob_fairness_UE}
\begin{aligned}
& \underset{{P_i}}{\text{max}}
& & \log U_i(\gamma_i(P_i)) - p P_i \\
& \text{subject to}
& & p \geq 0\\
& & &  P_i \geq 0, \;\;\;\;\; \text{for} \; \; i = 1,2, ...,M \; \; \text{and} \; \; P_T \geq 0.
\end{aligned}
\end{equation}
The BS optimization problem is given by: 
\begin{equation}\label{eqn:opt_prob_fairness_eNodeB}
\begin{aligned}
& \underset{p}{\text{min}}
& & D(p) \\
& \text{subject to}
& & p \geq 0.\\
\end{aligned}
\end{equation}
The minimization of shadow price $p$ is achieved by the minimization of the slack variable $z_i \geq 0$ from equation (\ref{eqn:dual_new_obj}). Therefore, the maximum utilization of the available BS power is achieved by setting the slack variable $z_i = 0$. In this case, we replace the inequality in primal problem (\ref{eqn:opt_prob_fairness_mod}) constraint by  equality constraint and so we have $\sum_{i=1}^{M}w_i = p P_T$. Therefore, we have $p = \frac{\sum_{i=1}^{M}w_i}{P}$ where $w_i = p P_i$ is transmitted by the $i^{\text{th}}$ UE to the BS. The utility proportional fairness in the objective function of the optimization problem (\ref{eqn:opt_prob_fairness}) is guaranteed in the solution of the optimization problems (\ref{eqn:opt_prob_fairness_UE}) and (\ref{eqn:opt_prob_fairness_eNodeB}).

\section{Distributed Algorithm}\label{sec:Algorithm}

We can directly construct a distributed power allocation algorithm form the dual problem. The distributed power allocation algorithm is an iterative solution for allocating the network resources with bandwidth proportional fairness. Our algorithm allocates powers with utility proportional fairness, which is the objective of our new problem formulation. The algorithm is divided into an UE algorithm shown in Algorithm (\ref{alg:UE_first}) and an BS algorithm shown in Algorithm (\ref{alg:eNodeB_first}). For the Algorithm in (\ref{alg:UE_first}) and (\ref{alg:eNodeB_first}), each UE starts with an initial bid $w_i(1)$ which is transmitted to the BS. The BS calculates the difference between the received bid $w_i(n)$ and the previously received bid $w_i(n-1)$ and exits if it is less than a pre-specified threshold $\delta$. We set $w_i(0) = 0$. If the value is greater 
than the 
threshold $\delta$, BS calculates the shadow price $p(n) = \frac{\sum_{i=1}^{M}w_i(n)}{P}$ and sends that value to all the UEs. Each UE receives the shadow price to solve for the power $P_i$ that maximizes $\log U_i(\gamma_i(P_i)) - p(n)P_i$. That power is used to calculate the new bid $w_i(n)=p(n) P_{i}(n)$. Each UE sends the value of its new bid $w_i(n)$ to the BS. This process is repeated until $|w_i(n) -w_i(n-1)|$ is less than the pre-specified threshold $\delta$.


\begin{algorithm}
\caption{UE Algorithm}\label{alg:UE_first}
\begin{algorithmic}
\STATE {Send initial bid $w_i(1)$ to BS}
\LOOP
	\STATE {Receive shadow price $p(n)$ from BS}
	\IF {STOP from BS} %
		\STATE {Calculate allocated power $P_i ^{\text{opt}}=\frac{w_i(n)}{p(n)}$}
		\STATE {STOP}
	\ELSE
		\STATE {Solve $P_{i}(n) = \arg \underset{P_i}\max \Big(\log U_i(\gamma_i(P_i)) - p(n)P_i\Big)$}
		\STATE {Send new bid $w_i (n)= p(n) P_{i}(n)$ to BS}
	\ENDIF 
\ENDLOOP
\end{algorithmic}
\end{algorithm}


\begin{algorithm}
\caption{BS Algorithm}\label{alg:eNodeB_first}
\begin{algorithmic}
\LOOP
	\STATE {Receive bids $w_i(n)$ from UEs}
	\COMMENT{Let $w_i(0) = 0\:\:\forall i$}
			\IF {$|w_i(n) -w_i(n-1)|<\delta  \:\:\forall i$} %
	   		\STATE {Allocate powers, $P_{i}^{\text{opt}}=\frac{w_i(n)}{p(n)}$ to user $i$}  
	   		\STATE {STOP} 
		\ELSE
	\STATE {Calculate $p(n) = \frac{\sum_{i=1}^{M}w_i(n)}{P}$}
	\STATE {Send new shadow price $p(n)$ to all UEs}
	\ENDIF 
\ENDLOOP
\end{algorithmic}
\end{algorithm}

The solution $P_i$ of the optimization problem $P_{i}(n) = \arg \underset{P_i}\max \Big(\log U_i(\gamma_i(P_i)) - p(n)P_i\Big)$ in Algorithm (\ref{alg:UE_first}),  is the value of $P_i$ that solves equation $\frac{\partial \log U_i(\gamma_i(P_i))}{\partial P_i} = p(n)$. 


 \section{Convergence}\label{sec:conv_analy}
In this section, we present the convergence analysis of Algorithm (\ref{alg:UE_first}) and (\ref{alg:eNodeB_first}) for different values of $P_T$.
\begin{lem}\label{lem:slope_curve}
For sigmoidal-like utility function $U_i(\gamma_i(P_i))$,  the slope curvature function $\frac{\partial \log U_i(\gamma_i(P_i))}{\partial P_i}$ has an inflection point at $P_i = P_i^{s} \approx b_i$ and is convex for $P_i > P_i^{s}$.
\end{lem}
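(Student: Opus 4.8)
The plan is to analyze the \emph{slope curvature function} $g_i(P_i) := \frac{\partial \log U_i(\gamma_i(P_i))}{\partial P_i}$ directly: an ``inflection point'' means $g_i''(P_i)=0$ with a sign change, and ``convex for $P_i>P_i^s$'' means $g_i''(P_i)>0$ there, so what I am really after is the third derivative of $\log U_i$. First I would reuse the first-derivative expression already obtained in the proof of Lemma~\ref{lem:concavity}, rewritten in the compact variable $t := e^{-a_i(P_i-b_i)}$, so that $t=1$ exactly at $P_i=b_i$ and $\frac{dt}{dP_i}=-a_i t$. In this notation
\begin{equation*}
g_i = \frac{a_i d_i\, t}{1 - d_i(1+t)} + \frac{a_i t}{1+t}.
\end{equation*}

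Next I would differentiate twice in $P_i$, applying $\frac{dt}{dP_i}=-a_i t$ repeatedly. The second summand is dominant, and a short computation gives
\begin{equation*}
\frac{d^2}{dP_i^2}\!\left(\frac{a_i t}{1+t}\right) = \frac{a_i^3\, t(1-t)}{(1+t)^3},
\end{equation*}
whose sign is governed entirely by the factor $(1-t)$ because $t>0$. This vanishes precisely at $t=1$, i.e. at $P_i=b_i$, is positive for $t<1$ (that is $P_i>b_i$) and negative for $t>1$ (that is $P_i<b_i$). This already yields the inflection point at $b_i$ and the convexity claim in the idealized regime $d_i\to 0$, which is the relevant one since $d_i=\frac{1}{1+e^{a_ib_i}}$ is negligibly small for the parameter values used.

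To incorporate the genuine $d_i>0$ term and justify the ``$\approx$'', I would differentiate the first summand in the same way; the algebra collapses because $1 - d_i(1+t) + d_i t = 1 - d_i$, producing a contribution of the form $\dfrac{a_i^3\, d_i(1-d_i)\,t\,(1 - d_i + d_i t)}{\bigl(1 - d_i(1+t)\bigr)^3}$, which is strictly positive throughout the admissible range $0<P_i<P_T$, where $1 - d_i(1+t)>0$ as recorded in Lemma~\ref{lem:concavity}. Adding the two pieces, for $P_i>b_i$ both are positive, so $g_i''>0$ and $g_i$ is convex, which settles the convexity half of the statement. The zero of $g_i''$ no longer sits exactly at $t=1$: at $P_i=b_i$ the dominant term vanishes but the positive $O(d_i)$ term survives, so $g_i''(b_i)>0$ and the root is pushed to a slightly larger $t$, i.e. $P_i^s$ lies just below $b_i$ — exactly the asserted $P_i^s\approx b_i$.

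The main obstacle I anticipate is purely bookkeeping: carrying the $d_i$-term through the second differentiation without sign errors and confirming its strict positivity, together with making the ``$\approx b_i$'' statement precise. The cleanest route is the one above, namely to isolate the dominant $\frac{a_i t}{1+t}$ piece to pin the inflection at $t=1$, and then treat the $d_i$ term as a strictly positive perturbation that merely nudges the inflection point and never destroys convexity for $P_i>P_i^s$.
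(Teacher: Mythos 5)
Your proposal is correct and takes essentially the same route as the paper's own proof: both differentiate the slope function twice, split $\frac{\partial^2 S_i}{\partial P_i^2}$ into the $d_i$-dependent term and the pure-sigmoid term $\frac{a_i^3 t(1-t)}{(1+t)^3}$ (with $t=e^{-a_i(P_i-b_i)}$), read off the sign change of the latter exactly at $P_i=b_i$, and treat the former as a strictly positive perturbation that is negligible near $b_i$. If anything, your version is slightly sharper than the paper's, since you keep the $d_i$ term in closed form and observe $g_i''(b_i)>0$, which pins the inflection point just below $b_i$, whereas the paper argues via the limits $S^1_i\to\infty$ as $P_i\to 0$ and $S^1_i\to 0$ as $P_i\to b_i$ for $b_i\gg 1/a_i$.
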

\begin{proof}
For the sigmoidal-like function $U_i(\gamma_i(P_i)) = c_i\Big(\frac{1}{1+e^{-a_i(P_i-b_i)}}-d_i\Big)$, let $S_i(P_i) = \frac{\partial \log U_i(\gamma_i(P_i))}{\partial P_i}$ be the slope curvature function. Then, we have that 
\begin{equation}
\begin{aligned}\label{eqn:diff_slope}
\frac{\partial S_i}{\partial P_i} &= \frac{-a_i^2d_ie^{-a_i(P_i-b_i)}}{c_i\Big(1-d_i(1+e^{-a_i(P_i-b_i)})\Big)^2} - \frac{a_i^2e^{-a_i(P_i-b_i)}}{\Big(1+e^{-a_i(P_i-b_i)}\Big)^2}\\
\text{and}\\
\frac{\partial^2 S_i}{\partial P_i^2}& = \frac{a_i^3d_ie^{-a_i(P_i-b_i)}(1-d_i(1-e^{-a_i(P_i-b_i)}))}{c_i\Big(1-d_i(1+e^{-a_i(P_i-b_i)})\Big)^3} + \frac{a_i^3e^{-a_i(P_i-b_i)}(1-e^{-a_i(P_i-b_i)})}{\Big(1+e^{-a_i(P_i-b_i)}\Big)^3}.\\
\end{aligned}
\end{equation}
We analyze the curvature of the slope of the natural logarithm of sigmoidal-like utility function. For the first derivative, we have $\frac{\partial S_i}{\partial P_i}<0 \:\:\:\forall\: P_i$. The first term $S^1_i$ of $\frac{\partial^2 S_i}{\partial P_i^2}$ in equation (\ref{eqn:diff_slope}) can be written as
\begin{equation}\label{eqn:slope_fn}
S^1_i = \frac{a_i^3e^{a_ib_i}(e^{a_ib_i}+e^{-a_i(P_i-b_i)})}{(e^{a_ib_i}-e^{-a_i(P_i-b_i)})^3}
\end{equation}
and we have
\begin{equation}\label{eqn:slope_fn_term1}
\lim_{P_i \rightarrow 0} S^1_i = \infty, \:\: \text{and} \:\:\lim_{P_i \rightarrow b_i} S^1_i = 0 \:\:\text{for} \:\:b_i 	\gg \frac{1}{a_i}.
\end{equation}
For second term $S^2_i$ of $\frac{\partial^2 S_i}{\partial P_i^2}$ in equation (\ref{eqn:diff_slope}), we have the following properties
\begin{equation}\label{eqn:slope_fn_term2}
S^2_i (b_i) = 0, \:\:S^2_i (P_i>b_i) > 0, \:\: \text{and} \:\:S^2_i (P_i<b_i) < 0.
\end{equation}
From equation (\ref{eqn:slope_fn_term1}) and (\ref{eqn:slope_fn_term2}), $S_i$ has an inflection point at $P_i = P_i^{s}  	\approx b_i$. In addition, we have the curvature of $S_i$ changes from a convex function close to origin to a concave function before the inflection point $P_i = P_i^{s}$ then to a convex function after the inflection point. 
\end{proof}
\begin{cor}\label{cor:sig_convergence}
If $\sum_{i=1}^{M}P_i^{\text{inf}} \ll P_T$ then Algorithm in (\ref{alg:UE_first}) and  (\ref{alg:eNodeB_first}) converges to the global optimal powers which correspond to the steady state shadow price $p_{ss}< \frac{a_{i_{\max}} d_{i_{\max}} }{1-d_{i_{\max}} }+\frac{a_{i_{\max} }}{2}$ where $i_{\max} = \arg \max_i b_i$.
\end{cor}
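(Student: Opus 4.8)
The plan is to reduce the coupled UE--BS recursion to a scalar price iteration, locate its fixed point on the convex tails of the slope curves identified in Lemma~\ref{lem:slope_curve}, and then read off the price bound from monotonicity and local convergence from a contraction estimate.

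First I would eliminate the powers and work only with the price. Writing $S_i(P_i)=\frac{\partial \log U_i(\gamma_i(P_i))}{\partial P_i}$ for the slope curvature function, the UE step in Algorithm~(\ref{alg:UE_first}) solves $S_i(P_i)=p(n)$; by Lemma~\ref{lem:concavity} the map $S_i$ is strictly decreasing, so this has a unique solution $P_i(p(n))=S_i^{-1}(p(n))$ that decreases in the price. Inserting the returned bids $w_i=p(n)P_i(p(n))$ into the BS update $p=\frac{\sum_i w_i}{P_T}$ of Algorithm~(\ref{alg:eNodeB_first}) yields the scalar recursion $p(n+1)=p(n)\,D(p(n))/P_T$, where $D(p)=\sum_{i=1}^{M}S_i^{-1}(p)$ is the strictly decreasing aggregate demand. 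The steady-state price is therefore the unique root of $D(p_{ss})=P_T$, and the powers $P_i(p_{ss})$ are precisely the KKT point of the convex program of Theorem~\ref{thm:global_soln}.

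Next I would use the hypothesis $\sum_i P_i^{\text{inf}}\ll P_T$ to force the equilibrium onto the convex branch of every slope curve. I would argue by contradiction: if the most demanding user $i_{\max}=\arg\max_i b_i$ satisfied $P_{i_{\max}}(p_{ss})\le b_{i_{\max}}$, then $p_{ss}=S_{i_{\max}}(P_{i_{\max}})\ge S_{i_{\max}}(b_{i_{\max}})$; since $d_i=1/(1+e^{a_ib_i})$ decreases in $b_i$ one checks that $S_i(b_i)$ is smallest for the largest $b_i$, so $i_{\max}$ also minimizes $S_i(b_i)$ and hence $p_{ss}\ge S_j(b_j)$ for every $j$, which by monotonicity of each $S_j$ forces $P_j(p_{ss})\le b_j\approx P_j^{\text{inf}}$ for all $j$ and thus $P_T=D(p_{ss})\le\sum_j P_j^{\text{inf}}\ll P_T$, a contradiction. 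Hence $P_{i_{\max}}(p_{ss})>b_{i_{\max}}$ and, by Lemma~\ref{lem:slope_curve}, every operating point lies on the convex decreasing tail. I expect this step to be the main obstacle, since turning the informal ``$\ll$'' into the comparison above relies on the slope curves having comparable steepness (so that $S_i(b_i)$ is ordered by $b_i$, cf.\ Fig.~\ref{fig:diff_log_sigmoid}); absent that structure one user with an anomalously flat curve could stay below its inflection while the budget is absorbed elsewhere.

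Finally I would extract both conclusions. For the bound, applying the same budget argument at a threshold marginally above $b_{i_{\max}}$ forces the demand far enough past $b_{i_{\max}}$ that $x:=e^{-a_{i_{\max}}(P_{i_{\max}}-b_{i_{\max}})}\le 1-d_{i_{\max}}$, and under this I can bound the two summands of $S_{i_{\max}}$ separately, the second by $\frac{a_{i_{\max}}}{2}$ and the first by $\frac{a_{i_{\max}}d_{i_{\max}}}{1-d_{i_{\max}}}$, giving $p_{ss}=S_{i_{\max}}(P_{i_{\max}})<\frac{a_{i_{\max}}d_{i_{\max}}}{1-d_{i_{\max}}}+\frac{a_{i_{\max}}}{2}$. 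For convergence, I would linearize $F(p)=pD(p)/P_T$ at its fixed point: because $D'<0$ we have $F'(p_{ss})=1+p_{ss}D'(p_{ss})/P_T<1$, and on the convex tail $|D'(p_{ss})|=\sum_i 1/|S_i'(P_i(p_{ss}))|$ stays bounded while $P_T$ is large, so $p_{ss}\,|D'(p_{ss})|<2P_T$ and therefore $|F'(p_{ss})|<1$. The recursion is thus a local contraction toward $p_{ss}$, and since Theorem~\ref{thm:global_soln} identifies this fixed point with the unique global optimum, Algorithms~(\ref{alg:UE_first}) and~(\ref{alg:eNodeB_first}) converge to the global optimal powers.
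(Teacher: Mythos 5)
Your proposal takes a genuinely different and far more explicit route than the paper's own proof, which stays qualitative: the paper notes that the UE step solves $S_i(P_i)=p$, invokes Lemma \ref{lem:slope_curve} to place operating points on the convex branch past $b_i\approx P_i^{\text{inf}}$, asserts \emph{without calculation} that $\sum_i P_i^{\text{inf}}\ll P_T$ forces $P_i>b_i$ for every user, defers the actual convergence claim to the logarithmic-utility analyses of Kelly and Low, and then reads off $p_{ss}<S_{i_{\max}}(b_{i_{\max}})$. Your scalar recursion $p(n+1)=p(n)D(p(n))/P_T$ with $D(p)=\sum_i S_i^{-1}(p)$, the identification of its fixed point with the KKT point of Theorem \ref{thm:global_soln}, and the explicit contraction estimate appear nowhere in the paper. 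Notably, your derivation of the price bound is more careful than the paper's: evaluating $S_{i_{\max}}$ exactly at the inflection point gives $\frac{a_{i_{\max}}d_{i_{\max}}}{1-2d_{i_{\max}}}+\frac{a_{i_{\max}}}{2}$, not the constant in the statement (the paper's displayed evaluation contains an algebra slip); it is precisely your extra condition $x=e^{-a_{i_{\max}}(P_{i_{\max}}-b_{i_{\max}})}\le 1-d_{i_{\max}}$ that yields $\frac{a_{i_{\max}}d_{i_{\max}}x}{1-d_{i_{\max}}-d_{i_{\max}}x}\le\frac{a_{i_{\max}}d_{i_{\max}}(1-d_{i_{\max}})}{(1-d_{i_{\max}})^2}=\frac{a_{i_{\max}}d_{i_{\max}}}{1-d_{i_{\max}}}$, i.e.\ exactly the claimed bound.

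Two steps, however, have genuine gaps. First, the contradiction argument is broken independently of the ordering premise: you establish $p_{ss}\ge S_{i_{\max}}(b_{i_{\max}})$ and claim $S_{i_{\max}}(b_{i_{\max}})=\min_j S_j(b_j)$, but $p_{ss}\ge\min_j S_j(b_j)$ does \emph{not} imply $p_{ss}\ge S_j(b_j)$ for every $j$ --- that inference would require $i_{\max}$ to \emph{maximize} $S_j(b_j)$, not minimize it. Moreover the ordering premise itself fails in general: for $a_jb_j\gg1$ one has $S_j(b_j)\approx a_j/2$, so a flat user (small $a_j$) with small $b_j$ can have smaller $S_j(b_j)$ than a steep user with larger $b_j$. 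Both defects vanish if you restructure: if \emph{any} user $j$ had $P_j(p_{ss})\le b_j$, then $p_{ss}\ge S_j(b_j)\ge\min_k S_k(b_k)=:p_0>0$, a constant independent of $P_T$, whence $P_T=D(p_{ss})\le D(p_0)$, also a constant --- contradiction once $P_T$ is large. This puts every user past its inflection point and gives $p_{ss}<\min_k S_k(b_k)\le S_{i_{\max}}(b_{i_{\max}})$ in one stroke, with no comparison between curves. Second, your contraction step asserts that $|D'(p_{ss})|=\sum_i 1/|S_i'(P_i(p_{ss}))|$ stays bounded for large $P_T$; it does not. On the tail, $S_i(P)\approx\frac{a_i}{1-d_i}e^{-a_i(P-b_i)}$, so $|S_i'(P)|\approx a_iS_i(P)=a_ip$, hence $|D'(p_{ss})|\approx\sum_i\frac{1}{a_ip_{ss}}\to\infty$ as $P_T\to\infty$. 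What stays bounded is the product $p_{ss}|D'(p_{ss})|\to\sum_i 1/a_i$, and that is what you actually need: $F'(p_{ss})=1-p_{ss}|D'(p_{ss})|/P_T\approx 1-\big(\sum_i 1/a_i\big)/P_T\in(0,1)$. With these two repairs your argument goes through, although (like the paper, which simply defers to its references) it establishes only local convergence near $p_{ss}$; convergence from an arbitrary initial bid would need an additional monotonicity argument for the iteration.
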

\begin{proof}
For the sigmoidal-like function $U_i(\gamma_i(P_i)) = c_i\Big(\frac{1}{1+e^{-a_i(P_i-b_i)}}-d_i\Big)$, the optimal solution is achieved by solving the optimization problem (\ref{eqn:opt_prob_fairness_mod}). In Algorithm (\ref{alg:UE_first}), an important step to reach to the optimal solution is to solve the optimization problem $P_{i}(n) = \arg \underset{P_i}\max \Big(\log U_i(\gamma_i(P_i)) - p(n)P_i\Big)$ for every UE. The solution of this problem can be written, using Lagrange multipliers method, in the form 
\begin{equation}\label{eqn:slope_equation}
\frac{\partial \log U_i(\gamma_i(P_i))}{\partial P_i}-p =  S_i(P_i) - p = 0.
\end{equation}
From equation (\ref{eqn:slope_fn_term1}) and (\ref{eqn:slope_fn_term2}) in Lemma \ref{lem:slope_curve}, we have the curvature of $S_i(P_i)$ is convex for $P_i  > P_i^s \approx b_i$. The Algorithm in (\ref{alg:UE_first}) and  (\ref{alg:eNodeB_first}) is guaranteed to converges to the global optimal solution when the slope $S_i(P_i)$ of all the utility functions' natural logarithm $\log U_i(\gamma_i(P_i))$ are in the convex region of the functions, similar to the analysis of logarithmic functions in \cite{kelly98powercontrol} and \cite{Low99optimizationflow}. Therefore, the natural logarithm of sigmoidal-like functions $\log U_i(\gamma_i(P_i))$ converge to the global optimal solution for $P_i > P_i^ s \approx b_i$. The inflection point of sigmoidal-like function $U_i(\gamma_i(P_i))$ is at $P_i^{\text{inf}} = b_i$. For $\sum_{i=1}^{M}P_i^{\text{inf}} \ll P_T$, Algorithm in (\ref{alg:UE_first}) and  (\ref{alg:eNodeB_first}) allocates powers $P_i>b_i$ for all users. Since $S_i(P_i)$ is convex for $P_i>P_i^s \
approx b_
i$ then the optimal solution 
can 
be achieved by  
Algorithm (\ref{alg:UE_first}) and  (\ref{alg:eNodeB_first}). We have from equation (\ref{eqn:slope_equation}) and as $S_i(P_i)$ is convex for $P_i  > P_i^s \approx b_i$, that $p_{ss}< S_i(P_i =\max b_i)$ where $S_i(P_i =\max b_i) = \frac{a_{i_{\max}} d_{i_{\max}} }{1-d_{i_{\max}} }+\frac{a_{i_{\max} }}{2}$ and $i_{\max} = \arg \max_i b_i$.
\end{proof}
\begin{cor}\label{cor:sig_fluctuate}
For $\sum_{i=1}^{M}P_i^{\text{inf}}>P_T$ and the global optimal shadow price $p_{ss} \approx \frac{a_id_i e^{\frac{a_ib_i}{2}}}{1-d_i(1+e^{\frac{a_ib_i}{2}})} + \frac{a_ie^{\frac{a_ib_i}{2}}}{(1+e^{\frac{a_ib_i}{2}})}$, then the solution by Algorithm in (\ref{alg:UE_first}) and  (\ref{alg:eNodeB_first}) fluctuates about the global optimal solution.
\end{cor}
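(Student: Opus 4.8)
The plan is to show that the coupled UE/BS iteration reduces to a scalar price recursion whose fixed point is the optimum, and then to establish that the stated hypothesis makes that fixed point locally repelling, so the iterates can only oscillate around it.

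First I would collapse Algorithms~(\ref{alg:UE_first}) and~(\ref{alg:eNodeB_first}) into a single recursion in the price. Given the broadcast price $p(n)$, each UE returns the power $P_i(n) = S_i^{-1}(p(n)) =: D_i(p(n))$ solving the optimality condition~(\ref{eqn:slope_equation}), $S_i(P_i(n)) = p(n)$, where $S_i$ is the strictly decreasing slope-curvature function of Lemma~\ref{lem:slope_curve}; its bid is $w_i(n) = p(n)P_i(n)$. Substituting into the BS update $p(n+1) = \frac{1}{P_T}\sum_{i=1}^M w_i(n)$ yields
\begin{equation*}
p(n+1) = g(p(n)), \qquad g(p) = \frac{p}{P_T}\sum_{i=1}^{M} D_i(p),
\end{equation*}
whose fixed point $p_{ss}$ satisfies $\sum_{i=1}^{M} D_i(p_{ss}) = P_T$, i.e. the budget-binding global optimum obtained by setting $z_i = 0$ in~(\ref{eqn:dual_new_obj}). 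Establishing ``fluctuation about the global optimal solution'' is therefore equivalent to proving that $p_{ss}$ is an unstable fixed point of $g$.

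Second I would locate the operating point using the hypothesis. Since $P_i^{\text{inf}} = b_i$ and $\sum_{i=1}^{M} P_i^{\text{inf}} = \sum_{i=1}^M b_i > P_T$, the binding constraint $\sum_i D_i(p_{ss}) = P_T$ cannot be met with every $P_i \ge b_i$, so the allocation is pushed to $P_i < b_i$, i.e. into the region where Lemma~\ref{lem:slope_curve} guarantees $S_i$ is concave rather than convex. Taking the representative tight-budget power to be half the inflection power, $P_i \approx b_i/2$, and substituting $e^{-a_i(P_i - b_i)} = e^{a_ib_i/2}$ into $S_i$ reproduces the stated steady-state price $p_{ss} \approx S_i(b_i/2)$.

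The crux, and the step I expect to be the main obstacle, is converting ``operating in the concave region of $S_i$'' into a rigorous instability statement; I would do this by linearizing $g$ at $p_{ss}$. Differentiating and using $\sum_i D_i(p_{ss}) = P_T$ gives
\begin{equation*}
g'(p_{ss}) = 1 + \frac{p_{ss}}{P_T}\sum_{i=1}^{M} D_i'(p_{ss}), \qquad D_i'(p) = \frac{1}{S_i'(D_i(p))} < 0,
\end{equation*}
so $g'(p_{ss}) < 1$ always and convergence can only fail through $g'(p_{ss}) < -1$, which is exactly the oscillatory regime. The heart of the argument is then to show the demand sensitivity is large in the sub-inflection regime: from~(\ref{eqn:diff_slope}), at $P_i \approx b_i/2$ the factor $e^{-a_i(P_i-b_i)} = e^{a_ib_i/2}$ is large, making $|S_i'(b_i/2)|$ exponentially small while $p_{ss} = S_i(b_i/2)$ remains of order $a_i$, so $p_{ss}|D_i'(p_{ss})| = p_{ss}/|S_i'(b_i/2)|$ is exponentially large and $\frac{p_{ss}}{P_T}\sum_i |D_i'(p_{ss})| \gg 2$, forcing $g'(p_{ss}) < -1$. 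I would contrast this with the convex regime of Corollary~\ref{cor:sig_convergence}, where the operating point sits far to the right ($P_i \gg b_i$) with $p_{ss}$ exponentially small, so the same product stays below $2$ and $|g'(p_{ss})| < 1$ gives contraction. Once $|g'(p_{ss})| > 1$ is established, $p_{ss}$ is a repeller, the sequence $p(n)$ cannot converge, and the iterates fluctuate about the optimum, which is precisely what motivates the damped ``fluctuation decay'' algorithm of Section~\ref{sec:OuP_Algorithm}. The delicate points are that these are leading-order estimates (hence the $\approx$ in the statement) and that a full proof of genuine sustained oscillation, as opposed to mere local non-contraction, would require controlling the global behaviour of $g$ and not just $g'(p_{ss})$.
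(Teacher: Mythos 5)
Your proposal is correct, and it rests on the same underlying mechanism as the paper's own proof, but it develops that mechanism by a genuinely different and more rigorous route. The paper's argument is purely qualitative: from Lemma \ref{lem:slope_curve} and $\sum_{i=1}^{M}P_i^{\text{inf}}>P_T$ it deduces that some user must have $P_i^{\text{opt}}<b_i$, observes that a small change in $p(n)$ then makes the root of $S_i(P_i)-p(n)=0$ swing between the concave and convex branches of $S_i$, and narrates the resulting feedback (power fluctuation leads to bid fluctuation, which leads to price fluctuation) without any quantitative criterion. You instead collapse the two algorithms into the explicit price recursion $p(n+1)=g(p(n))$ with $g(p)=\frac{p}{P_T}\sum_{i=1}^{M}D_i(p)$, identify the budget-binding optimum as its fixed point, and convert ``sensitivity'' into the local instability condition $g'(p_{ss})=1+\frac{p_{ss}}{P_T}\sum_{i=1}^{M}D_i'(p_{ss})<-1$, verified by the estimates $p_{ss}\approx a_i$ and $|S_i'(b_i/2)|\approx 2a_i^{2}e^{-a_ib_i/2}$, so that $p_{ss}|D_i'(p_{ss})|$ is exponentially large; this also explains why the specific constant $S_i(b_i/2)$ appears in the hypothesis, and your contrast with the regime of Corollary \ref{cor:sig_convergence} (where $p_{ss}|D_i'(p_{ss})|\approx 1/a_i$, giving $|g'(p_{ss})|<1$) unifies the two corollaries under one criterion, something the paper does not attempt. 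What each approach buys: the paper's is short and consistent with its informal style; yours yields a checkable instability statement (a repelling fixed point, hence non-convergence from generic initializations) and makes Theorem \ref{thm:sig_not_conv} an immediate consequence. Your closing caveat is also exactly right, and it applies equally to the paper's proof: local repulsion (or the paper's local sensitivity narration) establishes non-convergence, but ``fluctuates about the global optimal solution'' in the sense of bounded sustained oscillation would additionally require control of the global behaviour of $g$ (for instance an invariant interval that $g$ maps into itself), which neither you nor the authors establish.
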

\begin{proof}
It follows from lemma \ref{lem:slope_curve} that for  $\sum_{i=1}^{M}P_i^{\text{inf}}>P_T$  $\exists \:\: i$ such that the optimal powers $P_i^{\text{opt}} < b_i$. Therefore, if $p_{ss} \approx \frac{a_id_i e^{\frac{a_ib_i}{2}}}{1-d_i(1+e^{\frac{a_ib_i}{2}})} + \frac{a_ie^{\frac{a_ib_i}{2}}}{(1+e^{\frac{a_ib_i}{2}})}$ is the optimal shadow price for optimization problem (\ref{eqn:opt_prob_fairness_eNodeB}). Then, a small change in the shadow price $p(n)$ in the $n^{\text{th}}$ iteration can lead the power $P_i(n)$ (root of $S_i(P_i) - p(n) =0$) to fluctuate between the concave and convex curvature of the slope curve $S_i(P_i)$ for the $i^{\text{th}}$ user. Therefore, it causes fluctuation in the bid $w_i(n)$ sent to the BS and fluctuation in the shadow price $p(n)$ set by BS. Therefore, the iterative solution of Algorithm in (\ref{alg:UE_first}) and (\ref{alg:eNodeB_first}) fluctuates about the global optimal powers $P_i^{\text{opt}}$.
\end{proof}
\begin{thm}\label{thm:sig_not_conv}
Algorithm in (\ref{alg:UE_first}) and  (\ref{alg:eNodeB_first}) does not converge to the global optimal solution for all values of $P_T$.
\end{thm}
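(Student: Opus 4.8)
The plan is to read the statement as a negation of universal convergence: to prove that the algorithm fails for \emph{some} admissible $P_T$, which is exactly what ``does not converge \dots for all values of $P_T$'' asserts. The substantive analysis has already been done in Corollary \ref{cor:sig_fluctuate}, so the argument reduces to producing one concrete power budget that lies in the hypothesis of that corollary.

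First I would recall that Corollaries \ref{cor:sig_convergence} and \ref{cor:sig_fluctuate} together dichotomize the behavior of Algorithm (\ref{alg:UE_first}) and (\ref{alg:eNodeB_first}) according to where the total budget $P_T$ sits relative to $\sum_{i=1}^{M}P_i^{\text{inf}}$, where by Lemma \ref{lem:slope_curve} the inflection point of each user's slope curve is $P_i^{\text{inf}}=b_i$. In the low-load regime $\sum_i b_i \ll P_T$ the algorithm converges to the global optimum; in the high-load regime $\sum_i b_i > P_T$ it merely fluctuates about the optimum. Only the second regime is needed here.

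Next I would exhibit a witness. Since each $b_i$ is a fixed, strictly positive parameter of the normalized sigmoidal-like utility (\ref{eqn:sigmoid}), the quantity $\sum_{i=1}^{M}b_i$ is a fixed positive constant independent of $P_T$. Hence any choice $0<P_T<\sum_{i=1}^{M}b_i$ is admissible, satisfying $P_T\geq 0$ and keeping the optimization (\ref{eqn:opt_prob_fairness}) feasible, while placing the system squarely in the high-load regime $\sum_{i=1}^{M}P_i^{\text{inf}}>P_T$. For such a $P_T$, Corollary \ref{cor:sig_fluctuate} applies directly: there is at least one user whose optimal power satisfies $P_i^{\text{opt}}<b_i$, so it sits on the nonconvex side of its slope curve $S_i(P_i)$. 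Consequently the root of $S_i(P_i)-p(n)=0$ solved in Algorithm (\ref{alg:UE_first}) through equation (\ref{eqn:slope_equation}) jumps across the concave-to-convex transition of $S_i$ as the shadow price $p(n)$ updates. This makes the bids $w_i(n)$ oscillate, the price $p(n)$ oscillate in turn, and the stopping test $|w_i(n)-w_i(n-1)|<\delta$ never stabilize, so the iterates do not settle at $P_i^{\text{opt}}$. Thus the algorithm fails to converge for this $P_T$, and therefore it does not converge for all values of $P_T$.

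The main obstacle is logical rather than analytic: one must be careful that the claim is the negation of universal convergence, so a single non-convergent budget suffices, and one must verify that the witness $P_T$ is genuinely feasible and chosen independently of the parameters $a_i,b_i$ defining the utilities. All the quantitative content — that the high-load regime forces operation in the nonconvex part of $S_i$ and hence persistent fluctuation — is inherited wholesale from Corollary \ref{cor:sig_fluctuate} and needs no reproving here.
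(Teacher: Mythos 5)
Your proposal is correct and takes essentially the same route as the paper: the paper's own proof of Theorem \ref{thm:sig_not_conv} is a one-line appeal to Corollaries \ref{cor:sig_convergence} and \ref{cor:sig_fluctuate}, which is precisely your dichotomy argument. The only difference is that you make explicit what the paper leaves implicit — reading the statement as a negated universal and exhibiting a feasible witness $0<P_T<\sum_{i=1}^{M}b_i$ to which Corollary \ref{cor:sig_fluctuate} applies — which is a faithful elaboration, not a different proof.
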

\begin{proof}
It follows from Corollary \ref{cor:sig_convergence} and \ref{cor:sig_fluctuate} that Algorithm in (\ref{alg:UE_first}) and  (\ref{alg:eNodeB_first}) does not converge to the global optimal solution for all values of $P_T$.
\end{proof}

\subsection{Fluctuation Example}\label{sec:Fluctuation_example}

We consider an example of six users using sigmoidal-like utility functions. The sigmoidal-like utility functions' parameters are $a=\{4, 3.5, 3, 2.5, 1.5, 1\}$ and  $b=\{5, 10, 15, 20, 25, 30\}$, respectively. We assume that the BS's maximum power is $P_T=100$, therefore,  $\sum_{i=1}^{6} P_i^{\text{inf}} = 105 > P_T = 100$. Hence, we can't guarantee convergence with Algorithm in (\ref{alg:UE_first}) and (\ref{alg:eNodeB_first}), as stated by Corollary \ref{cor:sig_fluctuate}. In Figure \ref{fig:osc_p_damped}, we show that the shadow price $p(n)$ fluctuates between a concave and convex curvature of the $\frac{\partial \log U_i(\gamma_i(P_i))}{\partial P_i}$ curve. The fluctuation in the shadow price $p(n)$ causes fluctuation in the allocated powers and hinders the convergence to the optimal powers. Therefore, the optimal power allocation is not 
achievable by 
Algorithm 
in (\ref{alg:UE_first}) and (\ref{alg:eNodeB_first}).
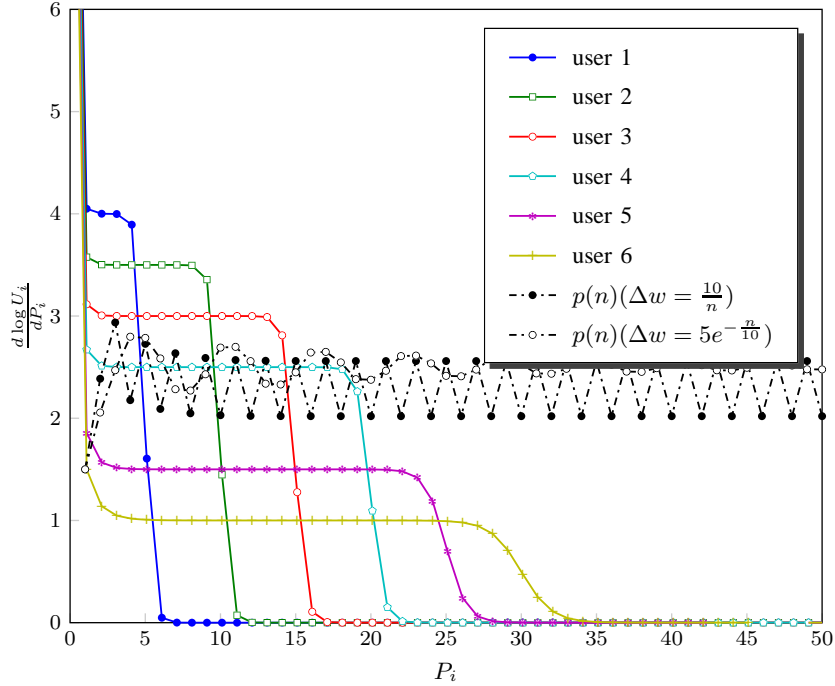
\begin{figure}[!t]
\centering
%
\psset{xunit=0.0250000\plotwidth,yunit=0.17\plotwidth}%
\begin{pspicture}(-4.262673,-0.666667)(50.921659,6.140351)%


\psline[linewidth=\AxesLineWidth,linecolor=GridColor](0.000000,0.000000)(0.000000,0.091288)
\psline[linewidth=\AxesLineWidth,linecolor=GridColor](5.000000,0.000000)(5.000000,0.091288)
\psline[linewidth=\AxesLineWidth,linecolor=GridColor](10.000000,0.000000)(10.000000,0.091288)
\psline[linewidth=\AxesLineWidth,linecolor=GridColor](15.000000,0.000000)(15.000000,0.091288)
\psline[linewidth=\AxesLineWidth,linecolor=GridColor](20.000000,0.000000)(20.000000,0.091288)
\psline[linewidth=\AxesLineWidth,linecolor=GridColor](25.000000,0.000000)(25.000000,0.091288)
\psline[linewidth=\AxesLineWidth,linecolor=GridColor](30.000000,0.000000)(30.000000,0.091288)
\psline[linewidth=\AxesLineWidth,linecolor=GridColor](35.000000,0.000000)(35.000000,0.091288)
\psline[linewidth=\AxesLineWidth,linecolor=GridColor](40.000000,0.000000)(40.000000,0.091288)
\psline[linewidth=\AxesLineWidth,linecolor=GridColor](45.000000,0.000000)(45.000000,0.091288)
\psline[linewidth=\AxesLineWidth,linecolor=GridColor](50.000000,0.000000)(50.000000,0.091288)
\psline[linewidth=\AxesLineWidth,linecolor=GridColor](0.000000,0.000000)(0.600000,0.000000)
\psline[linewidth=\AxesLineWidth,linecolor=GridColor](0.000000,1.000000)(0.600000,1.000000)
\psline[linewidth=\AxesLineWidth,linecolor=GridColor](0.000000,2.000000)(0.600000,2.000000)
\psline[linewidth=\AxesLineWidth,linecolor=GridColor](0.000000,3.000000)(0.600000,3.000000)
\psline[linewidth=\AxesLineWidth,linecolor=GridColor](0.000000,4.000000)(0.600000,4.000000)
\psline[linewidth=\AxesLineWidth,linecolor=GridColor](0.000000,5.000000)(0.600000,5.000000)
\psline[linewidth=\AxesLineWidth,linecolor=GridColor](0.000000,6.000000)(0.600000,6.000000)

{ \footnotesize 
\rput[t](0.000000,-0.091288){$0$}
\rput[t](5.000000,-0.091288){$5$}
\rput[t](10.000000,-0.091288){$10$}
\rput[t](15.000000,-0.091288){$15$}
\rput[t](20.000000,-0.091288){$20$}
\rput[t](25.000000,-0.091288){$25$}
\rput[t](30.000000,-0.091288){$30$}
\rput[t](35.000000,-0.091288){$35$}
\rput[t](40.000000,-0.091288){$40$}
\rput[t](45.000000,-0.091288){$45$}
\rput[t](50.000000,-0.091288){$50$}
\rput[r](-0.600000,0.000000){$0$}
\rput[r](-0.600000,1.000000){$1$}
\rput[r](-0.600000,2.000000){$2$}
\rput[r](-0.600000,3.000000){$3$}
\rput[r](-0.600000,4.000000){$4$}
\rput[r](-0.600000,5.000000){$5$}
\rput[r](-0.600000,6.000000){$6$}
} 

\psframe[linewidth=\AxesLineWidth,dimen=middle](0.000000,0.000000)(50.000000,6.000000)

{ \small 
\rput[b](25.000000,-0.666667){
\begin{tabular}{c}
$P_i$\\
\end{tabular}
}

\rput[t]{90}(-4.262673,3.000000){
\begin{tabular}{c}
$\frac{d \log U_i}{d P_i}$\\
\end{tabular}
}
} 

\newrgbcolor{color787.0049}{0  0  1}
\psline[plotstyle=line,linejoin=1,showpoints=false,dotstyle=*,dotsize=\MarkerSize,linestyle=solid,linewidth=\LineWidth,linecolor=color787.0049]
(0.858726,6.000000)(1.100000,4.049719)
\psline[plotstyle=line,linejoin=1,showpoints=false,dotstyle=*,dotsize=\MarkerSize,linestyle=solid,linewidth=\LineWidth,linecolor=color787.0049]
(49.100000,0.000000)(50.000000,0.000000)
\psline[plotstyle=line,linejoin=1,showpoints=true,dotstyle=*,dotsize=\MarkerSize,linestyle=solid,linewidth=\LineWidth,linecolor=color787.0049]
(1.100000,4.049719)(2.100000,4.000863)(3.100000,3.998016)(4.100000,3.893612)(5.100000,1.605249)
(6.100000,0.048514)(7.100000,0.000899)(8.100000,0.000016)(9.100000,0.000000)(10.100000,0.000000)
(11.100000,0.000000)(12.100000,0.000000)(13.100000,0.000000)(14.100000,0.000000)(15.100000,0.000000)
(16.100000,0.000000)(17.100000,0.000000)(18.100000,0.000000)(19.100000,0.000000)(20.100000,0.000000)
(21.100000,0.000000)(22.100000,0.000000)(23.100000,0.000000)(24.100000,0.000000)(25.100000,0.000000)
(26.100000,0.000000)(27.100000,0.000000)(28.100000,0.000000)(29.100000,0.000000)(30.100000,0.000000)
(31.100000,0.000000)(32.100000,0.000000)(33.100000,0.000000)(34.100000,0.000000)(35.100000,0.000000)
(36.100000,0.000000)(37.100000,0.000000)(38.100000,0.000000)(39.100000,0.000000)(40.100000,0.000000)
(41.100000,0.000000)(42.100000,0.000000)(43.100000,0.000000)(44.100000,0.000000)(45.100000,0.000000)
(46.100000,0.000000)(47.100000,0.000000)(48.100000,0.000000)(49.100000,0.000000)

\newrgbcolor{color788.0044}{0         0.5           0}
\psline[plotstyle=line,linejoin=1,showpoints=false,dotstyle=Bsquare,dotsize=\MarkerSize,linestyle=solid,linewidth=\LineWidth,linecolor=color788.0044]
(0.807109,6.000000)(1.100000,3.576098)
\psline[plotstyle=line,linejoin=1,showpoints=false,dotstyle=Bsquare,dotsize=\MarkerSize,linestyle=solid,linewidth=\LineWidth,linecolor=color788.0044]
(49.100000,0.000000)(50.000000,0.000000)
\psline[plotstyle=line,linejoin=1,showpoints=true,dotstyle=Bsquare,dotsize=\MarkerSize,linestyle=solid,linewidth=\LineWidth,linecolor=color788.0044]
(1.100000,3.576098)(2.100000,3.502251)(3.100000,3.500068)(4.100000,3.500002)(5.100000,3.500000)
(6.100000,3.499996)(7.100000,3.499863)(8.100000,3.495477)(9.100000,3.356181)(10.100000,1.446838)
(11.100000,0.072927)(12.100000,0.002248)(13.100000,0.000068)(14.100000,0.000002)(15.100000,0.000000)
(16.100000,0.000000)(17.100000,0.000000)(18.100000,0.000000)(19.100000,0.000000)(20.100000,0.000000)
(21.100000,0.000000)(22.100000,0.000000)(23.100000,0.000000)(24.100000,0.000000)(25.100000,0.000000)
(26.100000,0.000000)(27.100000,0.000000)(28.100000,0.000000)(29.100000,0.000000)(30.100000,0.000000)
(31.100000,0.000000)(32.100000,0.000000)(33.100000,0.000000)(34.100000,0.000000)(35.100000,0.000000)
(36.100000,0.000000)(37.100000,0.000000)(38.100000,0.000000)(39.100000,0.000000)(40.100000,0.000000)
(41.100000,0.000000)(42.100000,0.000000)(43.100000,0.000000)(44.100000,0.000000)(45.100000,0.000000)
(46.100000,0.000000)(47.100000,0.000000)(48.100000,0.000000)(49.100000,0.000000)

\newrgbcolor{color789.0044}{1  0  0}
\psline[plotstyle=line,linejoin=1,showpoints=false,dotstyle=Bo,dotsize=\MarkerSize,linestyle=solid,linewidth=\LineWidth,linecolor=color789.0044]
(0.758970,6.000000)(1.100000,3.114887)
\psline[plotstyle=line,linejoin=1,showpoints=false,dotstyle=Bo,dotsize=\MarkerSize,linestyle=solid,linewidth=\LineWidth,linecolor=color789.0044]
(49.100000,0.000000)(50.000000,0.000000)
\psline[plotstyle=line,linejoin=1,showpoints=true,dotstyle=Bo,dotsize=\MarkerSize,linestyle=solid,linewidth=\LineWidth,linecolor=color789.0044]
(1.100000,3.114887)(2.100000,3.005519)(3.100000,3.000274)(4.100000,3.000014)(5.100000,3.000001)
(6.100000,3.000000)(7.100000,3.000000)(8.100000,3.000000)(9.100000,3.000000)(10.100000,2.999999)
(11.100000,2.999975)(12.100000,2.999500)(13.100000,2.989996)(14.100000,2.811080)(15.100000,1.276672)
(16.100000,0.106714)(17.100000,0.005499)(18.100000,0.000274)(19.100000,0.000014)(20.100000,0.000001)
(21.100000,0.000000)(22.100000,0.000000)(23.100000,0.000000)(24.100000,0.000000)(25.100000,0.000000)
(26.100000,0.000000)(27.100000,0.000000)(28.100000,0.000000)(29.100000,0.000000)(30.100000,0.000000)
(31.100000,0.000000)(32.100000,0.000000)(33.100000,0.000000)(34.100000,0.000000)(35.100000,0.000000)
(36.100000,0.000000)(37.100000,0.000000)(38.100000,0.000000)(39.100000,0.000000)(40.100000,0.000000)
(41.100000,0.000000)(42.100000,0.000000)(43.100000,0.000000)(44.100000,0.000000)(45.100000,0.000000)
(46.100000,0.000000)(47.100000,0.000000)(48.100000,0.000000)(49.100000,0.000000)

\newrgbcolor{color790.0044}{0        0.75        0.75}
\psline[plotstyle=line,linejoin=1,showpoints=false,dotstyle=Bpentagon,dotsize=\MarkerSize,linestyle=solid,linewidth=\LineWidth,linecolor=color790.0044]
(0.714280,6.000000)(1.100000,2.670734)
\psline[plotstyle=line,linejoin=1,showpoints=false,dotstyle=Bpentagon,dotsize=\MarkerSize,linestyle=solid,linewidth=\LineWidth,linecolor=color790.0044]
(49.100000,0.000000)(50.000000,0.000000)
\psline[plotstyle=line,linejoin=1,showpoints=true,dotstyle=Bpentagon,dotsize=\MarkerSize,linestyle=solid,linewidth=\LineWidth,linecolor=color790.0044]
(1.100000,2.670734)(2.100000,2.513188)(3.100000,2.501077)(4.100000,2.500088)(5.100000,2.500007)
(6.100000,2.500001)(7.100000,2.500000)(8.100000,2.500000)(9.100000,2.500000)(10.100000,2.500000)
(11.100000,2.500000)(12.100000,2.500000)(13.100000,2.500000)(14.100000,2.499999)(15.100000,2.499988)
(16.100000,2.499854)(17.100000,2.498226)(18.100000,2.478556)(19.100000,2.261626)(20.100000,1.094559)
(21.100000,0.150217)(22.100000,0.013050)(23.100000,0.001076)(24.100000,0.000088)(25.100000,0.000007)
(26.100000,0.000001)(27.100000,0.000000)(28.100000,0.000000)(29.100000,0.000000)(30.100000,0.000000)
(31.100000,0.000000)(32.100000,0.000000)(33.100000,0.000000)(34.100000,0.000000)(35.100000,0.000000)
(36.100000,0.000000)(37.100000,0.000000)(38.100000,0.000000)(39.100000,0.000000)(40.100000,0.000000)
(41.100000,0.000000)(42.100000,0.000000)(43.100000,0.000000)(44.100000,0.000000)(45.100000,0.000000)
(46.100000,0.000000)(47.100000,0.000000)(48.100000,0.000000)(49.100000,0.000000)

\newrgbcolor{color791.0044}{0.75           0        0.75}
\psline[plotstyle=line,linejoin=1,showpoints=false,dotstyle=Basterisk,dotsize=\MarkerSize,linestyle=solid,linewidth=\LineWidth,linecolor=color791.0044]
(0.635081,6.000000)(1.100000,1.856550)
\psline[plotstyle=line,linejoin=1,showpoints=false,dotstyle=Basterisk,dotsize=\MarkerSize,linestyle=solid,linewidth=\LineWidth,linecolor=color791.0044]
(49.100000,0.000000)(50.000000,0.000000)
\psline[plotstyle=line,linejoin=1,showpoints=true,dotstyle=Basterisk,dotsize=\MarkerSize,linestyle=solid,linewidth=\LineWidth,linecolor=color791.0044]
(1.100000,1.856550)(2.100000,1.567156)(3.100000,1.514481)(4.100000,1.503207)(5.100000,1.500714)
(6.100000,1.500159)(7.100000,1.500036)(8.100000,1.500008)(9.100000,1.500002)(10.100000,1.500000)
(11.100000,1.500000)(12.100000,1.500000)(13.100000,1.500000)(14.100000,1.500000)(15.100000,1.499999)
(16.100000,1.499998)(17.100000,1.499989)(18.100000,1.499952)(19.100000,1.499785)(20.100000,1.499037)
(21.100000,1.495693)(22.100000,1.480886)(23.100000,1.417978)(24.100000,1.191194)(25.100000,0.693855)
(26.100000,0.241663)(27.100000,0.061637)(28.100000,0.014207)(29.100000,0.003193)(30.100000,0.000714)
(31.100000,0.000159)(32.100000,0.000036)(33.100000,0.000008)(34.100000,0.000002)(35.100000,0.000000)
(36.100000,0.000000)(37.100000,0.000000)(38.100000,0.000000)(39.100000,0.000000)(40.100000,0.000000)
(41.100000,0.000000)(42.100000,0.000000)

\newrgbcolor{color793.0039}{0.75        0.75           0}
\psline[plotstyle=line,linejoin=1,showpoints=false,dotstyle=B+,dotsize=\MarkerSize,linestyle=solid,linewidth=\LineWidth,linecolor=color793.0039]
(0.600405,6.000000)(1.100000,1.498961)
\psline[plotstyle=line,linejoin=1,showpoints=false,dotstyle=B+,dotsize=\MarkerSize,linestyle=solid,linewidth=\LineWidth,linecolor=color793.0039]
(49.100000,0.000000)(50.000000,0.000000)
\psline[plotstyle=line,linejoin=1,showpoints=true,dotstyle=B+,dotsize=\MarkerSize,linestyle=solid,linewidth=\LineWidth,linecolor=color793.0039]
(1.100000,1.498961)(2.100000,1.139545)(3.100000,1.047174)(4.100000,1.016852)(5.100000,1.006134)
(6.100000,1.002248)(7.100000,1.000826)(8.100000,1.000304)(9.100000,1.000112)(10.100000,1.000041)
(11.100000,1.000015)(12.100000,1.000006)(13.100000,1.000002)(14.100000,1.000001)(15.100000,1.000000)
(16.100000,0.999999)(17.100000,0.999998)(18.100000,0.999993)(19.100000,0.999982)(20.100000,0.999950)
(21.100000,0.999864)(22.100000,0.999629)(23.100000,0.998993)(24.100000,0.997268)(25.100000,0.992608)
(26.100000,0.980160)(27.100000,0.947846)(28.100000,0.869892)(29.100000,0.710950)(30.100000,0.475021)
(31.100000,0.249740)(32.100000,0.109097)(33.100000,0.043107)(34.100000,0.016302)(35.100000,0.006060)
(36.100000,0.002238)(37.100000,0.000824)(38.100000,0.000303)(39.100000,0.000112)(40.100000,0.000041)
(41.100000,0.000015)(42.100000,0.000006)(43.100000,0.000002)(44.100000,0.000001)(45.100000,0.000000)

\newrgbcolor{color794.004}{0  0  0}
\psline[plotstyle=line,linejoin=1,showpoints=true,dotstyle=*,dotsize=\MarkerSize,linestyle=dashed,dash=3pt 2pt 1pt 2pt,linewidth=\LineWidth,linecolor=color794.004]
(1.000000,1.500000)(2.000000,2.386737)(3.000000,2.936588)(4.000000,2.176839)(5.000000,2.727586)
(6.000000,2.090923)(7.000000,2.635990)(8.000000,2.048204)(9.000000,2.589860)(10.000000,2.029439)
(11.000000,2.569505)(12.000000,2.022790)(13.000000,2.562281)(14.000000,2.020861)(15.000000,2.560184)
(16.000000,2.020358)(17.000000,2.559637)(18.000000,2.020231)(19.000000,2.559500)(20.000000,2.020200)
(21.000000,2.559465)(22.000000,2.020192)(23.000000,2.559457)(24.000000,2.020190)(25.000000,2.559455)
(26.000000,2.020190)(27.000000,2.559454)(28.000000,2.020189)(29.000000,2.559454)(30.000000,2.020189)
(31.000000,2.559454)(32.000000,2.020189)(33.000000,2.559454)(34.000000,2.020189)(35.000000,2.559454)
(36.000000,2.020189)(37.000000,2.559454)(38.000000,2.020189)(39.000000,2.559454)(40.000000,2.020189)
(41.000000,2.559454)(42.000000,2.020189)(43.000000,2.559454)(44.000000,2.020189)(45.000000,2.559454)
(46.000000,2.020189)(47.000000,2.559454)(48.000000,2.020189)(49.000000,2.559454)(50.000000,2.020189)

\newrgbcolor{color795.004}{0  0  0}
\psline[plotstyle=line,linejoin=1,showpoints=true,dotstyle=Bo,dotsize=\MarkerSize,linestyle=dashed,dash=3pt 2pt 1pt 2pt,linewidth=\LineWidth,linecolor=color795.004]
(1.000000,1.500000)(2.000000,2.055934)(3.000000,2.469890)(4.000000,2.797643)(5.000000,2.785880)
(6.000000,2.584971)(7.000000,2.284036)(8.000000,2.270948)(9.000000,2.430081)(10.000000,2.691192)
(11.000000,2.698550)(12.000000,2.558551)(13.000000,2.338034)(14.000000,2.330082)(15.000000,2.449132)
(16.000000,2.642869)(17.000000,2.649252)(18.000000,2.546174)(19.000000,2.382371)(20.000000,2.376488)
(21.000000,2.464711)(22.000000,2.607959)(23.000000,2.612843)(24.000000,2.536617)(25.000000,2.414961)
(26.000000,2.410575)(27.000000,2.475931)(28.000000,2.581875)(29.000000,2.585549)(30.000000,2.529135)
(31.000000,2.438845)(32.000000,2.435587)(33.000000,2.484008)(34.000000,2.562393)(35.000000,2.565140)
(36.000000,2.523373)(37.000000,2.456398)(38.000000,2.453983)(39.000000,2.489858)(40.000000,2.547872)
(41.000000,2.549918)(42.000000,2.518989)(43.000000,2.469326)(44.000000,2.467537)(45.000000,2.494118)
(46.000000,2.537065)(47.000000,2.538586)(48.000000,2.515680)(49.000000,2.478863)(50.000000,2.477539)

{ \small 
\rput[tr](48.800000,5.817423){%
\psshadowbox[framesep=0pt,linewidth=\AxesLineWidth]{\psframebox*{\begin{tabular}{l}
\Rnode{a1}{\hspace*{0.0ex}} \hspace*{0.4cm} \Rnode{a2}{~~user 1} \\
\Rnode{a3}{\hspace*{0.0ex}} \hspace*{0.4cm} \Rnode{a4}{~~user 2} \\
\Rnode{a5}{\hspace*{0.0ex}} \hspace*{0.4cm} \Rnode{a6}{~~user 3} \\
\Rnode{a7}{\hspace*{0.0ex}} \hspace*{0.4cm} \Rnode{a8}{~~user 4} \\
\Rnode{a9}{\hspace*{0.0ex}} \hspace*{0.4cm} \Rnode{a10}{~~user 5} \\
\Rnode{a11}{\hspace*{0.0ex}} \hspace*{0.4cm} \Rnode{a12}{~~user 6} \\
\Rnode{a13}{\hspace*{0.0ex}} \hspace*{0.4cm} \Rnode{a14}{~~$p(n)(\Delta w = {\frac{10}{n}})$} \\
\Rnode{a15}{\hspace*{0.0ex}} \hspace*{0.4cm} \Rnode{a16}{~~$p(n)(\Delta w = 5 e^{-\frac{n}{10}})$} \\
\end{tabular}}
\ncline[linestyle=solid,linewidth=\LineWidth,linecolor=color787.0049]{a1}{a2} \ncput{\psdot[dotstyle=*,dotsize=\MarkerSize,linecolor=color787.0049]}
\ncline[linestyle=solid,linewidth=\LineWidth,linecolor=color788.0044]{a3}{a4} \ncput{\psdot[dotstyle=Bsquare,dotsize=\MarkerSize,linecolor=color788.0044]}
\ncline[linestyle=solid,linewidth=\LineWidth,linecolor=color789.0044]{a5}{a6} \ncput{\psdot[dotstyle=Bo,dotsize=\MarkerSize,linecolor=color789.0044]}
\ncline[linestyle=solid,linewidth=\LineWidth,linecolor=color790.0044]{a7}{a8} \ncput{\psdot[dotstyle=Bpentagon,dotsize=\MarkerSize,linecolor=color790.0044]}
\ncline[linestyle=solid,linewidth=\LineWidth,linecolor=color791.0044]{a9}{a10} \ncput{\psdot[dotstyle=Basterisk,dotsize=\MarkerSize,linecolor=color791.0044]}
\ncline[linestyle=solid,linewidth=\LineWidth,linecolor=color793.0039]{a11}{a12} \ncput{\psdot[dotstyle=B+,dotsize=\MarkerSize,linecolor=color793.0039]}
\ncline[linestyle=dashed,dash=3pt 2pt 1pt 2pt,linewidth=\LineWidth,linecolor=color794.004]{a13}{a14} \ncput{\psdot[dotstyle=*,dotsize=\MarkerSize,linecolor=color794.004]}
\ncline[linestyle=dashed,dash=3pt 2pt 1pt 2pt,linewidth=\LineWidth,linecolor=color795.004]{a15}{a16} \ncput{\psdot[dotstyle=Bo,dotsize=\MarkerSize,linecolor=color795.004]}
}%
}%
} 

\end{pspicture}%
\caption{The $\frac{\partial \log U_i(\gamma_i(P_i))}{\partial P_i} $ curve of fluctuation example in Section \ref{sec:Fluctuation_example} and the shadow price $p(n)$ from Algorithm in (\ref{alg:UE_first}) and (\ref{alg:eNodeB_first})
for $P_T=100$ (i.e. $\sum P_i^{\text{inf}}> P_T$). When BS has scarce power, Algorithm in (\ref{alg:UE_first}) and (\ref{alg:eNodeB_first}) don't guarantee convergence for shadow price and thus optimal power allocation is not   if we rely on these algorithms.}
\label{fig:osc_p_damped}
\end{figure}


 \section{Robust Distributed Algorithm}\label{sec:OuP_Algorithm}
In this section, we present a modified version of distributed algorithm in Section \ref{sec:Algorithm} to avoid the drawback discussed in section \ref{sec:conv_analy}. The modified algorithm is robust and it guarantees convergence for all values of the BS maximum power $P_T$. Our algorithm allocates powers that coincide with the Algorithm in (\ref{alg:UE_first}) and (\ref{alg:eNodeB_first}) for $\sum{P_i^{\text{inf}}> P_T}$. For $\sum{P_i^{\text{inf}} \ll P_T}$, our algorithm avoids fluctuations in the non-convergent region, as discussed in the previous section. This is achieved by adding a convergence measure $\Delta w(n)$ that senses the fluctuation in the bids $w_i$. In the case of fluctuation, our algorithm decreases the step size between the current and the previous bid $w_i(n) -w_i(n-1)$ for every user $i$ using \textit{fluctuation decay function}. The fluctuation decay function could be in the following forms:
\begin{itemize}
\item \textit{Exponential function}: It takes the form $\Delta w(n) = l_1 e^{-\frac{n}{l_2}}$.
\item \textit{Rational function}: It takes the form $\Delta w(n) = \frac{l_3}{n}$.
\end{itemize}
where $l_1, l_2, l_3$ can be adjusted to change the power of decay of the bids $w_i$. The new algorithm with the fluctuation decay function is in Algorithm (\ref{alg:OuP_UE}) and (\ref{alg:eNodeB_first}). 

\begin{rem} 
The fluctuation decay function can be included in Algorithm (\ref{alg:UE_first}) of the UE or Algorithm (\ref{alg:eNodeB_first}) of the BS.
\end{rem}
In our model, we add the decay part in Algorithm (\ref{alg:UE_first}) of the UE. Thus, the modified UE algorithm with the decay part becomes Algorithm (\ref{alg:OuP_UE}).
\begin{algorithm}
\caption{UE Algorithm}\label{alg:OuP_UE}
\begin{algorithmic}
\STATE {Send initial bid $w_i(1)$ to BS}
\LOOP
	\STATE {Receive shadow price $p(n)$ from BS}
	\IF {STOP from BS} %

	\STATE {Calculate allocated power $P_i ^{\text{opt}}=\frac{w_i(n)}{p(n)}$}
			\ELSE
	\STATE {Calculate new bid $w_i (n)= p(n) P_{i}(n)$}
	\IF {$|w_i(n) -w_i(n-1)| >\Delta w(n)$} %
	   	\STATE {$w_i(n) =w_i(n-1) + \text{sign}(w_i(n) -w_i(n-1))\Delta w(n)$}  
	   	\COMMENT {$\Delta w = l_1 e^{-\frac{n}{l_2}}$ or $\Delta w = \frac{l_3}{n}$}
	\ENDIF
	\STATE {Send new bid $w_i (n)$ to BS}
		\ENDIF 
\ENDLOOP
\end{algorithmic}
\end{algorithm}
\begin{figure}[tb]
\centering
%
\psset{xunit=0.020000\plotwidth,yunit=0.019718\plotwidth}%
\begin{pspicture}(-5.069124,-4.444444)(50.921659,40.935673)%


\psline[linewidth=\AxesLineWidth,linecolor=GridColor](0.000000,0.000000)(0.000000,0.608589)
\psline[linewidth=\AxesLineWidth,linecolor=GridColor](5.000000,0.000000)(5.000000,0.608589)
\psline[linewidth=\AxesLineWidth,linecolor=GridColor](10.000000,0.000000)(10.000000,0.608589)
\psline[linewidth=\AxesLineWidth,linecolor=GridColor](15.000000,0.000000)(15.000000,0.608589)
\psline[linewidth=\AxesLineWidth,linecolor=GridColor](20.000000,0.000000)(20.000000,0.608589)
\psline[linewidth=\AxesLineWidth,linecolor=GridColor](25.000000,0.000000)(25.000000,0.608589)
\psline[linewidth=\AxesLineWidth,linecolor=GridColor](30.000000,0.000000)(30.000000,0.608589)
\psline[linewidth=\AxesLineWidth,linecolor=GridColor](35.000000,0.000000)(35.000000,0.608589)
\psline[linewidth=\AxesLineWidth,linecolor=GridColor](40.000000,0.000000)(40.000000,0.608589)
\psline[linewidth=\AxesLineWidth,linecolor=GridColor](45.000000,0.000000)(45.000000,0.608589)
\psline[linewidth=\AxesLineWidth,linecolor=GridColor](50.000000,0.000000)(50.000000,0.608589)
\psline[linewidth=\AxesLineWidth,linecolor=GridColor](0.000000,0.000000)(0.600000,0.000000)
\psline[linewidth=\AxesLineWidth,linecolor=GridColor](0.000000,5.000000)(0.600000,5.000000)
\psline[linewidth=\AxesLineWidth,linecolor=GridColor](0.000000,10.000000)(0.600000,10.000000)
\psline[linewidth=\AxesLineWidth,linecolor=GridColor](0.000000,15.000000)(0.600000,15.000000)
\psline[linewidth=\AxesLineWidth,linecolor=GridColor](0.000000,20.000000)(0.600000,20.000000)
\psline[linewidth=\AxesLineWidth,linecolor=GridColor](0.000000,25.000000)(0.600000,25.000000)
\psline[linewidth=\AxesLineWidth,linecolor=GridColor](0.000000,30.000000)(0.600000,30.000000)
\psline[linewidth=\AxesLineWidth,linecolor=GridColor](0.000000,35.000000)(0.600000,35.000000)
\psline[linewidth=\AxesLineWidth,linecolor=GridColor](0.000000,40.000000)(0.600000,40.000000)

{ \footnotesize 
\rput[t](0.000000,-0.608589){$0$}
\rput[t](5.000000,-0.608589){$5$}
\rput[t](10.000000,-0.608589){$10$}
\rput[t](15.000000,-0.608589){$15$}
\rput[t](20.000000,-0.608589){$20$}
\rput[t](25.000000,-0.608589){$25$}
\rput[t](30.000000,-0.608589){$30$}
\rput[t](35.000000,-0.608589){$35$}
\rput[t](40.000000,-0.608589){$40$}
\rput[t](45.000000,-0.608589){$45$}
\rput[t](50.000000,-0.608589){$50$}
\rput[r](-0.600000,0.000000){$0$}
\rput[r](-0.600000,5.000000){$5$}
\rput[r](-0.600000,10.000000){$10$}
\rput[r](-0.600000,15.000000){$15$}
\rput[r](-0.600000,20.000000){$20$}
\rput[r](-0.600000,25.000000){$25$}
\rput[r](-0.600000,30.000000){$30$}
\rput[r](-0.600000,35.000000){$35$}
\rput[r](-0.600000,40.000000){$40$}
} 

\psframe[linewidth=\AxesLineWidth,dimen=middle](0.000000,0.000000)(50.000000,40.000000)

{ \small 
\rput[b](25.000000,-4.444444){
\begin{tabular}{c}
$n$\\
\end{tabular}
}

\rput[t]{90}(-5.069124,20.000000){
\begin{tabular}{c}
$P_i$\\
\end{tabular}
}
} 

\newrgbcolor{color227.0109}{0  0  1}
\psline[plotstyle=line,linejoin=1,showpoints=true,dotstyle=*,dotsize=\MarkerSize,linestyle=solid,linewidth=\LineWidth,linecolor=color227.0109]
(1.000000,5.127706)(2.000000,4.902083)(3.000000,4.746059)(4.000000,4.955674)(5.000000,4.809375)
(6.000000,4.977254)(7.000000,4.835293)(8.000000,4.987947)(9.000000,4.848021)(10.000000,4.992640)
(11.000000,4.853577)(12.000000,4.994302)(13.000000,4.855540)(14.000000,4.994784)(15.000000,4.856109)
(16.000000,4.994910)(17.000000,4.856257)(18.000000,4.994942)(19.000000,4.856295)(20.000000,4.994950)
(21.000000,4.856304)(22.000000,4.994952)(23.000000,4.856306)(24.000000,4.994952)(25.000000,4.856307)
(26.000000,4.994952)(27.000000,4.856307)(28.000000,4.994952)(29.000000,4.856307)(30.000000,4.994952)
(31.000000,4.856307)(32.000000,4.994952)(33.000000,4.856307)(34.000000,4.994952)(35.000000,4.856307)
(36.000000,4.994952)(37.000000,4.856307)(38.000000,4.994952)(39.000000,4.856307)(40.000000,4.994952)
(41.000000,4.856307)(42.000000,4.994952)(43.000000,4.856307)(44.000000,4.994952)(45.000000,4.856307)
(46.000000,4.994952)(47.000000,4.856307)(48.000000,4.994952)(49.000000,4.856307)(50.000000,4.994952)

\newrgbcolor{color228.0104}{0         0.5           0}
\psline[plotstyle=line,linejoin=1,showpoints=true,dotstyle=Bsquare,dotsize=\MarkerSize,linestyle=solid,linewidth=\LineWidth,linecolor=color228.0104]
(1.000000,10.082195)(2.000000,9.782105)(3.000000,9.528288)(4.000000,9.857757)(5.000000,9.639528)
(6.000000,9.887237)(7.000000,9.681306)(8.000000,9.901668)(9.000000,9.701211)(10.000000,9.907967)
(11.000000,9.709785)(12.000000,9.910194)(13.000000,9.712799)(14.000000,9.910839)(15.000000,9.713671)
(16.000000,9.911007)(17.000000,9.713898)(18.000000,9.911050)(19.000000,9.713956)(20.000000,9.911060)
(21.000000,9.713970)(22.000000,9.911063)(23.000000,9.713974)(24.000000,9.911063)(25.000000,9.713974)
(26.000000,9.911064)(27.000000,9.713975)(28.000000,9.911064)(29.000000,9.713975)(30.000000,9.911064)
(31.000000,9.713975)(32.000000,9.911064)(33.000000,9.713975)(34.000000,9.911064)(35.000000,9.713975)
(36.000000,9.911064)(37.000000,9.713975)(38.000000,9.911064)(39.000000,9.713975)(40.000000,9.911064)
(41.000000,9.713975)(42.000000,9.911064)(43.000000,9.713975)(44.000000,9.911064)(45.000000,9.713975)
(46.000000,9.911064)(47.000000,9.713975)(48.000000,9.911064)(49.000000,9.713975)(50.000000,9.911064)

\newrgbcolor{color229.0104}{1  0  0}
\psline[plotstyle=line,linejoin=1,showpoints=true,dotstyle=Bo,dotsize=\MarkerSize,linestyle=solid,linewidth=\LineWidth,linecolor=color229.0104]
(1.000000,15.000000)(2.000000,14.547037)(3.000000,13.721551)(4.000000,14.675841)(5.000000,14.232050)
(6.000000,14.722356)(7.000000,14.340056)(8.000000,14.744544)(9.000000,14.385713)(10.000000,14.754120)
(11.000000,14.404489)(12.000000,14.757490)(13.000000,14.410975)(14.000000,14.758465)(15.000000,14.412840)
(16.000000,14.758719)(17.000000,14.413326)(18.000000,14.758783)(19.000000,14.413448)(20.000000,14.758799)
(21.000000,14.413479)(22.000000,14.758803)(23.000000,14.413486)(24.000000,14.758804)(25.000000,14.413488)
(26.000000,14.758804)(27.000000,14.413488)(28.000000,14.758804)(29.000000,14.413489)(30.000000,14.758804)
(31.000000,14.413489)(32.000000,14.758804)(33.000000,14.413489)(34.000000,14.758804)(35.000000,14.413489)
(36.000000,14.758804)(37.000000,14.413489)(38.000000,14.758804)(39.000000,14.413489)(40.000000,14.758804)
(41.000000,14.413489)(42.000000,14.758804)(43.000000,14.413489)(44.000000,14.758804)(45.000000,14.413489)
(46.000000,14.758804)(47.000000,14.413489)(48.000000,14.758804)(49.000000,14.413489)(50.000000,14.758804)

\newrgbcolor{color230.0104}{0        0.75        0.75}
\psline[plotstyle=line,linejoin=1,showpoints=true,dotstyle=Bpentagon,dotsize=\MarkerSize,linestyle=solid,linewidth=\LineWidth,linecolor=color230.0104]
(1.000000,19.837814)(2.000000,18.780811)(3.000000,0.762406)(4.000000,19.237008)(5.000000,0.993457)
(6.000000,19.347417)(7.000000,1.185774)(8.000000,19.395405)(9.000000,1.344442)(10.000000,19.415364)
(11.000000,1.444028)(12.000000,19.422290)(13.000000,1.486800)(14.000000,19.424284)(15.000000,1.500168)
(16.000000,19.424804)(17.000000,1.503736)(18.000000,19.424935)(19.000000,1.504639)(20.000000,19.424967)
(21.000000,1.504864)(22.000000,19.424975)(23.000000,1.504920)(24.000000,19.424977)(25.000000,1.504934)
(26.000000,19.424978)(27.000000,1.504938)(28.000000,19.424978)(29.000000,1.504939)(30.000000,19.424978)
(31.000000,1.504939)(32.000000,19.424978)(33.000000,1.504939)(34.000000,19.424978)(35.000000,1.504939)
(36.000000,19.424978)(37.000000,1.504939)(38.000000,19.424978)(39.000000,1.504939)(40.000000,19.424978)
(41.000000,1.504939)(42.000000,19.424978)(43.000000,1.504939)(44.000000,19.424978)(45.000000,1.504939)
(46.000000,19.424978)(47.000000,1.504939)(48.000000,19.424978)(49.000000,1.504939)(50.000000,19.424978)

\newrgbcolor{color231.0104}{0.75           0        0.75}
\psline[plotstyle=line,linejoin=1,showpoints=true,dotstyle=Basterisk,dotsize=\MarkerSize,linestyle=solid,linewidth=\LineWidth,linecolor=color231.0104]
(1.000000,12.500000)(2.000000,0.660089)(3.000000,0.476652)(4.000000,0.778797)(5.000000,0.532245)
(6.000000,0.842450)(7.000000,0.561170)(8.000000,0.878714)(9.000000,0.577036)(10.000000,0.895798)
(11.000000,0.584345)(12.000000,0.902036)(13.000000,0.586986)(14.000000,0.903863)(15.000000,0.587758)
(16.000000,0.904342)(17.000000,0.587959)(18.000000,0.904462)(19.000000,0.588010)(20.000000,0.904492)
(21.000000,0.588023)(22.000000,0.904500)(23.000000,0.588026)(24.000000,0.904502)(25.000000,0.588027)
(26.000000,0.904502)(27.000000,0.588027)(28.000000,0.904502)(29.000000,0.588027)(30.000000,0.904502)
(31.000000,0.588027)(32.000000,0.904502)(33.000000,0.588027)(34.000000,0.904502)(35.000000,0.588027)
(36.000000,0.904502)(37.000000,0.588027)(38.000000,0.904502)(39.000000,0.588027)(40.000000,0.904502)
(41.000000,0.588027)(42.000000,0.904502)(43.000000,0.588027)(44.000000,0.904502)(45.000000,0.588027)
(46.000000,0.904502)(47.000000,0.588027)(48.000000,0.904502)(49.000000,0.588027)(50.000000,0.904502)

\newrgbcolor{color232.0104}{0.75        0.75           0}
\psline[plotstyle=line,linejoin=1,showpoints=true,dotstyle=B+,dotsize=\MarkerSize,linestyle=solid,linewidth=\LineWidth,linecolor=color232.0104]
(1.000000,1.098612)(2.000000,0.542974)(3.000000,0.416321)(4.000000,0.615042)(5.000000,0.456692)
(6.000000,0.650581)(7.000000,0.477011)(8.000000,0.669885)(9.000000,0.487958)(10.000000,0.678745)
(11.000000,0.492953)(12.000000,0.681944)(13.000000,0.494751)(14.000000,0.682877)(15.000000,0.495275)
(16.000000,0.683121)(17.000000,0.495412)(18.000000,0.683183)(19.000000,0.495447)(20.000000,0.683198)
(21.000000,0.495455)(22.000000,0.683202)(23.000000,0.495458)(24.000000,0.683203)(25.000000,0.495458)
(26.000000,0.683203)(27.000000,0.495458)(28.000000,0.683203)(29.000000,0.495458)(30.000000,0.683203)
(31.000000,0.495458)(32.000000,0.683203)(33.000000,0.495458)(34.000000,0.683203)(35.000000,0.495458)
(36.000000,0.683203)(37.000000,0.495458)(38.000000,0.683203)(39.000000,0.495458)(40.000000,0.683203)
(41.000000,0.495458)(42.000000,0.683203)(43.000000,0.495458)(44.000000,0.683203)(45.000000,0.495458)
(46.000000,0.683203)(47.000000,0.495458)(48.000000,0.683203)(49.000000,0.495458)(50.000000,0.683203)

{ \small 
\rput[tr](48.800000,38.800000){%
\psshadowbox[framesep=0pt,linewidth=\AxesLineWidth]{\psframebox*{\begin{tabular}{l}
\Rnode{a1}{\hspace*{0.0ex}} \hspace*{0.4cm} \Rnode{a2}{~~user 1} \\
\Rnode{a3}{\hspace*{0.0ex}} \hspace*{0.4cm} \Rnode{a4}{~~user 2} \\
\Rnode{a5}{\hspace*{0.0ex}} \hspace*{0.4cm} \Rnode{a6}{~~user 3} \\
\Rnode{a7}{\hspace*{0.0ex}} \hspace*{0.4cm} \Rnode{a8}{~~user 4} \\
\Rnode{a9}{\hspace*{0.0ex}} \hspace*{0.4cm} \Rnode{a10}{~~user 5} \\
\Rnode{a11}{\hspace*{0.0ex}} \hspace*{0.4cm} \Rnode{a12}{~~user 6} \\
\end{tabular}}
\ncline[linestyle=solid,linewidth=\LineWidth,linecolor=color227.0109]{a1}{a2} \ncput{\psdot[dotstyle=*,dotsize=\MarkerSize,linecolor=color227.0109]}
\ncline[linestyle=solid,linewidth=\LineWidth,linecolor=color228.0104]{a3}{a4} \ncput{\psdot[dotstyle=Bsquare,dotsize=\MarkerSize,linecolor=color228.0104]}
\ncline[linestyle=solid,linewidth=\LineWidth,linecolor=color229.0104]{a5}{a6} \ncput{\psdot[dotstyle=Bo,dotsize=\MarkerSize,linecolor=color229.0104]}
\ncline[linestyle=solid,linewidth=\LineWidth,linecolor=color230.0104]{a7}{a8} \ncput{\psdot[dotstyle=Bpentagon,dotsize=\MarkerSize,linecolor=color230.0104]}
\ncline[linestyle=solid,linewidth=\LineWidth,linecolor=color231.0104]{a9}{a10} \ncput{\psdot[dotstyle=Basterisk,dotsize=\MarkerSize,linecolor=color231.0104]}
\ncline[linestyle=solid,linewidth=\LineWidth,linecolor=color232.0104]{a11}{a12} \ncput{\psdot[dotstyle=B+,dotsize=\MarkerSize,linecolor=color232.0104]}
}%
}%
} 

\end{pspicture}%
\caption{The convergence of powers $P_i(n)$ of Algorithm in (\ref{alg:UE_first}) and (\ref{alg:eNodeB_first}) with number of iterations $n$ for different users and $P_T = 45$. It can be observed that powers don't converge and fluctuate around optimal powers. Thus the power allocation algorithm is not optimal even though the power allocation optimization problem has optimal solution.}
\label{fig:sim:powers_iter}
\end{figure}
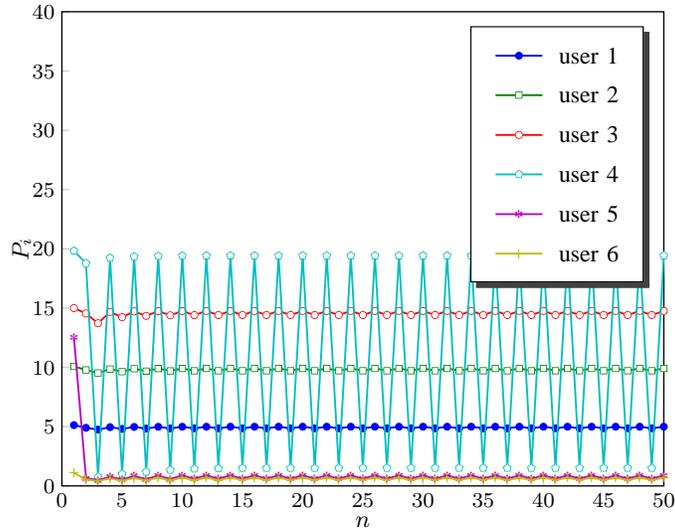
\begin{figure}[tb]
\centering
%
\psset{xunit=0.020000\plotwidth,yunit=0.015\plotwidth}%
\begin{pspicture}(-5.069124,-7.222222)(50.921659,65.000000)%


\psline[linewidth=\AxesLineWidth,linecolor=GridColor](0.000000,0.000000)(0.000000,0.988957)
\psline[linewidth=\AxesLineWidth,linecolor=GridColor](5.000000,0.000000)(5.000000,0.988957)
\psline[linewidth=\AxesLineWidth,linecolor=GridColor](10.000000,0.000000)(10.000000,0.988957)
\psline[linewidth=\AxesLineWidth,linecolor=GridColor](15.000000,0.000000)(15.000000,0.988957)
\psline[linewidth=\AxesLineWidth,linecolor=GridColor](20.000000,0.000000)(20.000000,0.988957)
\psline[linewidth=\AxesLineWidth,linecolor=GridColor](25.000000,0.000000)(25.000000,0.988957)
\psline[linewidth=\AxesLineWidth,linecolor=GridColor](30.000000,0.000000)(30.000000,0.988957)
\psline[linewidth=\AxesLineWidth,linecolor=GridColor](35.000000,0.000000)(35.000000,0.988957)
\psline[linewidth=\AxesLineWidth,linecolor=GridColor](40.000000,0.000000)(40.000000,0.988957)
\psline[linewidth=\AxesLineWidth,linecolor=GridColor](45.000000,0.000000)(45.000000,0.988957)
\psline[linewidth=\AxesLineWidth,linecolor=GridColor](50.000000,0.000000)(50.000000,0.988957)
\psline[linewidth=\AxesLineWidth,linecolor=GridColor](0.000000,0.000000)(0.600000,0.000000)
\psline[linewidth=\AxesLineWidth,linecolor=GridColor](0.000000,10.000000)(0.600000,10.000000)
\psline[linewidth=\AxesLineWidth,linecolor=GridColor](0.000000,20.000000)(0.600000,20.000000)
\psline[linewidth=\AxesLineWidth,linecolor=GridColor](0.000000,30.000000)(0.600000,30.000000)
\psline[linewidth=\AxesLineWidth,linecolor=GridColor](0.000000,40.000000)(0.600000,40.000000)
\psline[linewidth=\AxesLineWidth,linecolor=GridColor](0.000000,50.000000)(0.600000,50.000000)
\psline[linewidth=\AxesLineWidth,linecolor=GridColor](0.000000,60.000000)(0.600000,60.000000)

{ \footnotesize 
\rput[t](0.000000,-0.988957){$0$}
\rput[t](5.000000,-0.988957){$5$}
\rput[t](10.000000,-0.988957){$10$}
\rput[t](15.000000,-0.988957){$15$}
\rput[t](20.000000,-0.988957){$20$}
\rput[t](25.000000,-0.988957){$25$}
\rput[t](30.000000,-0.988957){$30$}
\rput[t](35.000000,-0.988957){$35$}
\rput[t](40.000000,-0.988957){$40$}
\rput[t](45.000000,-0.988957){$45$}
\rput[t](50.000000,-0.988957){$50$}
\rput[r](-0.600000,0.000000){$0$}
\rput[r](-0.600000,10.000000){$10$}
\rput[r](-0.600000,20.000000){$20$}
\rput[r](-0.600000,30.000000){$30$}
\rput[r](-0.600000,40.000000){$40$}
\rput[r](-0.600000,50.000000){$50$}
\rput[r](-0.600000,60.000000){$60$}
} 

\psframe[linewidth=\AxesLineWidth,dimen=middle](0.000000,0.000000)(50.000000,65.000000)

{ \small 
\rput[b](25.000000,-7.222222){
\begin{tabular}{c}
$n$\\
\end{tabular}
}

\rput[t]{90}(-5.069124,32.500000){
\begin{tabular}{c}
$w_i$\\
\end{tabular}
}
} 

\newrgbcolor{color366.0116}{0  0  1}
\psline[plotstyle=line,linejoin=1,showpoints=true,dotstyle=*,dotsize=\MarkerSize,linestyle=solid,linewidth=\LineWidth,linecolor=color366.0116]
(1.000000,7.691560)(2.000000,11.699984)(3.000000,13.937218)(4.000000,10.787707)(5.000000,13.117984)
(6.000000,10.407054)(7.000000,12.745782)(8.000000,10.216333)(9.000000,12.555697)(10.000000,10.132258)
(11.000000,12.471290)(12.000000,10.102424)(13.000000,12.441257)(14.000000,10.093767)(15.000000,12.432535)
(16.000000,10.091508)(17.000000,12.430258)(18.000000,10.090939)(19.000000,12.429684)(20.000000,10.090797)
(21.000000,12.429542)(22.000000,10.090762)(23.000000,12.429506)(24.000000,10.090753)(25.000000,12.429497)
(26.000000,10.090751)(27.000000,12.429495)(28.000000,10.090750)(29.000000,12.429494)(30.000000,10.090750)
(31.000000,12.429494)(32.000000,10.090750)(33.000000,12.429494)(34.000000,10.090750)(35.000000,12.429494)
(36.000000,10.090750)(37.000000,12.429494)(38.000000,10.090750)(39.000000,12.429494)(40.000000,10.090750)
(41.000000,12.429494)(42.000000,10.090750)(43.000000,12.429494)(44.000000,10.090750)(45.000000,12.429494)
(46.000000,10.090750)(47.000000,12.429494)(48.000000,10.090750)(49.000000,12.429494)(50.000000,10.090750)

\newrgbcolor{color367.0111}{0         0.5           0}
\psline[plotstyle=line,linejoin=1,showpoints=true,dotstyle=Bsquare,dotsize=\MarkerSize,linestyle=solid,linewidth=\LineWidth,linecolor=color367.0111]
(1.000000,15.123292)(2.000000,23.347315)(3.000000,27.980654)(4.000000,21.458753)(5.000000,26.292643)
(6.000000,20.673451)(7.000000,25.519824)(8.000000,20.280638)(9.000000,25.124780)(10.000000,20.107615)
(11.000000,24.949341)(12.000000,20.046240)(13.000000,24.886920)(14.000000,20.028431)(15.000000,24.868790)
(16.000000,20.023784)(17.000000,24.864057)(18.000000,20.022614)(19.000000,24.862866)(20.000000,20.022323)
(21.000000,24.862569)(22.000000,20.022250)(23.000000,24.862495)(24.000000,20.022232)(25.000000,24.862477)
(26.000000,20.022228)(27.000000,24.862472)(28.000000,20.022226)(29.000000,24.862471)(30.000000,20.022226)
(31.000000,24.862471)(32.000000,20.022226)(33.000000,24.862471)(34.000000,20.022226)(35.000000,24.862471)
(36.000000,20.022226)(37.000000,24.862471)(38.000000,20.022226)(39.000000,24.862471)(40.000000,20.022226)
(41.000000,24.862471)(42.000000,20.022226)(43.000000,24.862471)(44.000000,20.022226)(45.000000,24.862471)
(46.000000,20.022226)(47.000000,24.862471)(48.000000,20.022226)(49.000000,24.862471)(50.000000,20.022226)

\newrgbcolor{color368.0111}{1  0  0}
\psline[plotstyle=line,linejoin=1,showpoints=true,dotstyle=Bo,dotsize=\MarkerSize,linestyle=solid,linewidth=\LineWidth,linecolor=color368.0111]
(1.000000,22.500000)(2.000000,34.719955)(3.000000,40.294539)(4.000000,31.946948)(5.000000,38.819143)
(6.000000,30.783314)(7.000000,37.800241)(8.000000,30.199836)(9.000000,37.256984)(10.000000,29.942587)
(11.000000,37.012406)(12.000000,29.851301)(13.000000,36.924965)(14.000000,29.824811)(15.000000,36.899530)
(16.000000,29.817898)(17.000000,36.892887)(18.000000,29.816158)(19.000000,36.891215)(20.000000,29.815724)
(21.000000,36.890798)(22.000000,29.815616)(23.000000,36.890694)(24.000000,29.815589)(25.000000,36.890669)
(26.000000,29.815582)(27.000000,36.890662)(28.000000,29.815581)(29.000000,36.890661)(30.000000,29.815580)
(31.000000,36.890660)(32.000000,29.815580)(33.000000,36.890660)(34.000000,29.815580)(35.000000,36.890660)
(36.000000,29.815580)(37.000000,36.890660)(38.000000,29.815580)(39.000000,36.890660)(40.000000,29.815580)
(41.000000,36.890660)(42.000000,29.815580)(43.000000,36.890660)(44.000000,29.815580)(45.000000,36.890660)
(46.000000,29.815580)(47.000000,36.890660)(48.000000,29.815580)(49.000000,36.890660)(50.000000,29.815580)

\newrgbcolor{color369.0111}{0        0.75        0.75}
\psline[plotstyle=line,linejoin=1,showpoints=true,dotstyle=Bpentagon,dotsize=\MarkerSize,linestyle=solid,linewidth=\LineWidth,linecolor=color369.0111]
(1.000000,29.756721)(2.000000,44.824862)(3.000000,2.238871)(4.000000,41.875876)(5.000000,2.709741)
(6.000000,40.453959)(7.000000,3.125687)(8.000000,39.725749)(9.000000,3.481917)(10.000000,39.402299)
(11.000000,3.710437)(12.000000,39.287212)(13.000000,3.809598)(14.000000,39.253786)(15.000000,3.840707)
(16.000000,39.245060)(17.000000,3.849018)(18.000000,39.242864)(19.000000,3.851123)(20.000000,39.242316)
(21.000000,3.851648)(22.000000,39.242180)(23.000000,3.851779)(24.000000,39.242146)(25.000000,3.851811)
(26.000000,39.242138)(27.000000,3.851819)(28.000000,39.242136)(29.000000,3.851821)(30.000000,39.242135)
(31.000000,3.851822)(32.000000,39.242135)(33.000000,3.851822)(34.000000,39.242135)(35.000000,3.851822)
(36.000000,39.242135)(37.000000,3.851822)(38.000000,39.242135)(39.000000,3.851822)(40.000000,39.242135)
(41.000000,3.851822)(42.000000,39.242135)(43.000000,3.851822)(44.000000,39.242135)(45.000000,3.851822)
(46.000000,39.242135)(47.000000,3.851822)(48.000000,39.242135)(49.000000,3.851822)(50.000000,39.242135)

\newrgbcolor{color370.0111}{0.75           0        0.75}
\psline[plotstyle=line,linejoin=1,showpoints=true,dotstyle=Basterisk,dotsize=\MarkerSize,linestyle=solid,linewidth=\LineWidth,linecolor=color370.0111]
(1.000000,18.750000)(2.000000,1.575460)(3.000000,1.399730)(4.000000,1.695316)(5.000000,1.451744)
(6.000000,1.761498)(7.000000,1.479237)(8.000000,1.799786)(9.000000,1.494443)(10.000000,1.817967)
(11.000000,1.501477)(12.000000,1.824628)(13.000000,1.504024)(14.000000,1.826583)(15.000000,1.504768)
(16.000000,1.827094)(17.000000,1.504962)(18.000000,1.827223)(19.000000,1.505011)(20.000000,1.827255)
(21.000000,1.505024)(22.000000,1.827263)(23.000000,1.505027)(24.000000,1.827265)(25.000000,1.505027)
(26.000000,1.827266)(27.000000,1.505028)(28.000000,1.827266)(29.000000,1.505028)(30.000000,1.827266)
(31.000000,1.505028)(32.000000,1.827266)(33.000000,1.505028)(34.000000,1.827266)(35.000000,1.505028)
(36.000000,1.827266)(37.000000,1.505028)(38.000000,1.827266)(39.000000,1.505028)(40.000000,1.827266)
(41.000000,1.505028)(42.000000,1.827266)(43.000000,1.505028)(44.000000,1.827266)(45.000000,1.505028)
(46.000000,1.827266)(47.000000,1.505028)(48.000000,1.827266)(49.000000,1.505028)(50.000000,1.827266)

\newrgbcolor{color371.0111}{0.75        0.75           0}
\psline[plotstyle=line,linejoin=1,showpoints=true,dotstyle=B+,dotsize=\MarkerSize,linestyle=solid,linewidth=\LineWidth,linecolor=color371.0111]
(1.000000,1.647918)(2.000000,1.295935)(3.000000,1.222562)(4.000000,1.338847)(5.000000,1.245666)
(6.000000,1.360316)(7.000000,1.257395)(8.000000,1.372061)(9.000000,1.263743)(10.000000,1.377472)
(11.000000,1.266645)(12.000000,1.379429)(13.000000,1.267691)(14.000000,1.380000)(15.000000,1.267996)
(16.000000,1.380149)(17.000000,1.268076)(18.000000,1.380187)(19.000000,1.268096)(20.000000,1.380196)
(21.000000,1.268101)(22.000000,1.380199)(23.000000,1.268102)(24.000000,1.380199)(25.000000,1.268102)
(26.000000,1.380199)(27.000000,1.268102)(28.000000,1.380199)(29.000000,1.268102)(30.000000,1.380199)
(31.000000,1.268102)(32.000000,1.380199)(33.000000,1.268102)(34.000000,1.380199)(35.000000,1.268102)
(36.000000,1.380199)(37.000000,1.268102)(38.000000,1.380199)(39.000000,1.268102)(40.000000,1.380199)
(41.000000,1.268102)(42.000000,1.380199)(43.000000,1.268102)(44.000000,1.380199)(45.000000,1.268102)
(46.000000,1.380199)(47.000000,1.268102)(48.000000,1.380199)(49.000000,1.268102)(50.000000,1.380199)

{ \small 
\rput[tr](48.800000,63.022086){%
\psshadowbox[framesep=0pt,linewidth=\AxesLineWidth]{\psframebox*{\begin{tabular}{l}
\Rnode{a1}{\hspace*{0.0ex}} \hspace*{0.4cm} \Rnode{a2}{~~user 1} \\
\Rnode{a3}{\hspace*{0.0ex}} \hspace*{0.4cm} \Rnode{a4}{~~user 2} \\
\Rnode{a5}{\hspace*{0.0ex}} \hspace*{0.4cm} \Rnode{a6}{~~user 3} \\
\Rnode{a7}{\hspace*{0.0ex}} \hspace*{0.4cm} \Rnode{a8}{~~user 4} \\
\Rnode{a9}{\hspace*{0.0ex}} \hspace*{0.4cm} \Rnode{a10}{~~user 5} \\
\Rnode{a11}{\hspace*{0.0ex}} \hspace*{0.4cm} \Rnode{a12}{~~user 6} \\
\end{tabular}}
\ncline[linestyle=solid,linewidth=\LineWidth,linecolor=color366.0116]{a1}{a2} \ncput{\psdot[dotstyle=*,dotsize=\MarkerSize,linecolor=color366.0116]}
\ncline[linestyle=solid,linewidth=\LineWidth,linecolor=color367.0111]{a3}{a4} \ncput{\psdot[dotstyle=Bsquare,dotsize=\MarkerSize,linecolor=color367.0111]}
\ncline[linestyle=solid,linewidth=\LineWidth,linecolor=color368.0111]{a5}{a6} \ncput{\psdot[dotstyle=Bo,dotsize=\MarkerSize,linecolor=color368.0111]}
\ncline[linestyle=solid,linewidth=\LineWidth,linecolor=color369.0111]{a7}{a8} \ncput{\psdot[dotstyle=Bpentagon,dotsize=\MarkerSize,linecolor=color369.0111]}
\ncline[linestyle=solid,linewidth=\LineWidth,linecolor=color370.0111]{a9}{a10} \ncput{\psdot[dotstyle=Basterisk,dotsize=\MarkerSize,linecolor=color370.0111]}
\ncline[linestyle=solid,linewidth=\LineWidth,linecolor=color371.0111]{a11}{a12} \ncput{\psdot[dotstyle=B+,dotsize=\MarkerSize,linecolor=color371.0111]}
}%
}%
} 

\end{pspicture}%

\caption{The convergence of bids $w_i(n)$ of Algorithm in (\ref{alg:UE_first}) and (\ref{alg:eNodeB_first}) with number of iterations $n$ for different users and $P_T= 45$. For these algorithms the process of bidding doesn't converge for all users and bids fluctuate around optimal bid values.}
\label{fig:sim:bids_iter}
\end{figure}
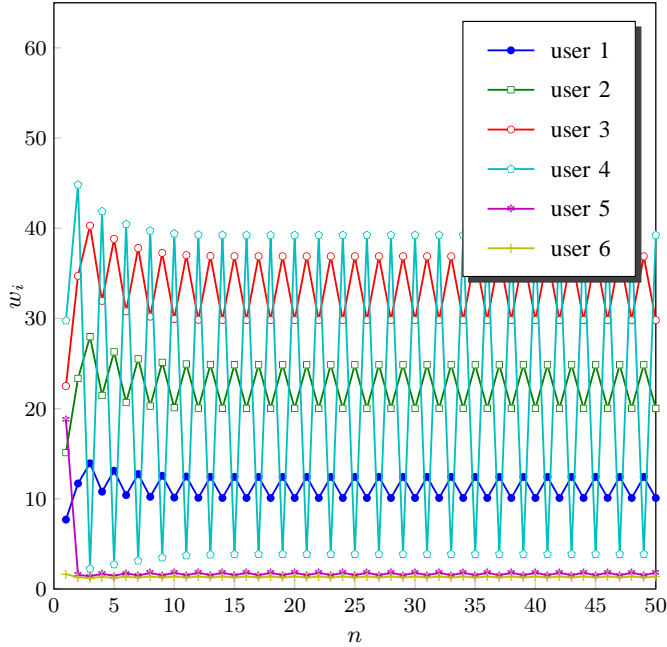
\begin{figure}[tb]
\centering
%
\psset{xunit=0.020000\plotwidth,yunit=0.030335\plotwidth}%
\begin{pspicture}(-5.069124,-2.888889)(50.921659,26.076023)%


\psline[linewidth=\AxesLineWidth,linecolor=GridColor](0.000000,0.000000)(0.000000,0.395583)
\psline[linewidth=\AxesLineWidth,linecolor=GridColor](5.000000,0.000000)(5.000000,0.395583)
\psline[linewidth=\AxesLineWidth,linecolor=GridColor](10.000000,0.000000)(10.000000,0.395583)
\psline[linewidth=\AxesLineWidth,linecolor=GridColor](15.000000,0.000000)(15.000000,0.395583)
\psline[linewidth=\AxesLineWidth,linecolor=GridColor](20.000000,0.000000)(20.000000,0.395583)
\psline[linewidth=\AxesLineWidth,linecolor=GridColor](25.000000,0.000000)(25.000000,0.395583)
\psline[linewidth=\AxesLineWidth,linecolor=GridColor](30.000000,0.000000)(30.000000,0.395583)
\psline[linewidth=\AxesLineWidth,linecolor=GridColor](35.000000,0.000000)(35.000000,0.395583)
\psline[linewidth=\AxesLineWidth,linecolor=GridColor](40.000000,0.000000)(40.000000,0.395583)
\psline[linewidth=\AxesLineWidth,linecolor=GridColor](45.000000,0.000000)(45.000000,0.395583)
\psline[linewidth=\AxesLineWidth,linecolor=GridColor](50.000000,0.000000)(50.000000,0.395583)
\psline[linewidth=\AxesLineWidth,linecolor=GridColor](0.000000,0.000000)(0.600000,0.000000)
\psline[linewidth=\AxesLineWidth,linecolor=GridColor](0.000000,5.000000)(0.600000,5.000000)
\psline[linewidth=\AxesLineWidth,linecolor=GridColor](0.000000,10.000000)(0.600000,10.000000)
\psline[linewidth=\AxesLineWidth,linecolor=GridColor](0.000000,15.000000)(0.600000,15.000000)
\psline[linewidth=\AxesLineWidth,linecolor=GridColor](0.000000,20.000000)(0.600000,20.000000)
\psline[linewidth=\AxesLineWidth,linecolor=GridColor](0.000000,25.000000)(0.600000,25.000000)

{ \footnotesize 
\rput[t](0.000000,-0.395583){$0$}
\rput[t](5.000000,-0.395583){$5$}
\rput[t](10.000000,-0.395583){$10$}
\rput[t](15.000000,-0.395583){$15$}
\rput[t](20.000000,-0.395583){$20$}
\rput[t](25.000000,-0.395583){$25$}
\rput[t](30.000000,-0.395583){$30$}
\rput[t](35.000000,-0.395583){$35$}
\rput[t](40.000000,-0.395583){$40$}
\rput[t](45.000000,-0.395583){$45$}
\rput[t](50.000000,-0.395583){$50$}
\rput[r](-0.600000,0.000000){$0$}
\rput[r](-0.600000,5.000000){$5$}
\rput[r](-0.600000,10.000000){$10$}
\rput[r](-0.600000,15.000000){$15$}
\rput[r](-0.600000,20.000000){$20$}
\rput[r](-0.600000,25.000000){$25$}
} 

\psframe[linewidth=\AxesLineWidth,dimen=middle](0.000000,0.000000)(50.000000,26.000000)

{ \small 
\rput[b](25.000000,-2.888889){
\begin{tabular}{c}
$n$\\
\end{tabular}
}

\rput[t]{90}(-5.069124,13.000000){
\begin{tabular}{c}
$P_i$\\
\end{tabular}
}
} 

\newrgbcolor{color420.0122}{0  0  1}
\psline[plotstyle=line,linejoin=1,showpoints=true,dotstyle=*,dotsize=\MarkerSize,linestyle=solid,linewidth=\LineWidth,linecolor=color420.0122]
(1.000000,5.127706)(2.000000,4.986013)(3.000000,4.880292)(4.000000,4.788877)(5.000000,4.792364)
(6.000000,4.849359)(7.000000,4.928508)(8.000000,4.931844)(9.000000,4.890775)(10.000000,4.819783)
(11.000000,4.817691)(12.000000,4.856552)(13.000000,4.914673)(14.000000,4.916718)(15.000000,4.885770)
(16.000000,4.833377)(17.000000,4.831595)(18.000000,4.859902)(19.000000,4.903216)(20.000000,4.904742)
(21.000000,4.881661)(22.000000,4.843051)(23.000000,4.841704)(24.000000,4.862480)(25.000000,4.894732)
(26.000000,4.895877)(27.000000,4.878692)(28.000000,4.850205)(29.000000,4.849201)(30.000000,4.864493)
(31.000000,4.888475)(32.000000,4.889331)(33.000000,4.876549)(34.000000,4.855510)(35.000000,4.854764)
(36.000000,4.866041)(37.000000,4.883856)(38.000000,4.884492)(39.000000,4.874994)(40.000000,4.859443)
(41.000000,4.858890)(42.000000,4.867217)(43.000000,4.880441)(44.000000,4.880914)(45.000000,4.873861)
(46.000000,4.862359)(47.000000,4.861949)(48.000000,4.868103)(49.000000,4.877914)(50.000000,4.878266)

\newrgbcolor{color421.0117}{0         0.5           0}
\psline[plotstyle=line,linejoin=1,showpoints=true,dotstyle=Bsquare,dotsize=\MarkerSize,linestyle=solid,linewidth=\LineWidth,linecolor=color421.0117]
(1.000000,10.082195)(2.000000,9.899066)(3.000000,9.750141)(4.000000,9.605117)(5.000000,9.611066)
(6.000000,9.703282)(7.000000,9.819884)(8.000000,9.824585)(9.000000,9.765617)(10.000000,9.656520)
(11.000000,9.653129)(12.000000,9.714350)(13.000000,9.800229)(14.000000,9.803151)(15.000000,9.758252)
(16.000000,9.678277)(17.000000,9.675452)(18.000000,9.719467)(19.000000,9.783746)(20.000000,9.785953)
(21.000000,9.752173)(22.000000,9.693483)(23.000000,9.691379)(24.000000,9.723390)(25.000000,9.771409)
(26.000000,9.773081)(27.000000,9.747762)(28.000000,9.704589)(29.000000,9.703037)(30.000000,9.726444)
(31.000000,9.762238)(32.000000,9.763496)(33.000000,9.744569)(34.000000,9.712753)(35.000000,9.711608)
(36.000000,9.728786)(37.000000,9.755423)(38.000000,9.756365)(39.000000,9.742247)(40.000000,9.718767)
(41.000000,9.717923)(42.000000,9.730563)(43.000000,9.750362)(44.000000,9.751065)(45.000000,9.740551)
(46.000000,9.723207)(47.000000,9.722584)(48.000000,9.731901)(49.000000,9.746605)(50.000000,9.747128)

\newrgbcolor{color422.0117}{1  0  0}
\psline[plotstyle=line,linejoin=1,showpoints=true,dotstyle=Bo,dotsize=\MarkerSize,linestyle=solid,linewidth=\LineWidth,linecolor=color422.0117]
(1.000000,13.008196)(2.000000,13.891822)(3.000000,14.487052)(4.000000,14.124501)(5.000000,14.144739)
(6.000000,14.390293)(7.000000,14.613310)(8.000000,14.621264)(9.000000,14.516605)(10.000000,14.278326)
(11.000000,14.269378)(12.000000,14.414289)(13.000000,14.579383)(14.000000,14.584499)(15.000000,14.502669)
(16.000000,14.332828)(17.000000,14.326012)(18.000000,14.425123)(19.000000,14.550012)(20.000000,14.553997)
(21.000000,14.490992)(22.000000,14.368348)(23.000000,14.363545)(24.000000,14.433323)(25.000000,14.527413)
(26.000000,14.530509)(27.000000,14.482417)(28.000000,14.393170)(29.000000,14.389754)(30.000000,14.439646)
(31.000000,14.510239)(32.000000,14.512614)(33.000000,14.476154)(34.000000,14.410875)(35.000000,14.408419)
(36.000000,14.444460)(37.000000,14.497255)(38.000000,14.499061)(39.000000,14.471569)(40.000000,14.423651)
(41.000000,14.421871)(42.000000,14.448092)(43.000000,14.487482)(44.000000,14.488846)(45.000000,14.468204)
(46.000000,14.432942)(47.000000,14.431644)(48.000000,14.450816)(49.000000,14.480152)(50.000000,14.481176)

\newrgbcolor{color423.0117}{0        0.75        0.75}
\psline[plotstyle=line,linejoin=1,showpoints=true,dotstyle=Bpentagon,dotsize=\MarkerSize,linestyle=solid,linewidth=\LineWidth,linecolor=color423.0117]
(1.000000,13.008196)(2.000000,13.891822)(3.000000,15.048344)(4.000000,10.358879)(5.000000,7.607086)
(6.000000,5.332458)(7.000000,9.120316)(8.000000,12.124600)(9.000000,13.954518)(10.000000,10.346829)
(11.000000,8.180618)(12.000000,6.483237)(13.000000,9.327559)(14.000000,11.490583)(15.000000,12.860747)
(16.000000,10.217828)(17.000000,8.579866)(18.000000,7.330424)(19.000000,9.457779)(20.000000,11.029192)
(21.000000,12.054197)(22.000000,10.115729)(23.000000,8.884972)(24.000000,7.964583)(25.000000,9.552181)
(26.000000,10.700128)(27.000000,11.464722)(28.000000,10.039176)(29.000000,9.117673)(30.000000,8.438810)
(31.000000,9.621507)(32.000000,10.463322)(33.000000,11.032506)(34.000000,9.982074)(35.000000,9.293941)
(36.000000,8.792701)(37.000000,9.672553)(38.000000,10.291567)(39.000000,10.714693)(40.000000,9.939557)
(41.000000,9.426721)(42.000000,9.056331)(43.000000,9.710196)(44.000000,10.166280)(45.000000,10.480525)
(46.000000,9.907938)(47.000000,9.526322)(48.000000,9.252452)(49.000000,9.737988)(50.000000,10.074511)

\newrgbcolor{color424.0117}{0.75           0        0.75}
\psline[plotstyle=line,linejoin=1,showpoints=true,dotstyle=Basterisk,dotsize=\MarkerSize,linestyle=solid,linewidth=\LineWidth,linecolor=color424.0117]
(1.000000,12.500000)(2.000000,4.718841)(3.000000,0.623164)(4.000000,0.512152)(5.000000,0.515414)
(6.000000,0.578774)(7.000000,0.712830)(8.000000,0.720221)(9.000000,0.640272)(10.000000,0.543353)
(11.000000,0.541068)(12.000000,0.588360)(13.000000,0.684005)(14.000000,0.688089)(15.000000,0.631961)
(16.000000,0.558882)(17.000000,0.556777)(18.000000,0.592968)(19.000000,0.662159)(20.000000,0.664971)
(21.000000,0.625334)(22.000000,0.570699)(23.000000,0.569014)(24.000000,0.596579)(25.000000,0.647038)
(26.000000,0.649030)(27.000000,0.620653)(28.000000,0.579880)(29.000000,0.578568)(30.000000,0.599439)
(31.000000,0.636419)(32.000000,0.637846)(33.000000,0.617330)(34.000000,0.586945)(35.000000,0.585938)
(36.000000,0.601661)(37.000000,0.628851)(38.000000,0.629881)(39.000000,0.614946)(40.000000,0.592331)
(41.000000,0.591566)(42.000000,0.603364)(43.000000,0.623400)(44.000000,0.624148)(45.000000,0.613223)
(46.000000,0.596409)(47.000000,0.595831)(48.000000,0.604657)(49.000000,0.619442)(50.000000,0.619989)

\newrgbcolor{color425.0117}{0.75        0.75           0}
\psline[plotstyle=line,linejoin=1,showpoints=true,dotstyle=B+,dotsize=\MarkerSize,linestyle=solid,linewidth=\LineWidth,linecolor=color425.0117]
(1.000000,1.098612)(2.000000,0.666305)(3.000000,0.518986)(4.000000,0.442301)(5.000000,0.444652)
(6.000000,0.489148)(7.000000,0.575936)(8.000000,0.580434)(9.000000,0.530193)(10.000000,0.464551)
(11.000000,0.462940)(12.000000,0.495685)(13.000000,0.558109)(14.000000,0.560663)(15.000000,0.524769)
(16.000000,0.475421)(17.000000,0.473955)(18.000000,0.498808)(19.000000,0.544296)(20.000000,0.546089)
(21.000000,0.520417)(22.000000,0.483602)(23.000000,0.482440)(24.000000,0.501248)(25.000000,0.534581)
(26.000000,0.535868)(27.000000,0.517327)(28.000000,0.489905)(29.000000,0.489007)(30.000000,0.503175)
(31.000000,0.527684)(32.000000,0.528614)(33.000000,0.515127)(34.000000,0.494723)(35.000000,0.494038)
(36.000000,0.504669)(37.000000,0.522730)(38.000000,0.523406)(39.000000,0.513545)(40.000000,0.498377)
(41.000000,0.497859)(42.000000,0.505813)(43.000000,0.519141)(44.000000,0.519635)(45.000000,0.512399)
(46.000000,0.501133)(47.000000,0.500743)(48.000000,0.506679)(49.000000,0.516526)(50.000000,0.516888)

{ \small 
\rput[tr](48.800000,25.208834){%
\psshadowbox[framesep=0pt,linewidth=\AxesLineWidth]{\psframebox*{\begin{tabular}{l}
\Rnode{a1}{\hspace*{0.0ex}} \hspace*{0.4cm} \Rnode{a2}{~~user 1} \\
\Rnode{a3}{\hspace*{0.0ex}} \hspace*{0.4cm} \Rnode{a4}{~~user 2} \\
\Rnode{a5}{\hspace*{0.0ex}} \hspace*{0.4cm} \Rnode{a6}{~~user 3} \\
\Rnode{a7}{\hspace*{0.0ex}} \hspace*{0.4cm} \Rnode{a8}{~~user 4} \\
\Rnode{a9}{\hspace*{0.0ex}} \hspace*{0.4cm} \Rnode{a10}{~~user 5} \\
\Rnode{a11}{\hspace*{0.0ex}} \hspace*{0.4cm} \Rnode{a12}{~~user 6} \\
\end{tabular}}
\ncline[linestyle=solid,linewidth=\LineWidth,linecolor=color420.0122]{a1}{a2} \ncput{\psdot[dotstyle=*,dotsize=\MarkerSize,linecolor=color420.0122]}
\ncline[linestyle=solid,linewidth=\LineWidth,linecolor=color421.0117]{a3}{a4} \ncput{\psdot[dotstyle=Bsquare,dotsize=\MarkerSize,linecolor=color421.0117]}
\ncline[linestyle=solid,linewidth=\LineWidth,linecolor=color422.0117]{a5}{a6} \ncput{\psdot[dotstyle=Bo,dotsize=\MarkerSize,linecolor=color422.0117]}
\ncline[linestyle=solid,linewidth=\LineWidth,linecolor=color423.0117]{a7}{a8} \ncput{\psdot[dotstyle=Bpentagon,dotsize=\MarkerSize,linecolor=color423.0117]}
\ncline[linestyle=solid,linewidth=\LineWidth,linecolor=color424.0117]{a9}{a10} \ncput{\psdot[dotstyle=Basterisk,dotsize=\MarkerSize,linecolor=color424.0117]}
\ncline[linestyle=solid,linewidth=\LineWidth,linecolor=color425.0117]{a11}{a12} \ncput{\psdot[dotstyle=B+,dotsize=\MarkerSize,linecolor=color425.0117]}
}%
}%
} 

\end{pspicture}%

\caption{The convergence of powers $P_i(n)$ of Algorithm in (\ref{alg:OuP_UE}) and (\ref{alg:eNodeB_first}) with number of iterations $n$ for different users and $P_T= 45$. It can be observed that there is no fluctuation in powers when using Algorithm \eqref{alg:OuP_UE} and \eqref{alg:eNodeB_first}. This is due to the introduction of fluctuation decay function in our algorithm which damps the fluctuations and the powers converge for all users.}
\label{fig:sim:powers_iteP_robust}
\end{figure}
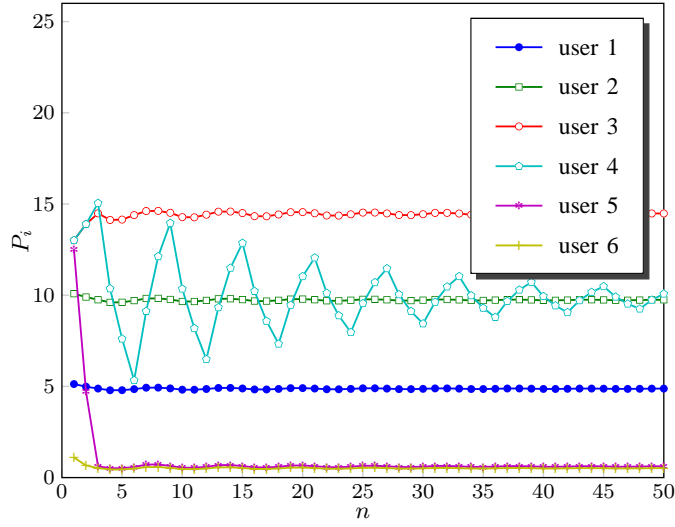
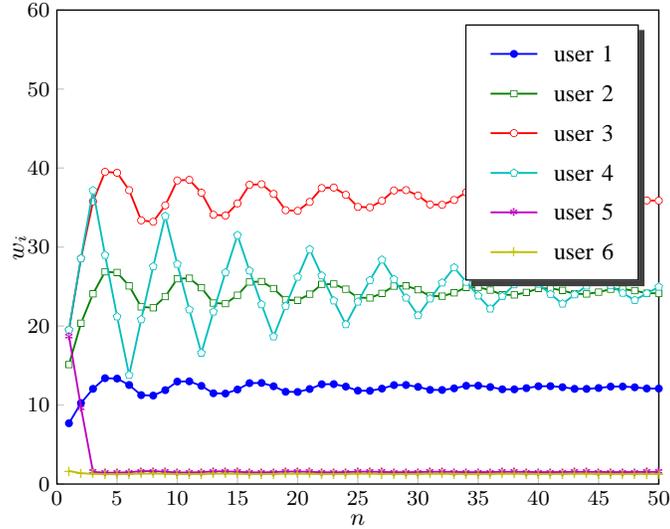
\begin{figure}[tb]
\centering
%
\psset{xunit=0.020000\plotwidth,yunit=0.013145\plotwidth}%
\begin{pspicture}(-5.069124,-6.666667)(50.921659,61.403509)%


\psline[linewidth=\AxesLineWidth,linecolor=GridColor](0.000000,0.000000)(0.000000,0.912883)
\psline[linewidth=\AxesLineWidth,linecolor=GridColor](5.000000,0.000000)(5.000000,0.912883)
\psline[linewidth=\AxesLineWidth,linecolor=GridColor](10.000000,0.000000)(10.000000,0.912883)
\psline[linewidth=\AxesLineWidth,linecolor=GridColor](15.000000,0.000000)(15.000000,0.912883)
\psline[linewidth=\AxesLineWidth,linecolor=GridColor](20.000000,0.000000)(20.000000,0.912883)
\psline[linewidth=\AxesLineWidth,linecolor=GridColor](25.000000,0.000000)(25.000000,0.912883)
\psline[linewidth=\AxesLineWidth,linecolor=GridColor](30.000000,0.000000)(30.000000,0.912883)
\psline[linewidth=\AxesLineWidth,linecolor=GridColor](35.000000,0.000000)(35.000000,0.912883)
\psline[linewidth=\AxesLineWidth,linecolor=GridColor](40.000000,0.000000)(40.000000,0.912883)
\psline[linewidth=\AxesLineWidth,linecolor=GridColor](45.000000,0.000000)(45.000000,0.912883)
\psline[linewidth=\AxesLineWidth,linecolor=GridColor](50.000000,0.000000)(50.000000,0.912883)
\psline[linewidth=\AxesLineWidth,linecolor=GridColor](0.000000,0.000000)(0.600000,0.000000)
\psline[linewidth=\AxesLineWidth,linecolor=GridColor](0.000000,10.000000)(0.600000,10.000000)
\psline[linewidth=\AxesLineWidth,linecolor=GridColor](0.000000,20.000000)(0.600000,20.000000)
\psline[linewidth=\AxesLineWidth,linecolor=GridColor](0.000000,30.000000)(0.600000,30.000000)
\psline[linewidth=\AxesLineWidth,linecolor=GridColor](0.000000,40.000000)(0.600000,40.000000)
\psline[linewidth=\AxesLineWidth,linecolor=GridColor](0.000000,50.000000)(0.600000,50.000000)
\psline[linewidth=\AxesLineWidth,linecolor=GridColor](0.000000,60.000000)(0.600000,60.000000)

{ \footnotesize 
\rput[t](0.000000,-0.912883){$0$}
\rput[t](5.000000,-0.912883){$5$}
\rput[t](10.000000,-0.912883){$10$}
\rput[t](15.000000,-0.912883){$15$}
\rput[t](20.000000,-0.912883){$20$}
\rput[t](25.000000,-0.912883){$25$}
\rput[t](30.000000,-0.912883){$30$}
\rput[t](35.000000,-0.912883){$35$}
\rput[t](40.000000,-0.912883){$40$}
\rput[t](45.000000,-0.912883){$45$}
\rput[t](50.000000,-0.912883){$50$}
\rput[r](-0.600000,0.000000){$0$}
\rput[r](-0.600000,10.000000){$10$}
\rput[r](-0.600000,20.000000){$20$}
\rput[r](-0.600000,30.000000){$30$}
\rput[r](-0.600000,40.000000){$40$}
\rput[r](-0.600000,50.000000){$50$}
\rput[r](-0.600000,60.000000){$60$}
} 

\psframe[linewidth=\AxesLineWidth,dimen=middle](0.000000,0.000000)(50.000000,60.000000)

{ \small 
\rput[b](25.000000,-6.666667){
\begin{tabular}{c}
$n$\\
\end{tabular}
}

\rput[t]{90}(-5.069124,30.000000){
\begin{tabular}{c}
$w_i$\\
\end{tabular}
}
} 

\newrgbcolor{color399.011}{0  0  1}
\psline[plotstyle=line,linejoin=1,showpoints=true,dotstyle=*,dotsize=\MarkerSize,linestyle=solid,linewidth=\LineWidth,linecolor=color399.011]
(1.000000,7.691560)(2.000000,10.250913)(3.000000,12.053781)(4.000000,13.397566)(5.000000,13.350952)
(6.000000,12.535453)(7.000000,11.256890)(8.000000,11.199960)(9.000000,11.884981)(10.000000,12.970963)
(11.000000,13.000781)(12.000000,12.425736)(13.000000,11.490672)(14.000000,11.456357)(15.000000,11.965897)
(16.000000,12.773981)(17.000000,12.800114)(18.000000,12.374154)(19.000000,11.681282)(20.000000,11.656058)
(21.000000,12.031885)(22.000000,12.630477)(23.000000,12.650613)(24.000000,12.334249)(25.000000,11.820585)
(26.000000,11.801878)(27.000000,12.079305)(28.000000,12.522622)(29.000000,12.537847)(30.000000,12.302960)
(31.000000,11.922235)(32.000000,11.908391)(33.000000,12.113387)(34.000000,12.441724)(35.000000,12.453148)
(36.000000,12.278838)(37.000000,11.996694)(38.000000,11.986460)(39.000000,12.138045)(40.000000,12.381237)
(41.000000,12.389771)(42.000000,12.260468)(43.000000,12.051400)(44.000000,12.043838)(45.000000,12.155985)
(46.000000,12.336119)(47.000000,12.342477)(48.000000,12.246590)(49.000000,12.091684)(50.000000,12.086093)

\newrgbcolor{color400.0105}{0         0.5           0}
\psline[plotstyle=line,linejoin=1,showpoints=true,dotstyle=Bsquare,dotsize=\MarkerSize,linestyle=solid,linewidth=\LineWidth,linecolor=color400.0105]
(1.000000,15.123292)(2.000000,20.351827)(3.000000,24.081771)(4.000000,26.871685)(5.000000,26.775280)
(6.000000,25.082703)(7.000000,22.428969)(8.000000,22.311117)(9.000000,23.731243)(10.000000,25.987554)
(11.000000,26.049454)(12.000000,24.854659)(13.000000,22.913271)(14.000000,22.842149)(15.000000,23.899249)
(16.000000,25.578417)(17.000000,25.632713)(18.000000,24.747451)(19.000000,23.308518)(20.000000,23.256196)
(21.000000,24.036291)(22.000000,25.280202)(23.000000,25.322052)(24.000000,24.664515)(25.000000,23.597571)
(26.000000,23.558745)(27.000000,24.134788)(28.000000,25.056034)(29.000000,25.087679)(30.000000,24.599490)
(31.000000,23.808589)(32.000000,23.779846)(33.000000,24.205588)(34.000000,24.887890)(35.000000,24.911634)
(36.000000,24.549361)(37.000000,23.963203)(38.000000,23.941952)(39.000000,24.256815)(40.000000,24.762172)
(41.000000,24.779908)(42.000000,24.511186)(43.000000,24.076826)(44.000000,24.061117)(45.000000,24.294085)
(46.000000,24.668403)(47.000000,24.681615)(48.000000,24.482348)(49.000000,24.160502)(50.000000,24.148889)

\newrgbcolor{color401.0105}{1  0  0}
\psline[plotstyle=line,linejoin=1,showpoints=true,dotstyle=Bo,dotsize=\MarkerSize,linestyle=solid,linewidth=\LineWidth,linecolor=color401.0105]
(1.000000,19.512294)(2.000000,28.560668)(3.000000,35.781419)(4.000000,39.515306)(5.000000,39.405550)
(6.000000,37.198491)(7.000000,33.377328)(8.000000,33.204126)(9.000000,35.276530)(10.000000,38.425722)
(11.000000,38.506633)(12.000000,36.879693)(13.000000,34.087094)(14.000000,33.983082)(15.000000,35.518952)
(16.000000,37.879783)(17.000000,37.953218)(18.000000,36.728867)(19.000000,34.663533)(20.000000,34.587391)
(21.000000,35.716112)(22.000000,37.472058)(23.000000,37.529688)(24.000000,36.611809)(25.000000,35.083135)
(26.000000,35.026881)(27.000000,35.857469)(28.000000,37.161361)(29.000000,37.205414)(30.000000,36.519814)
(31.000000,35.388229)(32.000000,35.346738)(33.000000,35.958884)(34.000000,36.926326)(35.000000,36.959610)
(36.000000,36.448768)(37.000000,35.611030)(38.000000,35.580446)(39.000000,36.032156)(40.000000,36.749611)
(41.000000,36.774590)(42.000000,36.394592)(43.000000,35.774321)(44.000000,35.751769)(45.000000,36.085410)
(46.000000,36.617304)(47.000000,36.635971)(48.000000,36.353628)(49.000000,35.894320)(50.000000,35.877680)

\newrgbcolor{color402.0107}{0        0.75        0.75}
\psline[plotstyle=line,linejoin=1,showpoints=true,dotstyle=Bpentagon,dotsize=\MarkerSize,linestyle=solid,linewidth=\LineWidth,linecolor=color402.0107]
(1.000000,19.512294)(2.000000,28.560668)(3.000000,37.167748)(4.000000,28.980441)(5.000000,21.192433)
(6.000000,13.784251)(7.000000,20.831132)(8.000000,27.534332)(9.000000,33.910613)(10.000000,27.845307)
(11.000000,22.075809)(12.000000,16.587692)(13.000000,21.808150)(14.000000,26.774003)(15.000000,31.497669)
(16.000000,27.004379)(17.000000,22.730230)(18.000000,18.664533)(19.000000,22.531943)(20.000000,26.210738)
(21.000000,29.710115)(22.000000,26.381405)(23.000000,23.215037)(24.000000,20.203095)(25.000000,23.068143)
(26.000000,25.793461)(27.000000,28.385863)(28.000000,25.919894)(29.000000,23.574191)(30.000000,21.342889)
(31.000000,23.465369)(32.000000,25.484334)(33.000000,27.404833)(34.000000,25.577998)(35.000000,23.840258)
(36.000000,22.187270)(37.000000,23.759641)(38.000000,25.255327)(39.000000,26.678068)(40.000000,25.324715)
(41.000000,24.037366)(42.000000,22.812802)(43.000000,23.977644)(44.000000,25.085675)(45.000000,26.139667)
(46.000000,25.137079)(47.000000,24.183387)(48.000000,23.276208)(49.000000,24.139144)(50.000000,24.959994)

\newrgbcolor{color403.0107}{0.75           0        0.75}
\psline[plotstyle=line,linejoin=1,showpoints=true,dotstyle=Basterisk,dotsize=\MarkerSize,linestyle=solid,linewidth=\LineWidth,linecolor=color403.0107]
(1.000000,18.750000)(2.000000,9.701626)(3.000000,1.539147)(4.000000,1.432818)(5.000000,1.435881)
(6.000000,1.496114)(7.000000,1.628128)(8.000000,1.635585)(9.000000,1.555913)(10.000000,1.462268)
(11.000000,1.460099)(12.000000,1.505349)(13.000000,1.599226)(14.000000,1.603305)(15.000000,1.547755)
(16.000000,1.477053)(17.000000,1.475043)(18.000000,1.509800)(19.000000,1.577509)(20.000000,1.580295)
(21.000000,1.541268)(22.000000,1.488359)(23.000000,1.486744)(24.000000,1.513293)(25.000000,1.562572)
(26.000000,1.564534)(27.000000,1.536695)(28.000000,1.497178)(29.000000,1.495916)(30.000000,1.516063)
(31.000000,1.552129)(32.000000,1.553529)(33.000000,1.533452)(34.000000,1.503984)(35.000000,1.503013)
(36.000000,1.518217)(37.000000,1.544709)(38.000000,1.545718)(39.000000,1.531129)(40.000000,1.509185)
(41.000000,1.508445)(42.000000,1.519868)(43.000000,1.539377)(44.000000,1.540108)(45.000000,1.529451)
(46.000000,1.513128)(47.000000,1.512569)(48.000000,1.521122)(49.000000,1.535513)(50.000000,1.536046)

\newrgbcolor{color404.0107}{0.75        0.75           0}
\psline[plotstyle=line,linejoin=1,showpoints=true,dotstyle=B+,dotsize=\MarkerSize,linestyle=solid,linewidth=\LineWidth,linecolor=color404.0107]
(1.000000,1.647918)(2.000000,1.369878)(3.000000,1.281839)(4.000000,1.237400)(5.000000,1.238748)
(6.000000,1.264434)(7.000000,1.315458)(8.000000,1.318136)(9.000000,1.288413)(10.000000,1.250195)
(11.000000,1.249266)(12.000000,1.268234)(13.000000,1.304878)(14.000000,1.306390)(15.000000,1.285228)
(16.000000,1.256475)(17.000000,1.255627)(18.000000,1.270053)(19.000000,1.296715)(20.000000,1.297773)
(21.000000,1.282676)(22.000000,1.261215)(23.000000,1.260541)(24.000000,1.271474)(25.000000,1.290993)
(26.000000,1.291750)(27.000000,1.280867)(28.000000,1.264874)(29.000000,1.264352)(30.000000,1.272598)
(31.000000,1.286939)(32.000000,1.287485)(33.000000,1.279579)(34.000000,1.267675)(35.000000,1.267276)
(36.000000,1.273469)(37.000000,1.284032)(38.000000,1.284429)(39.000000,1.278654)(40.000000,1.269802)
(41.000000,1.269500)(42.000000,1.274136)(43.000000,1.281929)(44.000000,1.282219)(45.000000,1.277984)
(46.000000,1.271407)(47.000000,1.271180)(48.000000,1.274642)(49.000000,1.280398)(50.000000,1.280610)

{ \small 
\rput[tr](48.800000,58.174233){%
\psshadowbox[framesep=0pt,linewidth=\AxesLineWidth]{\psframebox*{\begin{tabular}{l}
\Rnode{a1}{\hspace*{0.0ex}} \hspace*{0.4cm} \Rnode{a2}{~~user 1} \\
\Rnode{a3}{\hspace*{0.0ex}} \hspace*{0.4cm} \Rnode{a4}{~~user 2} \\
\Rnode{a5}{\hspace*{0.0ex}} \hspace*{0.4cm} \Rnode{a6}{~~user 3} \\
\Rnode{a7}{\hspace*{0.0ex}} \hspace*{0.4cm} \Rnode{a8}{~~user 4} \\
\Rnode{a9}{\hspace*{0.0ex}} \hspace*{0.4cm} \Rnode{a10}{~~user 5} \\
\Rnode{a11}{\hspace*{0.0ex}} \hspace*{0.4cm} \Rnode{a12}{~~user 6} \\
\end{tabular}}
\ncline[linestyle=solid,linewidth=\LineWidth,linecolor=color399.011]{a1}{a2} \ncput{\psdot[dotstyle=*,dotsize=\MarkerSize,linecolor=color399.011]}
\ncline[linestyle=solid,linewidth=\LineWidth,linecolor=color400.0105]{a3}{a4} \ncput{\psdot[dotstyle=Bsquare,dotsize=\MarkerSize,linecolor=color400.0105]}
\ncline[linestyle=solid,linewidth=\LineWidth,linecolor=color401.0105]{a5}{a6} \ncput{\psdot[dotstyle=Bo,dotsize=\MarkerSize,linecolor=color401.0105]}
\ncline[linestyle=solid,linewidth=\LineWidth,linecolor=color402.0107]{a7}{a8} \ncput{\psdot[dotstyle=Bpentagon,dotsize=\MarkerSize,linecolor=color402.0107]}
\ncline[linestyle=solid,linewidth=\LineWidth,linecolor=color403.0107]{a9}{a10} \ncput{\psdot[dotstyle=Basterisk,dotsize=\MarkerSize,linecolor=color403.0107]}
\ncline[linestyle=solid,linewidth=\LineWidth,linecolor=color404.0107]{a11}{a12} \ncput{\psdot[dotstyle=B+,dotsize=\MarkerSize,linecolor=color404.0107]}
}%
}%
} 

\end{pspicture}%

\caption{The convergence of bids $w_i(n)$ of Algorithm in (\ref{alg:OuP_UE}) and (\ref{alg:eNodeB_first}) with number of iterations $n$ for different users and $P_T= 45$. It can be observed that there is no fluctuation in bids $w_i(n)$ when using Algorithm \eqref{alg:OuP_UE} and \eqref{alg:eNodeB_first}. This is due to the introduction of fluctuation decay function in our algorithm which damps the fluctuations and the bidding process converges for all users.}
\label{fig:sim:bids_iteP_robust}
\end{figure}
\begin{figure}[tb]
\centering
%
\psset{xunit=0.025000\plotwidth,yunit=1.0\plotwidth}%
\begin{pspicture}(-2.027650,0.622222)(40.737327,1.316374)%
\psline[linewidth=\AxesLineWidth,linecolor=GridColor](0.000000,0.700000)(0.000000,0.710650)
\psline[linewidth=\AxesLineWidth,linecolor=GridColor](5.000000,0.700000)(5.000000,0.710650)
\psline[linewidth=\AxesLineWidth,linecolor=GridColor](10.000000,0.700000)(10.000000,0.710650)
\psline[linewidth=\AxesLineWidth,linecolor=GridColor](15.000000,0.700000)(15.000000,0.710650)
\psline[linewidth=\AxesLineWidth,linecolor=GridColor](20.000000,0.700000)(20.000000,0.710650)
\psline[linewidth=\AxesLineWidth,linecolor=GridColor](25.000000,0.700000)(25.000000,0.710650)
\psline[linewidth=\AxesLineWidth,linecolor=GridColor](30.000000,0.700000)(30.000000,0.710650)
\psline[linewidth=\AxesLineWidth,linecolor=GridColor](35.000000,0.700000)(35.000000,0.710650)
\psline[linewidth=\AxesLineWidth,linecolor=GridColor](40.000000,0.700000)(40.000000,0.710650)
\psline[linewidth=\AxesLineWidth,linecolor=GridColor](1.000000,0.700000)(1.480000,0.700000)
\psline[linewidth=\AxesLineWidth,linecolor=GridColor](1.000000,0.800000)(1.480000,0.800000)
\psline[linewidth=\AxesLineWidth,linecolor=GridColor](1.000000,0.900000)(1.480000,0.900000)
\psline[linewidth=\AxesLineWidth,linecolor=GridColor](1.000000,1.000000)(1.480000,1.000000)
\psline[linewidth=\AxesLineWidth,linecolor=GridColor](1.000000,1.100000)(1.480000,1.100000)
\psline[linewidth=\AxesLineWidth,linecolor=GridColor](1.000000,1.200000)(1.480000,1.200000)
\psline[linewidth=\AxesLineWidth,linecolor=GridColor](1.000000,1.300000)(1.480000,1.300000)
{ \footnotesize 
\rput[t](5.000000,0.689350){$5$}
\rput[t](10.000000,0.689350){$10$}
\rput[t](15.000000,0.689350){$15$}
\rput[t](20.000000,0.689350){$20$}
\rput[t](25.000000,0.689350){$25$}
\rput[t](30.000000,0.689350){$30$}
\rput[t](35.000000,0.689350){$35$}
\rput[t](40.000000,0.689350){$40$}
\rput[r](0.480000,0.700000){$0.7$}
\rput[r](0.480000,0.800000){$0.8$}
\rput[r](0.480000,0.900000){$0.9$}
\rput[r](0.480000,1.000000){$1$}
\rput[r](0.480000,1.100000){$1.1$}
\rput[r](0.480000,1.200000){$1.2$}
\rput[r](0.480000,1.300000){$1.3$}
} 
\psframe[linewidth=\AxesLineWidth,dimen=middle](1.000000,0.700000)(40.000000,1.35)
{ \small 
\rput[b](20.000000,0.6){
\begin{tabular}{c}
$n$\\
\end{tabular}
}
} 
\newrgbcolor{color18.0052}{0  0  1}
\psline[plotstyle=line,linejoin=1,showpoints=false,dotstyle=*,dotsize=\MarkerSize,linestyle=solid,linewidth=\LineWidth,linecolor=color18.0052]
(100.000000,1.000000)(100.000000,1.000000)
\psline[plotstyle=line,linejoin=1,showpoints=true,dotstyle=*,dotsize=\MarkerSize,linestyle=solid,linewidth=\LineWidth,linecolor=color18.0052]
(1.000000,1.333333)(2.000000,0.982353)(3.000000,1.275383)(4.000000,0.946488)(5.000000,1.254769)
(6.000000,0.933878)(7.000000,1.243380)(8.000000,0.926954)(9.000000,1.236732)(10.000000,0.922929)
(11.000000,1.232763)(12.000000,0.920532)(13.000000,1.230368)(14.000000,0.919088)(15.000000,1.228914)
(16.000000,0.918212)(17.000000,1.228029)(18.000000,0.917679)(19.000000,1.227489)(20.000000,0.917354)
(21.000000,1.227159)(22.000000,0.917155)(23.000000,1.226957)(24.000000,0.917034)(25.000000,1.226834)
(26.000000,0.916960)(27.000000,1.226758)(28.000000,0.916914)(29.000000,1.226712)(30.000000,0.916886)
(31.000000,1.226684)(32.000000,0.916870)(33.000000,1.226667)(34.000000,0.916859)(35.000000,1.226656)
(36.000000,0.916853)(37.000000,1.226650)(38.000000,0.916849)(39.000000,1.226646)(40.000000,0.916847)
\newrgbcolor{color19.0048}{0         0.5           0}
\psline[plotstyle=line,linejoin=1,showpoints=false,dotstyle=Bsquare,dotsize=\MarkerSize,linestyle=solid,linewidth=\LineWidth,linecolor=color19.0048]
(100.000000,1.000000)(100.000000,1.000000)
\psline[plotstyle=line,linejoin=1,showpoints=true,dotstyle=Bsquare,dotsize=\MarkerSize,linestyle=solid,linewidth=\LineWidth,linecolor=color19.0048]
(1.000000,1.333333)(2.000000,1.107999)(3.000000,0.813812)(4.000000,0.734635)(5.000000,0.755915)
(6.000000,0.837634)(7.000000,0.953436)(8.000000,1.085794)(9.000000,1.115334)(10.000000,1.097666)
(11.000000,1.045167)(12.000000,0.973572)(13.000000,0.959662)(14.000000,0.980713)(15.000000,1.022077)
(16.000000,1.024671)(17.000000,1.003953)(18.000000,0.969948)(19.000000,0.965782)(20.000000,0.979637)
(21.000000,1.003863)(22.000000,1.006305)(23.000000,0.995611)(24.000000,0.999670)(25.000000,1.012437)
(26.000000,1.011767)(27.000000,1.003072)(28.000000,0.989849)(29.000000,0.987849)(30.000000,0.992638)
(31.000000,1.001342)(32.000000,1.002101)(33.000000,0.998075)(34.000000,0.999507)(35.000000,1.004163)
(36.000000,1.003890)(37.000000,1.000674)(38.000000,0.995798)(39.000000,0.995054)(40.000000,0.996811)
{ \small 
\rput[tr](39.040000,.87){%
\psshadowbox[framesep=0pt,linewidth=\AxesLineWidth]{\psframebox*{\begin{tabular}{l}
\Rnode{a1}{\hspace*{0.0ex}} \hspace*{0.4cm} \Rnode{a2}{~~$p(n)$ of Algorithm in (\ref{alg:UE_first}) and (\ref{alg:eNodeB_first})} \\
\Rnode{a3}{\hspace*{0.0ex}} \hspace*{0.4cm} \Rnode{a4}{~~$p(n)$ of Algorithm in (\ref{alg:OuP_UE}) and (\ref{alg:eNodeB_first})} \\
\end{tabular}}
\ncline[linestyle=solid,linewidth=\LineWidth,linecolor=color18.0052]{a1}{a2}
\ncput{\psdot[dotstyle=*,dotsize=\MarkerSize,linecolor=color18.0052]}
\ncline[linestyle=solid,linewidth=\LineWidth,linecolor=color19.0048]{a3}{a4}
\ncput{\psdot[dotstyle=Bsquare,dotsize=\MarkerSize,linecolor=color19.0048]}
}%
}%
} 
\end{pspicture}%
\caption{The convergence of shadow price $p(n)$ with the number of iterations $n$. There is fluctuation in the shadow price $p(n)$ when using Algorithm \eqref{alg:UE_first} and \eqref{alg:eNodeB_first}. However, there is no fluctuation in the shadow price $p(n)$ when using Algorithm \eqref{alg:OuP_UE} and \eqref{alg:eNodeB_first}. This is due to the introduction of fluctuation decay function in our algorithm which damps the fluctuations and the optimal price converges.}
\label{fig:sim:p_compare}
\end{figure}
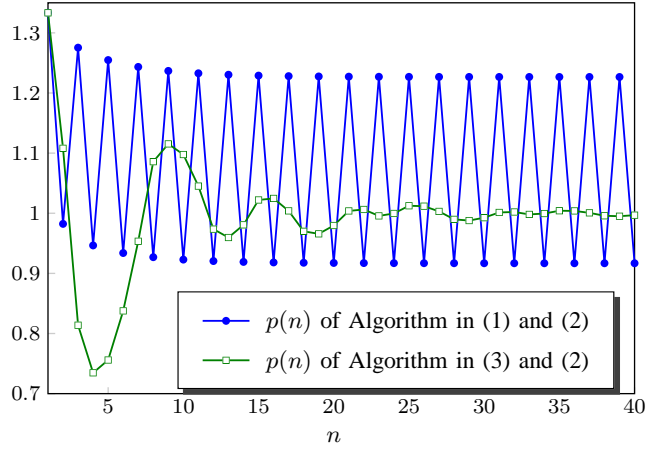
\begin{figure}[tb]
\centering
%
\psset{xunit=0.005128\plotwidth,yunit=0.017527\plotwidth}%
\begin{pspicture}(-14.769585,-5.000000)(205.391705,46.052632)%


\psline[linewidth=\AxesLineWidth,linecolor=GridColor](20.000000,0.000000)(20.000000,0.684663)
\psline[linewidth=\AxesLineWidth,linecolor=GridColor](40.000000,0.000000)(40.000000,0.684663)
\psline[linewidth=\AxesLineWidth,linecolor=GridColor](60.000000,0.000000)(60.000000,0.684663)
\psline[linewidth=\AxesLineWidth,linecolor=GridColor](80.000000,0.000000)(80.000000,0.684663)
\psline[linewidth=\AxesLineWidth,linecolor=GridColor](100.000000,0.000000)(100.000000,0.684663)
\psline[linewidth=\AxesLineWidth,linecolor=GridColor](120.000000,0.000000)(120.000000,0.684663)
\psline[linewidth=\AxesLineWidth,linecolor=GridColor](140.000000,0.000000)(140.000000,0.684663)
\psline[linewidth=\AxesLineWidth,linecolor=GridColor](160.000000,0.000000)(160.000000,0.684663)
\psline[linewidth=\AxesLineWidth,linecolor=GridColor](180.000000,0.000000)(180.000000,0.684663)
\psline[linewidth=\AxesLineWidth,linecolor=GridColor](200.000000,0.000000)(200.000000,0.684663)
\psline[linewidth=\AxesLineWidth,linecolor=GridColor](5.000000,0.000000)(7.340000,0.000000)
\psline[linewidth=\AxesLineWidth,linecolor=GridColor](5.000000,5.000000)(7.340000,5.000000)
\psline[linewidth=\AxesLineWidth,linecolor=GridColor](5.000000,10.000000)(7.340000,10.000000)
\psline[linewidth=\AxesLineWidth,linecolor=GridColor](5.000000,15.000000)(7.340000,15.000000)
\psline[linewidth=\AxesLineWidth,linecolor=GridColor](5.000000,20.000000)(7.340000,20.000000)
\psline[linewidth=\AxesLineWidth,linecolor=GridColor](5.000000,25.000000)(7.340000,25.000000)
\psline[linewidth=\AxesLineWidth,linecolor=GridColor](5.000000,30.000000)(7.340000,30.000000)
\psline[linewidth=\AxesLineWidth,linecolor=GridColor](5.000000,35.000000)(7.340000,35.000000)
\psline[linewidth=\AxesLineWidth,linecolor=GridColor](5.000000,40.000000)(7.340000,40.000000)
\psline[linewidth=\AxesLineWidth,linecolor=GridColor](5.000000,45.000000)(7.340000,45.000000)

{ \footnotesize 
\rput[t](20.000000,-0.684663){$20$}
\rput[t](40.000000,-0.684663){$40$}
\rput[t](60.000000,-0.684663){$60$}
\rput[t](80.000000,-0.684663){$80$}
\rput[t](100.000000,-0.684663){$100$}
\rput[t](120.000000,-0.684663){$120$}
\rput[t](140.000000,-0.684663){$140$}
\rput[t](160.000000,-0.684663){$160$}
\rput[t](180.000000,-0.684663){$180$}
\rput[t](200.000000,-0.684663){$200$}
\rput[r](2.660000,0.000000){$0$}
\rput[r](2.660000,5.000000){$5$}
\rput[r](2.660000,10.000000){$10$}
\rput[r](2.660000,15.000000){$15$}
\rput[r](2.660000,20.000000){$20$}
\rput[r](2.660000,25.000000){$25$}
\rput[r](2.660000,30.000000){$30$}
\rput[r](2.660000,35.000000){$35$}
\rput[r](2.660000,40.000000){$40$}
\rput[r](2.660000,45.000000){$45$}
} 

\psframe[linewidth=\AxesLineWidth,dimen=middle](5.000000,0.000000)(200.000000,45.000000)

{ \small 
\rput[b](102.500000,-5.000000){
\begin{tabular}{c}
$P_T$\\
\end{tabular}
}

\rput[t]{90}(-14.769585,22.500000){
\begin{tabular}{c}
$P_i$\\
\end{tabular}
}
} 

\newrgbcolor{color802.0059}{0  0  1}
\psline[plotstyle=line,linejoin=1,showpoints=true,dotstyle=*,dotsize=\MarkerSize,linestyle=solid,linewidth=\LineWidth,linecolor=color802.0059]
(5.000000,2.975257)(10.000000,4.513525)(15.000000,4.527825)(20.000000,4.724629)(25.000000,4.724927)
(30.000000,4.764323)(35.000000,4.875222)(40.000000,4.873760)(45.000000,4.866270)(50.000000,4.947988)
(55.000000,5.126501)(60.000000,5.126803)(65.000000,5.129056)(70.000000,5.127553)(75.000000,5.162867)
(80.000000,5.271187)(85.000000,5.274509)(90.000000,5.274564)(95.000000,5.274771)(100.000000,5.276038)
(105.000000,5.352247)(110.000000,5.656883)(115.000000,6.055640)(120.000000,6.465827)(125.000000,6.842805)
(130.000000,7.138067)(135.000000,7.331184)(140.000000,7.463590)(145.000000,7.564262)(150.000000,7.652755)
(155.000000,7.708213)(160.000000,7.745787)(165.000000,7.783328)(170.000000,7.839837)(175.000000,7.871280)
(180.000000,7.877231)(185.000000,7.933789)(190.000000,7.977364)(195.000000,8.008106)(200.000000,8.026146)

\newrgbcolor{color803.0044}{0         0.5           0}
\psline[plotstyle=line,linejoin=1,showpoints=true,dotstyle=Bsquare,dotsize=\MarkerSize,linestyle=solid,linewidth=\LineWidth,linecolor=color803.0044]
(5.000000,0.595863)(10.000000,3.631271)(15.000000,8.590023)(20.000000,9.486630)(25.000000,9.487228)
(30.000000,9.561984)(35.000000,9.742587)(40.000000,9.740400)(45.000000,9.729133)(50.000000,9.847131)
(55.000000,10.080688)(60.000000,10.081065)(65.000000,10.083881)(70.000000,10.082003)(75.000000,10.125872)
(80.000000,10.257637)(85.000000,10.261625)(90.000000,10.261690)(95.000000,10.261939)(100.000000,10.263459)
(105.000000,10.354263)(110.000000,10.709607)(115.000000,11.167694)(120.000000,11.636963)(125.000000,12.067885)
(130.000000,12.405345)(135.000000,12.626055)(140.000000,12.777377)(145.000000,12.892432)(150.000000,12.993567)
(155.000000,13.056948)(160.000000,13.099890)(165.000000,13.142794)(170.000000,13.207375)(175.000000,13.243311)
(180.000000,13.250112)(185.000000,13.314750)(190.000000,13.364550)(195.000000,13.399683)(200.000000,13.420301)

\newrgbcolor{color804.0044}{1  0  0}
\psline[plotstyle=line,linejoin=1,showpoints=true,dotstyle=Bo,dotsize=\MarkerSize,linestyle=solid,linewidth=\LineWidth,linecolor=color804.0044]
(5.000000,0.462965)(10.000000,0.648640)(15.000000,0.663692)(20.000000,4.191672)(25.000000,9.190915)
(30.000000,13.944717)(35.000000,14.472241)(40.000000,14.467903)(45.000000,14.445170)(50.000000,14.658638)
(55.000000,14.997990)(60.000000,14.998494)(65.000000,15.002247)(70.000000,14.999744)(75.000000,15.057639)
(80.000000,15.225846)(85.000000,15.230833)(90.000000,15.230915)(95.000000,15.231226)(100.000000,15.233125)
(105.000000,15.345445)(110.000000,15.771824)(115.000000,16.309993)(120.000000,16.858226)(125.000000,17.361109)
(130.000000,17.754840)(135.000000,18.012342)(140.000000,18.188886)(145.000000,18.323119)(150.000000,18.441109)
(155.000000,18.515055)(160.000000,18.565154)(165.000000,18.615208)(170.000000,18.690554)(175.000000,18.732478)
(180.000000,18.740413)(185.000000,18.815824)(190.000000,18.873924)(195.000000,18.914913)(200.000000,18.938967)

\newrgbcolor{color805.0044}{0        0.75        0.75}
\psline[plotstyle=line,linejoin=1,showpoints=true,dotstyle=Bpentagon,dotsize=\MarkerSize,linestyle=solid,linewidth=\LineWidth,linecolor=color805.0044]
(5.000000,0.392909)(10.000000,0.501107)(15.000000,0.508534)(20.000000,0.715272)(25.000000,0.715866)
(30.000000,0.811445)(35.000000,4.976937)(40.000000,9.888626)(45.000000,14.617756)(50.000000,19.191845)
(55.000000,19.834796)(60.000000,19.835553)(65.000000,19.841182)(70.000000,19.837430)(75.000000,19.922553)
(80.000000,20.155243)(85.000000,20.161898)(90.000000,20.162007)(95.000000,20.162422)(100.000000,20.164953)
(105.000000,20.312168)(110.000000,20.845283)(115.000000,21.497488)(120.000000,22.156639)(125.000000,22.760335)
(130.000000,23.232859)(135.000000,23.541872)(140.000000,23.753729)(145.000000,23.914810)(150.000000,24.056400)
(155.000000,24.145135)(160.000000,24.205254)(165.000000,24.265320)(170.000000,24.355734)(175.000000,24.406044)
(180.000000,24.415566)(185.000000,24.506059)(190.000000,24.575779)(195.000000,24.624967)(200.000000,24.653831)

\newrgbcolor{color806.0045}{0.75           0        0.75}
\psline[plotstyle=line,linejoin=1,showpoints=true,dotstyle=Basterisk,dotsize=\MarkerSize,linestyle=solid,linewidth=\LineWidth,linecolor=color806.0045]
(5.000000,0.313682)(10.000000,0.373078)(15.000000,0.376767)(20.000000,0.461620)(25.000000,0.461818)
(30.000000,0.490827)(35.000000,0.615293)(40.000000,0.613070)(45.000000,0.601992)(50.000000,0.758712)
(55.000000,3.871342)(60.000000,8.741221)(65.000000,14.041896)(70.000000,18.819921)(75.000000,23.426411)
(80.000000,24.516925)(85.000000,24.537038)(90.000000,24.537365)(95.000000,24.538608)(100.000000,24.546181)
(105.000000,24.936710)(110.000000,26.012257)(115.000000,27.144661)(120.000000,28.251820)(125.000000,29.259561)
(130.000000,30.047411)(135.000000,30.562506)(140.000000,30.915628)(145.000000,31.184108)(150.000000,31.420099)
(155.000000,31.567994)(160.000000,31.668194)(165.000000,31.768306)(170.000000,31.918999)(175.000000,32.002849)
(180.000000,32.018719)(185.000000,32.169542)(190.000000,32.285744)(195.000000,32.367723)(200.000000,32.415831)

\newrgbcolor{color807.0044}{0.75        0.75           0}
\psline[plotstyle=line,linejoin=1,showpoints=true,dotstyle=B|,dotsize=\MarkerSize,linestyle=solid,linewidth=\LineWidth,linecolor=color807.0044]
(5.000000,0.287971)(10.000000,0.336473)(15.000000,0.339421)(20.000000,0.405107)(25.000000,0.405255)
(30.000000,0.426780)(35.000000,0.513775)(40.000000,0.512297)(45.000000,0.504892)(50.000000,0.603383)
(55.000000,1.092617)(60.000000,1.094113)(65.000000,1.105400)(70.000000,1.097845)(75.000000,1.307813)
(80.000000,4.573169)(85.000000,9.524655)(90.000000,14.517176)(95.000000,19.567203)(100.000000,24.518184)
(105.000000,28.701552)(110.000000,30.996911)(115.000000,32.791284)(120.000000,34.468450)(125.000000,35.983036)
(130.000000,37.165394)(135.000000,37.938175)(140.000000,38.467907)(145.000000,38.870650)(150.000000,39.224650)
(155.000000,39.446500)(160.000000,39.596803)(165.000000,39.746974)(170.000000,39.973017)(175.000000,40.098795)
(180.000000,40.122600)(185.000000,40.348838)(190.000000,40.523142)(195.000000,40.646111)(200.000000,40.718273)

{ \small 
\rput(38.916331,31.066971){%
\psshadowbox[framesep=0pt,linewidth=\AxesLineWidth]{\psframebox*{\begin{tabular}{l}
\Rnode{a1}{\hspace*{0.0ex}} \hspace*{0.4cm} \Rnode{a2}{~~user 1} \\
\Rnode{a3}{\hspace*{0.0ex}} \hspace*{0.4cm} \Rnode{a4}{~~user 2} \\
\Rnode{a5}{\hspace*{0.0ex}} \hspace*{0.4cm} \Rnode{a6}{~~user 3} \\
\Rnode{a7}{\hspace*{0.0ex}} \hspace*{0.4cm} \Rnode{a8}{~~user 4} \\
\Rnode{a9}{\hspace*{0.0ex}} \hspace*{0.4cm} \Rnode{a10}{~~user 5} \\
\Rnode{a11}{\hspace*{0.0ex}} \hspace*{0.4cm} \Rnode{a12}{~~user 6} \\
\end{tabular}}
\ncline[linestyle=solid,linewidth=\LineWidth,linecolor=color802.0059]{a1}{a2} \ncput{\psdot[dotstyle=*,dotsize=\MarkerSize,linecolor=color802.0059]}
\ncline[linestyle=solid,linewidth=\LineWidth,linecolor=color803.0044]{a3}{a4} \ncput{\psdot[dotstyle=Bsquare,dotsize=\MarkerSize,linecolor=color803.0044]}
\ncline[linestyle=solid,linewidth=\LineWidth,linecolor=color804.0044]{a5}{a6} \ncput{\psdot[dotstyle=Bo,dotsize=\MarkerSize,linecolor=color804.0044]}
\ncline[linestyle=solid,linewidth=\LineWidth,linecolor=color805.0044]{a7}{a8} \ncput{\psdot[dotstyle=Bpentagon,dotsize=\MarkerSize,linecolor=color805.0044]}
\ncline[linestyle=solid,linewidth=\LineWidth,linecolor=color806.0045]{a9}{a10} \ncput{\psdot[dotstyle=Basterisk,dotsize=\MarkerSize,linecolor=color806.0045]}
\ncline[linestyle=solid,linewidth=\LineWidth,linecolor=color807.0044]{a11}{a12} \ncput{\psdot[dotstyle=B|,dotsize=\MarkerSize,linecolor=color807.0044]}
}%
}%
} 

\end{pspicture}%

\caption{The allocated powers $P_i$ for different values of $P_T$, $n_0=20$ and $\delta = 10^{-3}$ for Algorithm in (\ref{alg:OuP_UE}) and (\ref{alg:eNodeB_first}). Our algorithm gives priority to users running lower modulation schemes and thus first allocates powers to them until they reach their inflection point. Users running higher modulation schemes are followed by this. However, our algorithm ensures non-zero allocation of powers to all users thus maintaining a minimum QoS of all users.}
\label{fig:sim:powers_robust}
\end{figure}
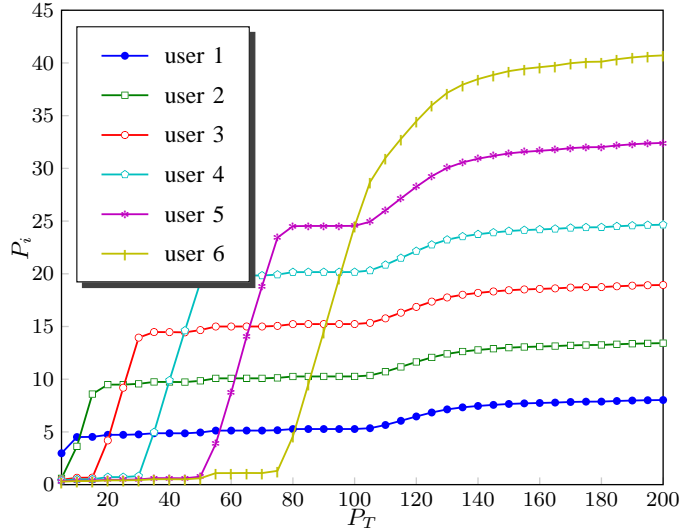
\begin{figure}[tb]
\centering
%
\psset{xunit=0.005128\plotwidth,yunit=0.005634\plotwidth}%
\begin{pspicture}(-17.914747,-15.555556)(205.391705,143.274854)%


\psline[linewidth=\AxesLineWidth,linecolor=GridColor](20.000000,0.000000)(20.000000,2.130061)
\psline[linewidth=\AxesLineWidth,linecolor=GridColor](40.000000,0.000000)(40.000000,2.130061)
\psline[linewidth=\AxesLineWidth,linecolor=GridColor](60.000000,0.000000)(60.000000,2.130061)
\psline[linewidth=\AxesLineWidth,linecolor=GridColor](80.000000,0.000000)(80.000000,2.130061)
\psline[linewidth=\AxesLineWidth,linecolor=GridColor](100.000000,0.000000)(100.000000,2.130061)
\psline[linewidth=\AxesLineWidth,linecolor=GridColor](120.000000,0.000000)(120.000000,2.130061)
\psline[linewidth=\AxesLineWidth,linecolor=GridColor](140.000000,0.000000)(140.000000,2.130061)
\psline[linewidth=\AxesLineWidth,linecolor=GridColor](160.000000,0.000000)(160.000000,2.130061)
\psline[linewidth=\AxesLineWidth,linecolor=GridColor](180.000000,0.000000)(180.000000,2.130061)
\psline[linewidth=\AxesLineWidth,linecolor=GridColor](200.000000,0.000000)(200.000000,2.130061)
\psline[linewidth=\AxesLineWidth,linecolor=GridColor](5.000000,0.000000)(7.340000,0.000000)
\psline[linewidth=\AxesLineWidth,linecolor=GridColor](5.000000,20.000000)(7.340000,20.000000)
\psline[linewidth=\AxesLineWidth,linecolor=GridColor](5.000000,40.000000)(7.340000,40.000000)
\psline[linewidth=\AxesLineWidth,linecolor=GridColor](5.000000,60.000000)(7.340000,60.000000)
\psline[linewidth=\AxesLineWidth,linecolor=GridColor](5.000000,80.000000)(7.340000,80.000000)
\psline[linewidth=\AxesLineWidth,linecolor=GridColor](5.000000,100.000000)(7.340000,100.000000)
\psline[linewidth=\AxesLineWidth,linecolor=GridColor](5.000000,120.000000)(7.340000,120.000000)
\psline[linewidth=\AxesLineWidth,linecolor=GridColor](5.000000,140.000000)(7.340000,140.000000)

{ \footnotesize 
\rput[t](20.000000,-2.130061){$20$}
\rput[t](40.000000,-2.130061){$40$}
\rput[t](60.000000,-2.130061){$60$}
\rput[t](80.000000,-2.130061){$80$}
\rput[t](100.000000,-2.130061){$100$}
\rput[t](120.000000,-2.130061){$120$}
\rput[t](140.000000,-2.130061){$140$}
\rput[t](160.000000,-2.130061){$160$}
\rput[t](180.000000,-2.130061){$180$}
\rput[t](200.000000,-2.130061){$200$}
\rput[r](2.660000,0.000000){$0$}
\rput[r](2.660000,20.000000){$20$}
\rput[r](2.660000,40.000000){$40$}
\rput[r](2.660000,60.000000){$60$}
\rput[r](2.660000,80.000000){$80$}
\rput[r](2.660000,100.000000){$100$}
\rput[r](2.660000,120.000000){$120$}
\rput[r](2.660000,140.000000){$140$}
} 

\psframe[linewidth=\AxesLineWidth,dimen=middle](5.000000,0.000000)(200.000000,140.000000)

{ \small 
\rput[b](102.500000,-15.555556){
\begin{tabular}{c}
$P_T$\\
\end{tabular}
}

\rput[t]{90}(-17.914747,70.000000){
\begin{tabular}{c}
$\sum P_i$\\
\end{tabular}
}
} 

\newrgbcolor{color819.0052}{0  0  1}
\psline[plotstyle=line,linejoin=1,linestyle=solid,linewidth=\LineWidth,linecolor=color819.0052]
(5.000000,5.028648)(10.000000,10.004093)(15.000000,15.006261)(20.000000,19.984929)(25.000000,24.986009)
(30.000000,30.000076)(35.000000,35.196055)(40.000000,40.096055)(45.000000,44.765214)(50.000000,50.007697)
(55.000000,55.003934)(60.000000,59.877250)(65.000000,65.203662)(70.000000,69.964495)(75.000000,75.003155)
(80.000000,80.000008)(85.000000,84.990559)(90.000000,89.983717)(95.000000,95.036168)(100.000000,100.001939)
(105.000000,105.002386)(110.000000,109.992765)(115.000000,114.966759)(120.000000,119.837923)(125.000000,124.274731)
(130.000000,127.743916)(135.000000,130.012135)(140.000000,131.567117)(145.000000,132.749381)(150.000000,133.788580)
(155.000000,134.439847)(160.000000,134.881082)(165.000000,135.321931)(170.000000,135.985516)(175.000000,136.354757)
(180.000000,136.424640)(185.000000,137.088802)(190.000000,137.600503)(195.000000,137.961504)(200.000000,138.173348)

\end{pspicture}%
\caption{The sum of power $\sum P_i$ for different values of $P_T$ and $\delta = 10^{-3}$ for Algorithm in (\ref{alg:OuP_UE}) and (\ref{alg:eNodeB_first}). It can be seen that when the total
power $P_T$ exceeds the sum of the inflection power $\sum P_i = \sum b_i$ of all the users, all the users are assigned power according to their desired modulation scheme's requirement. Thus satisfying users' demands and the power
allocation algorithm reaches a steady state.}
\label{fig:sim:SumOfPower}
\end{figure}
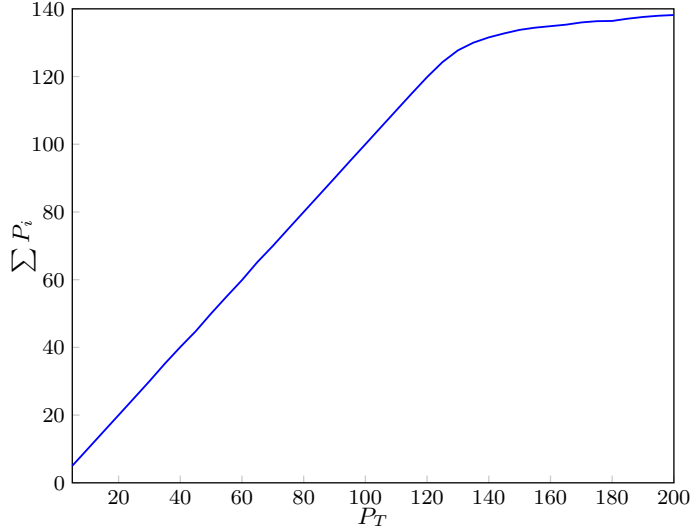

\begin{figure}[tb]
\centering
%
\psset{xunit=0.005128\plotwidth,yunit=0.017527\plotwidth}%
\begin{pspicture}(-14.769585,-5.000000)(205.391705,46.052632)%


\psline[linewidth=\AxesLineWidth,linecolor=GridColor](20.000000,0.000000)(20.000000,0.684663)
\psline[linewidth=\AxesLineWidth,linecolor=GridColor](40.000000,0.000000)(40.000000,0.684663)
\psline[linewidth=\AxesLineWidth,linecolor=GridColor](60.000000,0.000000)(60.000000,0.684663)
\psline[linewidth=\AxesLineWidth,linecolor=GridColor](80.000000,0.000000)(80.000000,0.684663)
\psline[linewidth=\AxesLineWidth,linecolor=GridColor](100.000000,0.000000)(100.000000,0.684663)
\psline[linewidth=\AxesLineWidth,linecolor=GridColor](120.000000,0.000000)(120.000000,0.684663)
\psline[linewidth=\AxesLineWidth,linecolor=GridColor](140.000000,0.000000)(140.000000,0.684663)
\psline[linewidth=\AxesLineWidth,linecolor=GridColor](160.000000,0.000000)(160.000000,0.684663)
\psline[linewidth=\AxesLineWidth,linecolor=GridColor](180.000000,0.000000)(180.000000,0.684663)
\psline[linewidth=\AxesLineWidth,linecolor=GridColor](200.000000,0.000000)(200.000000,0.684663)
\psline[linewidth=\AxesLineWidth,linecolor=GridColor](5.000000,0.000000)(7.340000,0.000000)
\psline[linewidth=\AxesLineWidth,linecolor=GridColor](5.000000,5.000000)(7.340000,5.000000)
\psline[linewidth=\AxesLineWidth,linecolor=GridColor](5.000000,10.000000)(7.340000,10.000000)
\psline[linewidth=\AxesLineWidth,linecolor=GridColor](5.000000,15.000000)(7.340000,15.000000)
\psline[linewidth=\AxesLineWidth,linecolor=GridColor](5.000000,20.000000)(7.340000,20.000000)
\psline[linewidth=\AxesLineWidth,linecolor=GridColor](5.000000,25.000000)(7.340000,25.000000)
\psline[linewidth=\AxesLineWidth,linecolor=GridColor](5.000000,30.000000)(7.340000,30.000000)
\psline[linewidth=\AxesLineWidth,linecolor=GridColor](5.000000,35.000000)(7.340000,35.000000)
\psline[linewidth=\AxesLineWidth,linecolor=GridColor](5.000000,40.000000)(7.340000,40.000000)
\psline[linewidth=\AxesLineWidth,linecolor=GridColor](5.000000,45.000000)(7.340000,45.000000)

{ \footnotesize 
\rput[t](20.000000,-0.684663){$20$}
\rput[t](40.000000,-0.684663){$40$}
\rput[t](60.000000,-0.684663){$60$}
\rput[t](80.000000,-0.684663){$80$}
\rput[t](100.000000,-0.684663){$100$}
\rput[t](120.000000,-0.684663){$120$}
\rput[t](140.000000,-0.684663){$140$}
\rput[t](160.000000,-0.684663){$160$}
\rput[t](180.000000,-0.684663){$180$}
\rput[t](200.000000,-0.684663){$200$}
\rput[r](2.660000,0.000000){$0$}
\rput[r](2.660000,5.000000){$5$}
\rput[r](2.660000,10.000000){$10$}
\rput[r](2.660000,15.000000){$15$}
\rput[r](2.660000,20.000000){$20$}
\rput[r](2.660000,25.000000){$25$}
\rput[r](2.660000,30.000000){$30$}
\rput[r](2.660000,35.000000){$35$}
\rput[r](2.660000,40.000000){$40$}
\rput[r](2.660000,45.000000){$45$}
} 

\psframe[linewidth=\AxesLineWidth,dimen=middle](5.000000,0.000000)(200.000000,45.000000)

{ \small 
\rput[b](102.500000,-5.000000){
\begin{tabular}{c}
$P_T$\\
\end{tabular}
}

\rput[t]{90}(-14.769585,22.500000){
\begin{tabular}{c}
$w_i$\\
\end{tabular}
}
} 

\newrgbcolor{color957.0038}{0  0  1}
\psline[plotstyle=line,linejoin=1,showpoints=true,dotstyle=*,dotsize=\MarkerSize,linestyle=solid,linewidth=\LineWidth,linecolor=color957.0038]
(5.000000,11.890731)(10.000000,15.797316)(15.000000,15.731605)(20.000000,14.184060)(25.000000,14.180731)
(30.000000,13.714513)(35.000000,12.134446)(40.000000,12.157585)(45.000000,12.275260)(50.000000,10.921705)
(55.000000,7.712940)(60.000000,7.707581)(65.000000,7.667650)(70.000000,7.694284)(75.000000,7.076402)
(80.000000,5.326181)(85.000000,5.276786)(90.000000,5.275979)(95.000000,5.272907)(100.000000,5.254153)
(105.000000,4.204585)(110.000000,1.524808)(115.000000,0.349998)(120.000000,0.073291)(125.000000,0.017208)
(130.000000,0.005513)(135.000000,0.002615)(140.000000,0.001568)(145.000000,0.001062)(150.000000,0.000754)
(155.000000,0.000609)(160.000000,0.000526)(165.000000,0.000455)(170.000000,0.000366)(175.000000,0.000324)
(180.000000,0.000316)(185.000000,0.000254)(190.000000,0.000215)(195.000000,0.000191)(200.000000,0.000178)

\newrgbcolor{color958.0033}{0         0.5           0}
\psline[plotstyle=line,linejoin=1,showpoints=true,dotstyle=Bsquare,dotsize=\MarkerSize,linestyle=solid,linewidth=\LineWidth,linecolor=color958.0033]
(5.000000,2.381391)(10.000000,12.709429)(15.000000,29.845419)(20.000000,28.480321)(25.000000,28.473631)
(30.000000,27.524992)(35.000000,24.249336)(40.000000,24.297409)(45.000000,24.541925)(50.000000,21.735598)
(55.000000,15.166628)(60.000000,15.155767)(65.000000,15.074835)(70.000000,15.128813)(75.000000,13.878865)
(80.000000,10.364654)(85.000000,10.266055)(90.000000,10.264443)(95.000000,10.258313)(100.000000,10.220886)
(105.000000,8.134037)(110.000000,2.886766)(115.000000,0.645459)(120.000000,0.131906)(125.000000,0.030349)
(130.000000,0.009580)(135.000000,0.004504)(140.000000,0.002684)(145.000000,0.001811)(150.000000,0.001281)
(155.000000,0.001031)(160.000000,0.000890)(165.000000,0.000768)(170.000000,0.000616)(175.000000,0.000545)
(180.000000,0.000532)(185.000000,0.000426)(190.000000,0.000360)(195.000000,0.000319)(200.000000,0.000297)

\newrgbcolor{color959.0033}{1  0  0}
\psline[plotstyle=line,linejoin=1,showpoints=true,dotstyle=Bo,dotsize=\MarkerSize,linestyle=solid,linewidth=\LineWidth,linecolor=color959.0033]
(5.000000,1.850259)(10.000000,2.270235)(15.000000,2.305949)(20.000000,12.584042)(25.000000,27.584317)
(30.000000,40.141068)(35.000000,36.021465)(40.000000,36.090157)(45.000000,36.438220)(50.000000,32.356049)
(55.000000,22.564823)(60.000000,22.548577)(65.000000,22.427515)(70.000000,22.508259)(75.000000,20.638514)
(80.000000,15.384696)(85.000000,15.237409)(90.000000,15.235001)(95.000000,15.225844)(100.000000,15.169938)
(105.000000,12.054978)(110.000000,4.251282)(115.000000,0.942668)(120.000000,0.191090)(125.000000,0.043660)
(130.000000,0.013712)(135.000000,0.006426)(140.000000,0.003821)(145.000000,0.002573)(150.000000,0.001818)
(155.000000,0.001462)(160.000000,0.001261)(165.000000,0.001088)(170.000000,0.000872)(175.000000,0.000770)
(180.000000,0.000753)(185.000000,0.000603)(190.000000,0.000508)(195.000000,0.000450)(200.000000,0.000419)

\newrgbcolor{color960.0033}{0        0.75        0.75}
\psline[plotstyle=line,linejoin=1,showpoints=true,dotstyle=Bpentagon,dotsize=\MarkerSize,linestyle=solid,linewidth=\LineWidth,linecolor=color960.0033]
(5.000000,1.570278)(10.000000,1.753871)(15.000000,1.766864)(20.000000,2.147355)(25.000000,2.148498)
(30.000000,2.335813)(35.000000,12.387615)(40.000000,24.667160)(45.000000,36.873572)(50.000000,42.362208)
(55.000000,29.841911)(60.000000,29.820561)(65.000000,29.661450)(70.000000,29.767575)(75.000000,27.306531)
(80.000000,20.365520)(85.000000,20.170603)(90.000000,20.167416)(95.000000,20.155297)(100.000000,20.081309)
(105.000000,15.956705)(110.000000,5.618829)(115.000000,1.242490)(120.000000,0.251148)(125.000000,0.057238)
(130.000000,0.017942)(135.000000,0.008398)(140.000000,0.004990)(145.000000,0.003358)(150.000000,0.002371)
(155.000000,0.001906)(160.000000,0.001645)(165.000000,0.001419)(170.000000,0.001136)(175.000000,0.001004)
(180.000000,0.000981)(185.000000,0.000785)(190.000000,0.000661)(195.000000,0.000586)(200.000000,0.000546)

\newrgbcolor{color961.0033}{0.75           0        0.75}
\psline[plotstyle=line,linejoin=1,showpoints=true,dotstyle=Basterisk,dotsize=\MarkerSize,linestyle=solid,linewidth=\LineWidth,linecolor=color961.0033]
(5.000000,1.253643)(10.000000,1.305771)(15.000000,1.309051)(20.000000,1.385854)(25.000000,1.386037)
(30.000000,1.412888)(35.000000,1.531467)(40.000000,1.529301)(45.000000,1.518538)(50.000000,1.674707)
(55.000000,5.824523)(60.000000,13.141459)(65.000000,20.991844)(70.000000,28.240726)(75.000000,32.109038)
(80.000000,24.772708)(85.000000,24.547631)(90.000000,24.543948)(95.000000,24.529938)(100.000000,24.444363)
(105.000000,19.589624)(110.000000,7.011583)(115.000000,1.568879)(120.000000,0.320238)(125.000000,0.073583)
(130.000000,0.023205)(135.000000,0.010903)(140.000000,0.006494)(145.000000,0.004379)(150.000000,0.003097)
(155.000000,0.002493)(160.000000,0.002152)(165.000000,0.001857)(170.000000,0.001489)(175.000000,0.001316)
(180.000000,0.001286)(185.000000,0.001030)(190.000000,0.000869)(195.000000,0.000770)(200.000000,0.000718)

\newrgbcolor{color962.0033}{0.75        0.75           0}
\psline[plotstyle=line,linejoin=1,showpoints=true,dotstyle=B+,dotsize=\MarkerSize,linestyle=solid,linewidth=\LineWidth,linecolor=color962.0033]
(5.000000,1.150886)(10.000000,1.177653)(15.000000,1.179292)(20.000000,1.216192)(25.000000,1.216276)
(30.000000,1.228522)(35.000000,1.278789)(40.000000,1.277924)(45.000000,1.273599)(50.000000,1.331848)
(55.000000,1.643868)(60.000000,1.644878)(65.000000,1.652510)(70.000000,1.647399)(75.000000,1.792534)
(80.000000,4.620881)(85.000000,9.528767)(90.000000,14.521071)(95.000000,19.560289)(100.000000,24.416483)
(105.000000,22.547184)(110.000000,8.355192)(115.000000,1.895237)(120.000000,0.390704)(125.000000,0.090491)
(130.000000,0.028702)(135.000000,0.013534)(140.000000,0.008081)(145.000000,0.005459)(150.000000,0.003866)
(155.000000,0.003115)(160.000000,0.002690)(165.000000,0.002324)(170.000000,0.001864)(175.000000,0.001649)
(180.000000,0.001611)(185.000000,0.001292)(190.000000,0.001090)(195.000000,0.000967)(200.000000,0.000901)

{ \small 
\rput(166.423387,31.072523){%
\psshadowbox[framesep=0pt,linewidth=\AxesLineWidth]{\psframebox*{\begin{tabular}{l}
\Rnode{a1}{\hspace*{0.0ex}} \hspace*{0.4cm} \Rnode{a2}{~~user 1} \\
\Rnode{a3}{\hspace*{0.0ex}} \hspace*{0.4cm} \Rnode{a4}{~~user 2} \\
\Rnode{a5}{\hspace*{0.0ex}} \hspace*{0.4cm} \Rnode{a6}{~~user 3} \\
\Rnode{a7}{\hspace*{0.0ex}} \hspace*{0.4cm} \Rnode{a8}{~~user 4} \\
\Rnode{a9}{\hspace*{0.0ex}} \hspace*{0.4cm} \Rnode{a10}{~~user 5} \\
\Rnode{a11}{\hspace*{0.0ex}} \hspace*{0.4cm} \Rnode{a12}{~~user 6} \\
\end{tabular}}
\ncline[linestyle=solid,linewidth=\LineWidth,linecolor=color957.0038]{a1}{a2} \ncput{\psdot[dotstyle=*,dotsize=\MarkerSize,linecolor=color957.0038]}
\ncline[linestyle=solid,linewidth=\LineWidth,linecolor=color958.0033]{a3}{a4} \ncput{\psdot[dotstyle=Bsquare,dotsize=\MarkerSize,linecolor=color958.0033]}
\ncline[linestyle=solid,linewidth=\LineWidth,linecolor=color959.0033]{a5}{a6} \ncput{\psdot[dotstyle=Bo,dotsize=\MarkerSize,linecolor=color959.0033]}
\ncline[linestyle=solid,linewidth=\LineWidth,linecolor=color960.0033]{a7}{a8} \ncput{\psdot[dotstyle=Bpentagon,dotsize=\MarkerSize,linecolor=color960.0033]}
\ncline[linestyle=solid,linewidth=\LineWidth,linecolor=color961.0033]{a9}{a10} \ncput{\psdot[dotstyle=Basterisk,dotsize=\MarkerSize,linecolor=color961.0033]}
\ncline[linestyle=solid,linewidth=\LineWidth,linecolor=color962.0033]{a11}{a12} \ncput{\psdot[dotstyle=B+,dotsize=\MarkerSize,linecolor=color962.0033]}
}%
}%
} 

\end{pspicture}%

\caption{The final bids $w_i$ for different values of $P_T$ and $\delta = 10^{-3}$ for Algorithm in (\ref{alg:OuP_UE}) and (\ref{alg:eNodeB_first}). It can be seen that as the users bid higher they are allocated higher
power. Since our algorithm gives priority to lower modulation schemes we start by allocating powers to users with lower modulation schemes. The users with lower modulation schemes bid higher
when the resources are scarce and their bids decrease as $P_T$
increases. Therefore, the pricing which is proportional to the
bids is traffic-dependent. }
\label{fig:sim:bids_robust}
\end{figure}
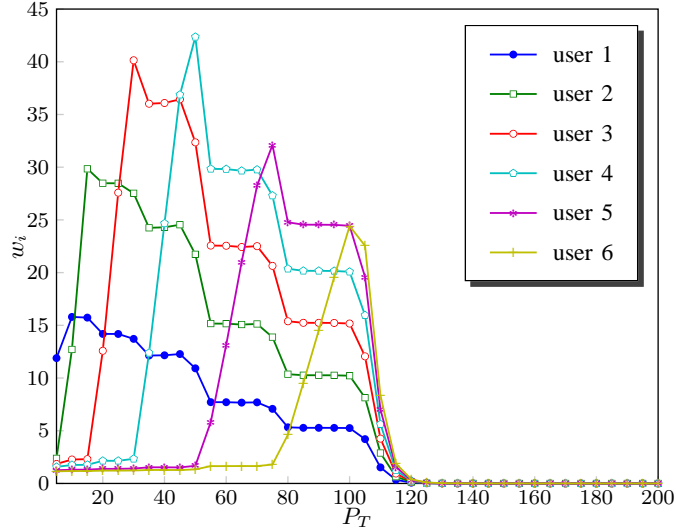
\begin{figure}[tb]
\centering
%
\psset{xunit=0.005128\plotwidth,yunit=0.197177\plotwidth}%
\begin{pspicture}(-16.566820,-0.444444)(205.391705,4.093567)%


\psline[linewidth=\AxesLineWidth,linecolor=GridColor](20.000000,0.000000)(20.000000,0.060859)
\psline[linewidth=\AxesLineWidth,linecolor=GridColor](40.000000,0.000000)(40.000000,0.060859)
\psline[linewidth=\AxesLineWidth,linecolor=GridColor](60.000000,0.000000)(60.000000,0.060859)
\psline[linewidth=\AxesLineWidth,linecolor=GridColor](80.000000,0.000000)(80.000000,0.060859)
\psline[linewidth=\AxesLineWidth,linecolor=GridColor](100.000000,0.000000)(100.000000,0.060859)
\psline[linewidth=\AxesLineWidth,linecolor=GridColor](120.000000,0.000000)(120.000000,0.060859)
\psline[linewidth=\AxesLineWidth,linecolor=GridColor](140.000000,0.000000)(140.000000,0.060859)
\psline[linewidth=\AxesLineWidth,linecolor=GridColor](160.000000,0.000000)(160.000000,0.060859)
\psline[linewidth=\AxesLineWidth,linecolor=GridColor](180.000000,0.000000)(180.000000,0.060859)
\psline[linewidth=\AxesLineWidth,linecolor=GridColor](200.000000,0.000000)(200.000000,0.060859)
\psline[linewidth=\AxesLineWidth,linecolor=GridColor](5.000000,0.000000)(7.340000,0.000000)
\psline[linewidth=\AxesLineWidth,linecolor=GridColor](5.000000,0.500000)(7.340000,0.500000)
\psline[linewidth=\AxesLineWidth,linecolor=GridColor](5.000000,1.000000)(7.340000,1.000000)
\psline[linewidth=\AxesLineWidth,linecolor=GridColor](5.000000,1.500000)(7.340000,1.500000)
\psline[linewidth=\AxesLineWidth,linecolor=GridColor](5.000000,2.000000)(7.340000,2.000000)
\psline[linewidth=\AxesLineWidth,linecolor=GridColor](5.000000,2.500000)(7.340000,2.500000)
\psline[linewidth=\AxesLineWidth,linecolor=GridColor](5.000000,3.000000)(7.340000,3.000000)
\psline[linewidth=\AxesLineWidth,linecolor=GridColor](5.000000,3.500000)(7.340000,3.500000)
\psline[linewidth=\AxesLineWidth,linecolor=GridColor](5.000000,4.000000)(7.340000,4.000000)

{ \footnotesize 
\rput[t](20.000000,-0.060859){$20$}
\rput[t](40.000000,-0.060859){$40$}
\rput[t](60.000000,-0.060859){$60$}
\rput[t](80.000000,-0.060859){$80$}
\rput[t](100.000000,-0.060859){$100$}
\rput[t](120.000000,-0.060859){$120$}
\rput[t](140.000000,-0.060859){$140$}
\rput[t](160.000000,-0.060859){$160$}
\rput[t](180.000000,-0.060859){$180$}
\rput[t](200.000000,-0.060859){$200$}
\rput[r](2.660000,0.000000){$0$}
\rput[r](2.660000,0.500000){$0.5$}
\rput[r](2.660000,1.000000){$1$}
\rput[r](2.660000,1.500000){$1.5$}
\rput[r](2.660000,2.000000){$2$}
\rput[r](2.660000,2.500000){$2.5$}
\rput[r](2.660000,3.000000){$3$}
\rput[r](2.660000,3.500000){$3.5$}
\rput[r](2.660000,4.000000){$4$}
} 

\psframe[linewidth=\AxesLineWidth,dimen=middle](5.000000,0.000000)(200.000000,4.000000)

{ \small 
\rput[b](102.500000,-0.444444){
\begin{tabular}{c}
$P_T$\\
\end{tabular}
}

\rput[t]{90}(-16.566820,2.000000){
\begin{tabular}{c}
$p$\\
\end{tabular}
}
} 

\newrgbcolor{color999.0039}{0  0  1}
\psline[plotstyle=line,linejoin=1,linestyle=solid,linewidth=\LineWidth,linecolor=color999.0039]
(5.000000,3.996539)(10.000000,3.499995)(15.000000,3.474428)(20.000000,3.002154)(25.000000,3.001259)
(30.000000,2.878586)(35.000000,2.489004)(40.000000,2.494498)(45.000000,2.522519)(50.000000,2.207303)
(55.000000,1.504523)(60.000000,1.503389)(65.000000,1.494944)(70.000000,1.500576)(75.000000,1.370634)
(80.000000,1.010433)(85.000000,1.000432)(90.000000,1.000268)(95.000000,0.999647)(100.000000,0.995852)
(105.000000,0.785574)(110.000000,0.269549)(115.000000,0.057797)(120.000000,0.011335)(125.000000,0.002515)
(130.000000,0.000772)(135.000000,0.000357)(155.000000,0.000079)(200.000000,0.000022)

\end{pspicture}%

\caption{The final shadow price $p$ for different values of $P_T$ and $\delta = 10^{-3}$ for Algorithm in (\ref{alg:OuP_UE}) and (\ref{alg:eNodeB_first}). It can be seen that for high-traffic periods or when the power is scarce the network providers charge higher price for the same amount of per unit power. This traffic-dependent pricing regimen provides incentive to users to use network during off-peak hours as they'll pay less for the same amount of per unit power.}
\label{fig:sim:shadow_price_robust}
\end{figure}
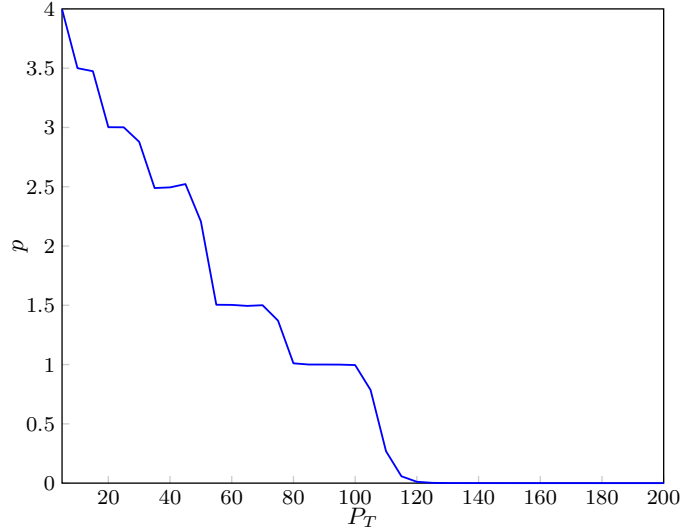

\section{Simulation Results}\label{sec:sim}

We present the simulation results of six utility functions corresponding to six UEs, as shown in Figure \ref{fig:sigmoid}. Algorithm in (\ref{alg:OuP_UE}) and (\ref{alg:eNodeB_first}) was applied to sigmoidal-like utility functions with different parameters using MATLAB. Our simulation results showed convergence to the optimal global powers for all values of the BS power $P_T$.


\subsection{Convergence Dynamics for $P_T$ = 45}
In the following simulations, we set $P_T$ = 45 and the number of iterations $n=40$. 


\textbf{Algorithm (\ref{alg:UE_first}) and (\ref{alg:eNodeB_first}): Non-convergent Powers and Bids} Here, we choose the total BS power $P_T$ to be less than the sum of users' inflection points $\sum b_i$. Therefore, Algorithm in (\ref{alg:UE_first}) and (\ref{alg:eNodeB_first}) does not converge in this region. In Figure \ref{fig:sim:powers_iter}, we show the powers $P_i(n)$ of different users with the number of iterations $n$ for Algorithm in (\ref{alg:UE_first}) and (\ref{alg:eNodeB_first}). It is shown that the powers fluctuate around the optimal powers and so the optimal powers are not achieved and the exit condition is not satisfied (i.e. endless iterations). Similar behavior for bids $w_i(n)$ with the number of iterations $n$ is shown in Figure \ref{fig:sim:bids_iter}. 

\textbf{Algorithm (\ref{alg:OuP_UE}) and (\ref{alg:eNodeB_first}): Convergent Powers and Bids} The behavior is more robust in Algorithm (\ref{alg:OuP_UE}) and (\ref{alg:eNodeB_first}) due to the fluctuation decay function. It damps the fluctuation with every iteration so the network reaches the optimal powers of the optimization 
problem (\ref{eqn:opt_prob_fairness_UE}). The powers $P_i(n)$ and bids $w_i(n)$ of Algorithm in (\ref{alg:OuP_UE}) and (\ref{alg:eNodeB_first}) are shown in Figures \ref{fig:sim:powers_iteP_robust} and \ref{fig:sim:bids_iteP_robust}, respectively. 

\textbf{Optimal Shadow Price $p(n)$:} In Figure \ref{fig:sim:p_compare}, we show the new shadow price $p(n)$ of the fluctuation example in Section \ref{sec:Fluctuation_example} when using Algorithm in (\ref{alg:OuP_UE}) and (\ref{alg:eNodeB_first}). The shadow price $p(n)$ fluctuation decreases with every iteration $n$ and converges to the optimal shadow price that corresponds to the optimal powers. On the contrary, when using Algorithm in (\ref{alg:UE_first}) and (\ref{alg:eNodeB_first}), the shadow price fluctuates and doesn't reach optimal value.


\subsection{Power Allocation and Pricing for $5\le P_T\le100$}
In the following simulations, we set $\delta =10^{-3}$ and the total power of the BS $P_T$ takes values between 5 and 100 with a step of 5. 

\textbf{Optimal Power Allocation:} In Figure \ref{fig:sim:powers_robust}, we show the final powers of different users with different BS power $P_T$. Our distributed algorithm is set to avoid the situation of allocating zero power to any user (i.e. no user is dropped). However, the BS allocates the majority of the powers to the UEs running low modulation schemes until they reach the inflection power $P_i = b_i$. When the total power $P_T$ exceeds the sum of the inflection powers $\sum b_i$ of all the users with lower modulation schemes, BS allocates more powers to the UEs with higher modulation schemes, as shown in Figure \ref{fig:sim:powers_robust}, when BS power exceeds $P_T= 65$. This behavior is similar to that in \ref{alg:UE_first} but with including BS power $P_T<60$ where the power is scarce with respect to the users' modulation schemes. In Figure \ref{fig:sim:SumOfPower}, we show the sum of powers $\sum P_i$ for different BS power $P_T$ values. When the total power $P_T$ exceeds the sum of the 
inflection power $\sum P_i = \sum b_i$ of all the users, the users demands are met and the power allocation reaches a steady state. This is similar in behavior to individual users, Figure \ref{fig:sim:powers_robust}, where after the inflection point the power allocation is steady. However, Figure \ref{fig:sim:SumOfPower} gives an overall relation between total power available and demands of users in the network.

\textbf{Traffic-dependent Bidding/Pricing:} In Figure \ref{fig:sim:bids_robust}, we 
show the final bids of different users with different BS total power $P_T$. The higher the user bids the higher the allocated power. The users with lower modulation schemes bid higher when the resources are scarce and their bids decrease as $P_T$ increases. Therefore, the pricing which is proportional to the bids is traffic-dependent. This gives service providers an option to increase the service price for subscribers when the traffic load on the system is high. Therefore, service providers can motivate subscribers to use the network when the traffic load on the network is low, as they pay less for the same service. The shadow price $p(n)$ represents the total price per unit power for all users. In Figure \ref{fig:sim:shadow_price_robust}, we show the shadow price $p(n)$ with BS power $P_T$. The price is high for high-traffic (i.e. fixed number of 
users but less resources, $P_T$ is small) which decreases for low-traffic (i.e. same number of users but more resources, $P_T$ is large). A large decrease in the price is apparent after $P=\{10, 30, 60\}$ which are the points where one of the users utility exceed the inflection point. This large decrease occurs at the sum of inflection points $\sum_{i=1}^{k} P_i^{\text{inf}}$, where $k = \{1,2,...,M\}$ is the users index and $M$ is the number of users.

\section{Conclusion}\label{sec:conclude}

In this paper, we addressed the problem of optimal allocation of BS powers to users running different modulation schemes by taking into account the price paid per unit power by each user and the total power available at the BS. We used sigmoidal-like utility functions to represent different users' modulation schemes due to the resemblance in shape of sigmoidal-like utility functions and the probability of packet transmission success of modulation schemes. We showed that our network utility maximization problem is convex and thus has optimal solution. Our distributed power allocation algorithm ensured fairness in the allocation of powers as we gave priority to users running lower modulation schemes while ensuring non-zero power allocation to users running higher modulation schemes. Moreover, the power allocation algorithm was convergent, for all network traffic conditions, and there was no oscillation in the power allocation process due to the introduction of fluctuation damping parameter in our algorithm. We 
also illustrated that our algorithm provided a pricing approach for network providers that could be used to reduce the demand in power i.e. flatten traffic loads during peak traffic hours. We showed that the price per unit power, the power allocation algorithm charged from users depended upon the modulation scheme used and the total power available at the BS. This provided an opportunity for network providers to flatten load on their network by motivating users to use the network at off-peak hours as they paid more for the same amount of per unit power in peak hours than in off-peak hours.

\bibliographystyle{ieeetr}
\bibliography{pubs}
\end{document}